\newcommand{\suchthat}{\mid}
\newcommand{\NNF}{{\sf NNF}}
\newcommand{\SC}{{\sf sc}}
\newcommand{\restfun}[2]{ #1{\restriction}_{#2}}
\newcommand{\mfun}{{\sf m}}
\newcommand{\eqm}{=_{\mfun}}
\newcommand{\eql}{\approx}
\newcommand{\neql}{\not \approx}
\newcommand{\dom}{{\sf dom}}
\newcommand{\range}{{\sf range}}
\newcommand{\statement}{S}
\newcommand{\ALC}{\mathcal{ALC}}
\newcommand{\ALCM}{\mathcal{ALCM}}
\newcommand{\KbALC}[2]{ (#1,#2)}
\newcommand{\KB}[3]{( #1,#2,#3 )}
\newcommand{\Kb}{\mathcal{K}}
\newcommand{\Tb}{\mathcal{T}}
\newcommand{\Ab}{\mathcal{A}}
\newcommand{\Mb}{\mathcal{M}}
\newcommand{\interp}{\mathcal{I}}
\newcommand{\forest}{{\cal L}}
\newcommand{\Ts}{\Tb}
\newcommand{\As}{\Ab}
\newcommand{\Xs}{\mathcal{X}}
\newcommand{\Zs}{\mathcal{Z}}
\newcommand{\Ys}{\mathcal{Y}}
\newcommand{\judgement}{J}
\newcommand{\basejuicio}{base judgement}
\newcommand{\varjuicio}{variable judgement}
\newcommand{\basenode}{base  node}
\newcommand{\varnode}{variable node}
\newcommand{\andnode}{\boldsymbol{\wedge}  }
\newcommand{\bigandnode}{\boldsymbol{\bigwedge}  }
\newcommand{\ornode}{\boldsymbol{\vee}  }
\newcommand{\andorgraph}{ \mathbb{G} }
\newcommand{\andorgraphP}{ \mathbb{G}' }
\newcommand{\derconcepts}[2]{\KbALC{#1}{#2}}
\newcommand{\derboxes}[3]{ \KB{#1}{#2}{#3}   }
\newcommand{\derbot}{\boldsymbol{\bot}}
\newcommand{\circular}{{\sf circular}}
\newcommand*{\Scale}[2][4]{\scalebox{#1}{$#2$}}
\newcommand{\rel}{\prec}
\newcommand{\botR}{(\bot)}  
\newcommand{\andR}{(\sqcap)}  
\newcommand{\orR}{(\sqcup)}  
\newcommand{\transR}{({\sf trans})}  
\newcommand{\botOnePR}{(\bot_1)}  
\newcommand{\botTwoPR}{(\bot_2)}  
\newcommand{\botThreePR}{(\bot_3)}  
\newcommand{\andPR}{(\sqcap^{\prime})}  
\newcommand{\orPR}{(\sqcup^{\prime})}  
\newcommand{\forallPR}{(\forall)}
\newcommand{\transPR}{({\sf trans}^{\prime})}
\newcommand{\closePR}{({\sf close})}
\newcommand{\equal}{(=)}
\newcommand{\difference}{(\not=)}
\newcommand{\setRoles}{{\bf{ R}}}
\newcommand{\T}{ \mathbb{S} }
\newcommand{\setS}{\Delta}
\newcommand{\lab}{{\cal L}}
\newcommand{\Eps}{ \mathcal{E} }
\newcommand{\graph}[1]{#1-structure}
\newcommand{\unfold}{{\sf set}}
\newcommand{\maxlength}[1]{{\sf maxl}^{#1}}
\newcommand{\maxprec}{\maxlength{\prec}}
\newcommand{\verteq}{\rotatebox{90}{$\,=$}}
\newcommand{\equalto}[2]{\underset{\scriptstyle\overset{\mkern4mu\verteq}{#2}}{#1}}
\title{Complexity of  the Description Logic $\ALCM$}
\author{ M\'onica Mart\'inez \inst{2} \and Edelweis Rohrer\inst{2} \and Paula Severi\inst{1} }
\authorrunning{Mart\'inez, Rohrer, Severi}
\institute{Department of Computer Science, University of Leicester, England \\
\and
Instituto de Computaci\'on, Facultad de Ingenier\'ia, \\  Universidad de la Rep\'ublica, Uruguay\\
}
\begin{document}

\maketitle

\parindent=0cm

\begin{abstract}
In this paper we show that the problem of checking consistency of a knowledge base in 
the Description Logic $\ALCM$ is ExpTime-complete.
The $\mathcal{M}$ stands for  meta-modelling as defined by Motz, Rohrer and Severi.
To show our main result, we define an ExpTime Tableau algorithm   
as an extension of an algorithm for checking consistency of a knowledge base in 
$\ALC$ by Nguyen and Szalas.
\end{abstract}

\section{Introduction}
The main motivation of the present work is to
 study the complexity of 
  meta-modelling as defined in
   \cite{Motz2015,DBLP:conf/jist/MotzRS14}.	
  No study of complexity has been done so far
  for this approach and we would like to analyse if 
  it  increases the complexity of a given description logic.
  \\
  It is well-known that 
  consistency of a (general) knowledge base in  $\ALC$ is 
  ExpTime-complete. 
The hardness result was proved in~\cite{DBLP:conf/ijcai/Schild91}.
A matching upper bound for $\ALC$ was given by De Giacomo and Lenzerini by a
reduction to PDL \cite{DBLP:conf/dlog/GiacomoL96}.
\\
In this paper, we show  that the consistency  of a knowledge base in
$\ALCM$ is ExpTime-complete where the $\mathcal{M}$
  stands for the 
  meta-modelling approach mentioned above.
Hardness follows trivially 
from the fact that $\ALCM$ is an extension of $\ALC$ since
any algorithm that decides consistency of a knowledge base 
in $\ALCM$ can be used
to decide consistency of a knowledge base in $\ALC$.
In order to give a matching upper bound on the complexity of this 
problem, it is enough to show that there is a particular algorithm
with running time at most $O(2^n)$ where $n$ is the size of
the knowledge base.
The standard  tableau algorithm for $\ALC$ 
which builds completion trees, e.g. see 
 \cite{DBLP:conf/dlog/2003handbook}, 
can be extended with 
the expansion rules for meta-modelling 
of \cite{Motz2015,DBLP:conf/jist/MotzRS14}.  
This algorithm is the bases for Semantic Web reasoners 
such as Pellet \cite{DBLP:journals/ws/SirinPGKK07}.  
 However, it 
has a high (worse case) complexity, namely NExpTime,
 and cannot be used to prove
that the consistency problem for $\ALCM$ is ExpTime-complete.
\\
Other approaches to meta-modelling use
 translations to prove decidability and/or
complexity \cite{Motik07,DBLP:conf/owled/PanHS05,DBLP:conf/semweb/GlimmRV10,DBLP:journals/ijsi/JekjantukGP10,DBLP:conf/aaai/GiacomoLR11,DBLP:conf/dlog/HomolaKSV13,DBLP:conf/dlog/HomolaKSV14,Lenzerinietal2014}.
However,  translations do not seem to work for $\ALCM$ 
due to the combination of 
 a flexible syntax 
 with a strong semantics of well-founded sets.
 A consistency  algorithm for $\ALCM$  
 has to check if
   the domain of the canonical model under construction is a
     well-founded set.
     This is an unusual and  interesting aspect of our approach but at
     the same time what makes it more difficult to solve.
\\
The contributions of this paper are the following: 
\begin{enumerate}

\item  We define a  tableau algorithm for checking
consistency of a knowledge base in $\ALCM$
as an extension of an algorithm   
for $\ALC$ by Nguyen and Szalas~\cite{Nguyen2009}.

\item  We prove correctness and show that the complexity of 
our algorithm for $\mathcal{ALCM}$ is  
 ExpTime. 
 
 \item
 From the above two items, 
we obtain the main result of our paper which is the fact that
the problem of checking consistency of a knowledge base in $\ALCM$ is ExpTime-complete. 
 \end{enumerate}
\noindent
Hence, in spite of the fact that our algorithm has 
the burden of having to check for well-founded sets,
complexity does not change when moving from $\ALC$ to $\ALCM$.

\section{A Flexible Meta-modelling Approach for Re-using Ontologies}

A knowledge base in $\ALCM$ contains an Mbox besides of a Tbox and an Abox.
 An  Mbox is a set of  equalities of the form
$a \eqm A$ where $a$ is an individual and $A$ is a concept~\cite{Motz2015,DBLP:conf/jist/MotzRS14}.  
Figure \ref{fig:firstView} shows an example of two ontologies  separated by a horizontal line. 
 The two ontologies conceptualize the same entities at different levels of granularity. 
In the  ontology above the horizontal line, rivers and lakes are formalized as  individuals while in the one below the line they are concepts. If we want to integrate these ontologies into a single ontology
(or into an ontology network) it is necessary to interpret the individual $river$ and the concept $River$ as the same real object. Similarly for $lake$ and $Lake$.
The Mbox for this example contains two equations:
\[
\begin{array}{ll}
river \eqm River & \ \ \ 
lake \eqm Lake
\end{array}
\]
 These equalities are called  \emph{meta-modelling axioms} and in this case, 
 we say that the ontologies are related through {\em meta-modelling}.
In Figure \ref{fig:firstView}, meta-modelling axioms are represented by  dashed edges.
After adding the meta-modelling axioms for rivers and lakes, the concept  $HydrographicObject$  is now also a {\em meta-concept} because it is a concept that contains an individual which is also a concept. \\
\begin{figure}
\centering
\includegraphics[width=0.5\linewidth]{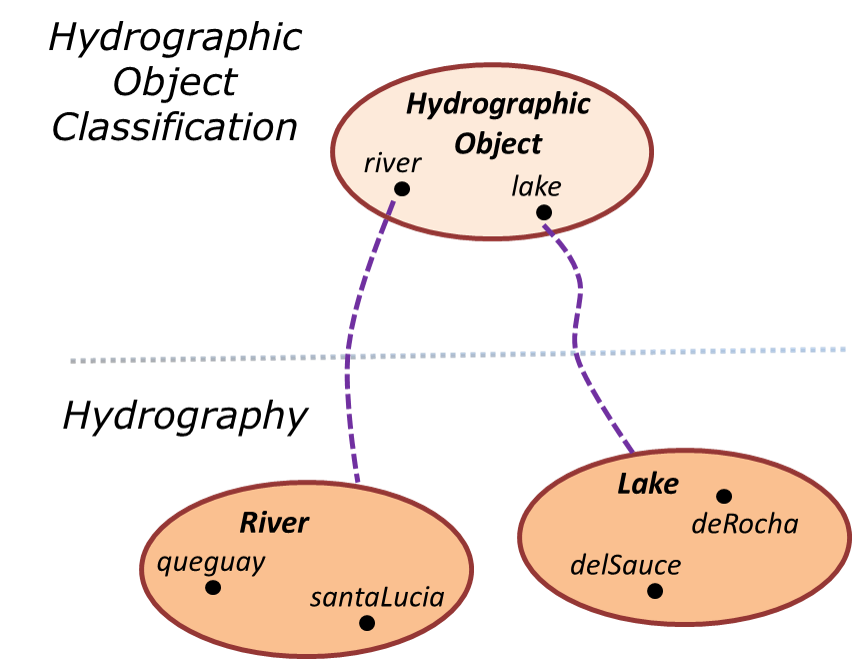}
\caption{Two ontologies  on Hydrography  }
\label{fig:firstView}
\end{figure}
This kind of meta-modelling  can be expressed in OWL Full but it cannot be expressed in OWL DL.
The fact that it is expressed in OWL Full is not very useful since the meta-modelling provided by OWL Full is so expressive that leads to undecidability \cite{Motik07}. 
OWL 2 DL has a very restricted form of meta-modelling called {\em punning} where the same identifier can be used as an individual and as a concept \cite{FOST}. These identifiers are treated
as different objects by the reasoner and it is not possible  to detect certain inconsistencies.
 We next illustrate two examples where OWL would not detect inconsistencies because the identifiers, though they look syntactically equal, are actually  different. 
\begin{example}
 \label{example:introductionRiver}
If we introduce 
 an axiom expressing that  \emph{HydrographicObject} is a subclass of \emph{River}, 
then  OWL's reasoner will not detect that 
the interpretation of $River$ is not a well founded set
(it is a set that  belongs to itself). 
\end{example}
\begin{example}
\label{example:introductionSimpletransference}
We add  two axioms, the first one says that
 $river$ and $lake $ as individuals are equal
and the second one says that the classes $River$ and $Lake$ are
disjoint.
 Then 
OWL's reasoner  does not detect that there is a contradiction.
\end{example}
In order to detect these inconsistencies, $river$ and $ River$ should be made
 semantically equal, i.e. the interpretations of the individual $river$ and 
 the concept $River$  should be the same. 
 The domain of an interpretation cannot longer consists of  only  basic objects 
 but it must be any well-founded set. 
The well-foundness of our model is not ensured by means of fixing layers beforehand as in \cite{DBLP:conf/owled/PanHS05,DBLP:journals/ijsi/JekjantukGP10}
 but it is the reasoner which  checks for circularities. 
 This approach allows the user to have any number of levels or layers 
 (meta-concepts, meta meta-concepts and so on). 
The user does not have to write or know the layer of the concept because the reasoner will infer it for him. In this way, axioms can also naturally mix elements of different layers and the user has the flexibility of changing the status of an individual at any point without having to make any substantial change to the ontology. 

\section{The Description Logic $\ALC$}
\label{Section:DLALC}
In this section we recall the Description Logic
 $\ALC$~\cite{DBLP:journals/ai/Schmidt-SchaussS91,DBLP:conf/dlog/2003handbook}.
We assume we have three pairwise disjoint sets: a set of  individuals, a set of atomic concepts and a set of atomic roles. Individuals are denoted by $a, b, \ldots$, atomic concepts by $A, B, \ldots$  and atomic roles by $R, S, \ldots$. We use $C,D$ to denote arbitrary concepts.
{\em Concepts} are defined by the following grammar:
\[ C, D ::=  A \mid \top  \mid   (\neg C)   \mid  (C \sqcap D) \mid (C \sqcup D)  \mid  (\forall R.C) \mid (\exists R.C) \] 
We omit parenthesis according to the
 following  precedence order of the description logics operators:
 (i) $\neg$,  $\forall$,  $\exists$
   (ii) $\sqcap$, (iii) $\sqcup$. 
    Outermost parenthesis  can sometimes be omitted.
 \\
We use $\bigsqcap \{C_1, \ldots, C_n \}$ to denote $C_1 \sqcap \ldots \sqcap C_n$.
Syntactic equality between concepts or individuals is denoted by $=$.  \\
%
We say that $C$ is a {\em (syntactic) subconcept } of a concept $D$ if
 $C \in \SC(D)$ where $\SC$ is defined as follows.
\[
\begin{array}{lll}
\SC(C) &  =   \{ C \} \mbox{ if  } C \in \{ A, \top,\bot \} \\ 
\SC(\neg C) & =  \SC(C) \cup \{ \neg C \} \\
\SC(C \sqcap D)  & =  \SC(C) \cup \SC(D)  \cup \{C \sqcap D \} \\ 
\SC(C \sqcup D)  & =  \SC(C) \cup \SC(D)  \cup \{C \sqcup D \} \\
\SC(\forall R.C )  & =  \SC(C) \cup \{\forall R.C\} \\ 
\SC(\exists R.C) & =  \SC(C) \cup \{\exists R.C\} \\
\end{array}
\]
%
%

 A  {\em knowledge base} $\Kb$ in  $\ALC$
is a pair $\KbALC{\Tb}{\Ab}$ 
where 
\begin{enumerate}
\item $\Tb$, called a {\em Tbox}, 
is a finite set of axioms of the form $C \sqsubseteq D$, with $C$, $D$ any two concepts.
 Statements of the form 
$C \equiv D$ are abbreviations for  $C \sqsubseteq D$ and  $D\sqsubseteq C$. 
 
\item $\Ab$, called an {\em Abox}, 
is a finite set of statements of the form $C(a)$, $R(a, b)$,  $a=b$ 
 or $a \not = b$. 

\end{enumerate}
The set of all individuals occurring in $\As$ is denoted by $\dom (\As)$. 

\noindent To avoid confusion with the syntactic equality, for the statements of the Abox we always write 
the information of it, i.e.  $a = b \in \Ab$.

\noindent Note that Aboxes contain equalities and inequalities between individuals
in spite of the fact that they are not part of the standard definition of 
$\ALC$.
There are two reasons for adding them. First of all,
 this is a very useful OWL feature. Second and most important,
 it makes it evident that equality and difference
 between individuals play an important role 
 in the presence of meta-modelling since an  equality between individuals is transferred
into an equality  between the corresponding concepts and conversely.
\\
\\
An {\em interpretation} 
$\interp = (\Delta^{\interp}, {\cdot}^{\interp})$
consists of a 
non-empty set $\Delta^{\interp}$ (sometimes we drop the super-index when the name of the interpretation is clear from
the context and write just $\Delta$), called the {\em domain} of $\interp$, and a function 
$\cdot^{\interp}$ which maps every concept to a subset of $\Delta$ and every role to a subset of $\Delta \times \Delta$ such that, for all concepts $C$, $D$ and role $R$
the following equations are satisfied:

\[\begin{array}{lcl}
 \top^{\interp} & = & \Delta^{\interp} \\
 (C \sqcap D)^{\interp} & = & C^{\interp} \cap D^{\interp} \\
 (C \sqcup D)^{\interp} & = & C^{\interp} \cup D^{\interp} \\
 (\neg C) ^{\interp} & = & \Delta \backslash C^{\interp} \\
 (\exists R.C) ^{\interp} & = & \{x \mid \exists y. (x, y) \in R^{\interp}$ and $y \in C^{\interp} \}\\
 (\forall R.C) ^{\interp} & = &\{x \mid \forall y. (x, y) \in R^{\interp}$ implies $y \in C^{\interp} \}\\
\end{array}\]
An interpretation $\interp$ {\em satisfies a concept $C$}, denoted 
 by $\interp \models C$,  if $C^{\interp} \neq \emptyset$
  and it satisfies a set $\Xs$ of concepts, denoted by $\interp \models \Xs$, 
 if 
$(\bigsqcap \Xs)^\interp \not = \emptyset$.
Note that $(\bigsqcap \Xs)^{\interp} = \bigcap_{C \in \Xs} C^{\interp}$.\\  
An  interpretation $\interp$ {\em satisfies a TBox}
 $\Tb$, denoted by $\interp \models \Tb$,
  if $C^{\interp} \subseteq D^{\interp}$ for each $C \sqsubseteq D$ in $\Tb$. 
\\
An  interpretation $\interp$ {\em satisfies a set $\Xs$ of concepts w.r.t. Tbox $\Tb$}
 or $\interp$ {\em satisfies $\KbALC{\Tb}{\Xs}$}, denoted by
 $\interp \models \KbALC{\Tb}{\Xs}$,  if $\interp$ satisfies $\Tb$ and $\Xs$. 
\\
An interpretation $\interp$
  {\em validates a concept C}, denoted as $\interp \models C \equiv \top$, 
   if $C^{\interp} = \Delta^{\interp}$.    
   \\
 An interpretation $\interp$
  {\em validates a set $\Xs$ of concepts} if 
  $\interp$ validates every concept in $\Xs$, or equivalently 
  $\interp \models \bigsqcap \Xs \equiv \top$.   
\\
An  interpretation $\interp$ {\em satisfies an ABox}
 $\Ab$, denoted by $\interp \models \Ab$,
  if $a^{\interp} \in C^{\interp}$ for each $C(a)$ in $\Ab$, 
$( a^{\interp}, b^{\interp} ) \in R^{\interp}$ for each $R(a, b)$ in $\Ab$,
 $a^{\interp} = b^{\interp}$ for each $a = b$ in $\Ab$ 
 and $a^{\interp} \not= b^{\interp}$ for each $a \not= b$ in $\Ab$. 
\\
An  interpretation $\interp$ is a {\em model} of $\KbALC{\Tb}{\Ab}$, denoted
by $\interp \models \KbALC{\Tb}{\Ab}$ if
 it satisfies the Tbox $\Tb$ and the Abox $\Ab$.\\
We say that a knowledge base $\Kb=\KbALC{\Tb}{\Ab}$ is \emph{consistent} 
(or \emph{satisfiable}) if
 there exists a model of $\Kb$.\\
 We say that a concept is in {\em negation normal form } if negation occurs
in front of atomic concepts only.  
The negation normal form of a concept
is denoted by $\NNF(C)$ and defined in 
Figure~\ref{figure:fnn}.  An Abox and a Tbox are 
 also converted into negation normal form.
 An  axiom $C \equiv D$ in $\Tb$ is converted into
   $\NNF(\neg C \sqcup  D) \sqcap \NNF(\neg D \sqcup  C)$.  
   Actually, a concept $C$ in the set $\NNF(\Tb)$ which is in negation normal form
   represents the axiom $C \equiv \top$.
   This means that $x \in C^\interp$ for all $x$ in the domain $\Delta^\interp$ 
   and all $C$ in $\NNF(\Tb)$. 
   Thus,
     an interpretation $\interp$ is a model of $\Tb$ if and only if
    $\interp$ validates every concept $C \in \NNF(\Tb)$. 
When $\Tb$ is in negation normal form, then $\Tb$ is not a set of inclusions
but just a set of concepts $C$ such that $C \equiv \top$ should hold. 
In that case, 
we say that $\interp$ is a model of $\Tb$, 
 denoted by $\interp \models  \bigsqcap  
 \Tb \equiv \top$, 
if  $\interp$ validates every concept $C \in \Tb$. 

\begin{figure}
\[
\begin{array}{lll}
 \NNF(A) = A \mbox{   if $A$ is an atomic concept} \\ 
 \NNF(\neg A) = \neg A \mbox{   if $A$ is an atomic concept} \\ 
 \NNF(\neg \neg C) = \NNF(C) \\ 
 \NNF(C \sqcup D) = \NNF(C) \sqcup  \NNF(D) \\
 \NNF(C \sqcap D) = \NNF(C) \sqcap  \NNF(D) \\
 \NNF(\neg(C \sqcup D)) = \NNF(\neg C) \sqcap  \NNF(\neg D) \\
 \NNF(\neg(C \sqcap D)) = \NNF(\neg C) \sqcup \NNF(\neg D) \\ 
 \NNF(\forall R.C) = \forall R.\NNF(C) \\
 \NNF(\exists R.C) = \exists R.\NNF(C) \\
 \NNF(\neg \forall R.C) = \exists R.\NNF(\neg C) \\
 \NNF(\neg \exists R.C) = \forall R.\NNF(\neg C) \\\\
 \NNF({\cal T}) =  
\bigcup_{C \sqsubseteq D \in {\cal T} } \NNF(\neg C \sqcup  D)  \\
\NNF({\cal A}) = \bigcup_{C(a)  \in {\cal A} } \NNF(C)(a) \cup 
\bigcup_{R(a,b)  \in {\cal A} } R(a,b) \cup \\
\ \ \ \ \ \ \ \ \ \ \ \ \ \ \ \ \ \ \ \bigcup_{a = b \in {\cal A} } a = b \cup
\bigcup_{a \not= b \in {\cal A} } a \not= b 
 \end{array}
 \] 
\caption{Negation Normal Form of a Concept, a TBox and an ABox} 
\label{figure:fnn}
 \end{figure}

\begin{definition}[Isomorphism between interpretations of
$\ALC$]\\
An isomorphism between two 
interpretations $\interp $ and $\interp'$ of $\ALC$
is a bijective function $f: \Delta \rightarrow \Delta'$
such that
\begin{itemize}
\item
$f(a^{\interp}) = a^{\interp'}$
\item
$x \in A^{\interp}$ if and only if $f(x) \in A^{\interp'}$

\item
$(x,y) \in R^{\interp}$ if and only if
$(f(x), f(y)) \in R^{\interp'}$.
 
\end{itemize}

\end{definition}

\begin{lemma}
\label{lemma:isomorphism}
Let $\interp$ and $\interp'$ be two isomorphic interpretations of $\ALC$.
Then, 
$\interp$ is a model of $(\mathcal{T}, \mathcal{A})$
if and only if
$\interp'$ is a model of $(\mathcal{T}, \mathcal{A})$.
\end{lemma}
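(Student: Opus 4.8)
The plan is to reduce the statement about models to a purely structural fact describing how the isomorphism $f$ interacts with the interpretation of \emph{arbitrary} concepts, and then read off satisfaction of the Tbox and Abox from that fact together with the relevant definitions.

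First I would establish the key auxiliary claim by structural induction on a concept $C$: for every $x \in \Delta$,
\[ x \in C^{\interp} \iff f(x) \in C^{\interp'}. \]
The base cases are $C = A$ atomic, which is exactly the second clause in the definition of isomorphism, and $C = \top$, which holds because $\top^{\interp} = \Delta$, $\top^{\interp'} = \Delta'$ and $f$ maps $\Delta$ onto $\Delta'$ (if $\bot$ is taken as primitive, its case is trivial since its extension is empty in both). Since every concept is built from $A$ and $\top$ using $\neg$, $\sqcap$, $\sqcup$, $\forall R.\cdot$ and $\exists R.\cdot$, I proceed by cases on the outermost connective.

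For the Boolean cases I would just unfold the semantic equations and apply the induction hypothesis: e.g. $x \in (\neg C)^{\interp}$ iff $x \notin C^{\interp}$ iff $f(x) \notin C^{\interp'}$ (by the hypothesis) iff $f(x) \in (\neg C)^{\interp'}$, using $f(x) \in \Delta'$; conjunction and disjunction are analogous, with $\cap$ and $\cup$. The two quantifier cases are where both the role clause of the isomorphism and the bijectivity of $f$ are genuinely needed. For $\exists R.C$: if $x \in (\exists R.C)^{\interp}$, pick a witness $y$ with $(x,y) \in R^{\interp}$ and $y \in C^{\interp}$; then $(f(x), f(y)) \in R^{\interp'}$ by the role clause and $f(y) \in C^{\interp'}$ by the hypothesis, so $f(y)$ witnesses $f(x) \in (\exists R.C)^{\interp'}$. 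Conversely, given a witness $y' \in \Delta'$ for $f(x)$, surjectivity of $f$ yields $y' = f(y)$ for some $y \in \Delta$, and the backward directions of the role clause and the induction hypothesis produce a witness $y$ for $x$. The $\forall R.C$ case is dual and again uses surjectivity so that quantification over the $R$-successors of $f(x)$ in $\Delta'$ is matched by quantification over those of $x$ in $\Delta$.

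From the claim, and since $f$ is a bijection, each concept extension transports as $C^{\interp'} = f(C^{\interp})$. The Tbox part then follows because, for injective $f$, $C^{\interp} \subseteq D^{\interp}$ iff $f(C^{\interp}) \subseteq f(D^{\interp})$, i.e. iff $C^{\interp'} \subseteq D^{\interp'}$; hence $\interp \models \mathcal{T}$ iff $\interp' \models \mathcal{T}$. For the Abox I would treat each statement form, using $f(a^{\interp}) = a^{\interp'}$ throughout: $a^{\interp} \in C^{\interp}$ iff $f(a^{\interp}) \in C^{\interp'}$ (the claim) iff $a^{\interp'} \in C^{\interp'}$; the role statements use the role clause together with $f(a^{\interp}) = a^{\interp'}$ and $f(b^{\interp}) = b^{\interp'}$; and $a^{\interp} = b^{\interp}$ (resp. $a^{\interp} \neq b^{\interp}$) transfers to $a^{\interp'} = b^{\interp'}$ (resp. $\neq$) precisely because $f$ is injective. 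Combining the Tbox and Abox equivalences gives the lemma. The only real obstacle is the quantifier step of the induction, where one must carefully invoke surjectivity of $f$; every other step is a direct unfolding of definitions, with injectivity of $f$ handling the (in)equality statements.
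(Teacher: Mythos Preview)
Your proposal is correct and follows exactly the approach the paper indicates: the paper's proof is the single sentence ``it is enough to show that $x \in C^{\interp}$ if and only if $f(x) \in C^{\interp'}$ by induction on $C$,'' and you have carried out precisely that induction, filling in the details the paper omits (including the correct use of surjectivity in the quantifier cases and injectivity for the equality/inequality assertions).
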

To prove the previous lemma is enough to show that $x \in C^{\interp}$ if and only if
$f(x) \in C^{\interp'}$ by induction on $C$.

\begin{theorem}[Complexity of $\ALC$]
\label{theorem:complexityALC}
Consistency of a (general) knowledge base in  $\ALC$ is ExpTime-complete.
\end{theorem}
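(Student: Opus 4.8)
The statement asserts ExpTime-completeness, so the plan is to prove the two matching bounds separately: ExpTime-hardness (the lower bound) and membership in ExpTime (the upper bound). For the lower bound I would not attempt a fresh construction but reduce from a problem already known to be ExpTime-hard. The cleanest route uses the correspondence between $\ALC$ and multi-modal logic observed by Schild \cite{DBLP:conf/ijcai/Schild91}: satisfiability of an $\ALC$ concept $C_0$ with respect to a general Tbox $\Tb$ is ExpTime-hard. Hardness then transfers to knowledge-base consistency, because concept satisfiability is the special case in which we test consistency of $\KbALC{\Tb}{\{ C_0(a) \}}$ for a fresh individual $a$: the Abox asserts that $a$ realises $C_0$, so the knowledge base is consistent iff $C_0$ is satisfiable w.r.t.\ $\Tb$. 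Alternatively one can give a direct encoding of the computation tree of a polynomially space-bounded alternating Turing machine (recalling that alternating polynomial space coincides with ExpTime), letting $\Tb$ force every model to contain such a tree and $\sqcap$/$\sqcup$ over role successors simulate $\forall$- and $\exists$-states; I expect this direction to be routine.

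For the upper bound I would give a type-elimination (Pratt-style) algorithm and bound its running time. First define the closure of $\Kb$ as the set of subconcepts (in the sense of $\SC$) of the concepts occurring in $\NNF(\Tb)$ and $\NNF(\Ab)$, together with their negations; its cardinality is linear in the size $n$ of $\Kb$. A \emph{type} is a subset $t$ of this closure that is propositionally consistent, closed under the boolean subconcept conditions, and contains every concept of $\NNF(\Tb)$ (so that any model built from types validates the Tbox). There are at most $2^{O(n)}$ types. Starting from the set of all types, I would repeatedly delete any type $t$ that contains a concept $\exists R.C$ for which no surviving type $t'$ witnesses it, i.e.\ no $t'$ with $C \in t'$ and $D \in t'$ for every $\forall R.D \in t$. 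This elimination is monotone, so it stabilises after at most $2^{O(n)}$ rounds, each computable in time polynomial in the number of types; the whole fixpoint is therefore computed in $2^{O(n)}$ time. Finally I would show that $\Kb$ is consistent iff each individual of $\dom(\Ab)$ can be mapped to a surviving type so that the role assertions, equalities and inequalities of $\Ab$ are respected, building a canonical model by unravelling the surviving types for the anonymous part and appealing to the satisfaction definitions of Section~\ref{Section:DLALC}.

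The main obstacle is the correctness of the upper bound, not its complexity analysis. Concretely, I expect the delicate steps to be (i) proving that a type survives the elimination exactly when it is realisable in some model of $\Tb$, which requires the usual back-and-forth between surviving types and their role successors, and (ii) gluing the surviving types to the named individuals into a single interpretation that satisfies $\Tb$ and $\Ab$ simultaneously, where the equality and inequality assertions force a quotient of the named part and a consistency check against the assigned types. Working up to isomorphism via Lemma~\ref{lemma:isomorphism} should let me build the model from abstract types without tracking the concrete domain, which keeps this bookkeeping manageable. (One may also simply invoke the reduction to PDL of De Giacomo and Lenzerini \cite{DBLP:conf/dlog/GiacomoL96} for the upper bound, trading the self-contained argument above for a citation.)
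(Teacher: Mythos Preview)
Your proposal is correct, and in fact it goes well beyond what the paper does. The paper does not prove Theorem~\ref{theorem:complexityALC} at all: it states the result as background and discharges both directions by citation, pointing to Schild~\cite{DBLP:conf/ijcai/Schild91} for hardness and to the PDL reduction of De~Giacomo and Lenzerini~\cite{DBLP:conf/dlog/GiacomoL96} (plus the later tableau algorithms~\cite{DBLP:conf/dlog/GiacomoDM96,DBLP:journals/ai/DoniniM00,DBLP:journals/jar/GoreN13,Nguyen2009}) for the upper bound. Your hardness argument aligns exactly with the paper's citation of Schild, together with the standard observation that concept satisfiability w.r.t.\ a Tbox is the special case $\KbALC{\Tb}{\{C_0(a)\}}$ of knowledge-base consistency.

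For the upper bound you take a genuinely different route: instead of invoking the PDL translation or the and-or-graph tableau that the paper builds on for $\ALCM$, you sketch a Pratt-style type-elimination procedure over the subconcept closure. This is a legitimate and self-contained alternative; it buys you an argument that does not depend on external results about PDL, at the cost of having to verify the model-construction step yourself. The one place where your sketch is thin is step~(ii): ``map each individual of $\dom(\Ab)$ to a surviving type so that role assertions, equalities and inequalities are respected'' hides a search over type assignments and the compatibility constraints induced by $R(a,b)\in\Ab$ together with $\forall R.D$ in the type of $a$. You should make explicit that this search stays single-exponential (e.g.\ by precompleting the Abox under the static rules, which yields at most $2^{O(n)}$ branches, or by noting that the number of assignments is bounded by $2^{O(n^2)}$). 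With that clarification the argument is complete; your parenthetical fallback to~\cite{DBLP:conf/dlog/GiacomoL96} is exactly what the paper itself does.
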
 
 
\noindent The hardness result was proved by Schild~\cite{DBLP:conf/ijcai/Schild91}.
A matching upper bound for $\ALC$ was given by De Giacomo and Lenzerini by a
reduction to PDL \cite{DBLP:conf/dlog/GiacomoL96}.
The classic Tableau algorithm is not optimal and cannot be used in this proof because
it is  NExpTime~\cite{DBLP:conf/dlog/2003handbook}.
ExpTime Tableau algorithms for checking satisfiability w.r.t.
a general Tbox are shown by Lenzerini, 
Donini and Masacci \cite{DBLP:conf/dlog/GiacomoDM96,DBLP:journals/ai/DoniniM00}.
These algorithms globally cache only  unsatisfiable sets.
Gor\'e and Nguyen show an ExpTime Tableau algorithm for checking satisfiability
of a concept in $\ALC$ w.r.t. a general Tbox that can globally cache
satisfiable and unsatisfiable sets \cite{DBLP:journals/jar/GoreN13}.
Nguyen and Szalas extend this same algorithm for checking consistency
of a knowledge base (including  a Tbox and an Abox) in $\ALC$
\cite{Nguyen2009}.

\section{Well-founded Sets and Well-founded Relations}
\label{Section:WellfoundedSets}

In this section we recall some basic notions on 
well-founded sets and relations~\cite{Winskel2010}.
In particular, the induction and recursion principles
are important for us since we will use them
in the proof of completeness of the Tableau Calculus for
$\ALCM$.

\begin{definition}[Well-founded Relation]
\label{definition:wellFounded} 
Let $X$ be a set and $\rel$ a binary relation on $X$.

\begin{enumerate}
\item Let $Y \subseteq X$.
We say that $m \in Y$ is a {\em minimal element} of $Y$
if there is no $y \in Y$ such that $y  \rel   m$.

\item 
We say that $\rel$ is {\em well-founded} (on $X$) if 
for all $Y \not = \emptyset$  such  that $Y \subseteq X$, we have that $Y$ has a minimal element.

\end{enumerate}
\end{definition}
\noindent
Note that in the general definition above the relation
$\rel$ does not need to be transitive.

\begin{lemma}  
\label{lemma:decreasingsequences}
The order $\rel$ is  well-founded on $X$  iff there are no 
infinite $\rel$-decreasing sequences, i.e., there is no
$\langle x_i \rangle_{i \in \mathbb{N} }$
such that $x_{i+1}  \rel  x_i$ and $x_i \in X$ for all
 $i \in \mathbb{N}$.
 
\end{lemma}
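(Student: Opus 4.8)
The plan is to prove both implications by contraposition, since in each direction the negation of the relevant statement gives us something concrete to manipulate.

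For the forward direction, I would show that the existence of an infinite $\rel$-decreasing sequence forces $\rel$ to fail to be well-founded. Given a sequence $\langle x_i \rangle_{i \in \mathbb{N}}$ with $x_{i+1} \rel x_i$ and $x_i \in X$ for all $i$, I would collect its range into the set $Y = \{ x_i \mid i \in \mathbb{N} \}$. This $Y$ is a non-empty subset of $X$, and I claim it has no minimal element: for any $x_i \in Y$, the element $x_{i+1}$ also lies in $Y$ and satisfies $x_{i+1} \rel x_i$, so $x_i$ cannot be minimal in the sense of Definition~\ref{definition:wellFounded}. Hence $Y$ witnesses that $\rel$ is not well-founded. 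This direction is purely combinatorial and appeals to no choice principle.

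For the converse, I would assume $\rel$ is not well-founded and construct an infinite decreasing sequence. By Definition~\ref{definition:wellFounded} there is a non-empty $Y \subseteq X$ with no minimal element, which means precisely that for every $y \in Y$ there exists some $y' \in Y$ with $y' \rel y$. I would then build the sequence recursively: choose any $x_0 \in Y$ (possible since $Y \neq \emptyset$), and given $x_i \in Y$, use the fact that $x_i$ is not minimal to select $x_{i+1} \in Y$ with $x_{i+1} \rel x_i$. The resulting $\langle x_i \rangle_{i \in \mathbb{N}}$ is an infinite $\rel$-decreasing sequence with all terms in $X$.

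The main subtlety lies in this second direction: the recursive construction makes infinitely many successive choices, each depending on the element chosen at the previous stage, so it implicitly invokes the axiom of dependent choice. The forward direction, by contrast, is elementary. I would also remark that since $\rel$ is not assumed transitive, no extra hypothesis is needed and both arguments go through using only the two clauses of Definition~\ref{definition:wellFounded}.
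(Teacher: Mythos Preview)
Your proof is correct and is the standard argument for this classical equivalence. The paper does not actually give its own proof of this lemma; it simply cites Winskel's notes, so there is no in-paper argument to compare against, but your two-direction contrapositive proof (range of a decreasing sequence as a set with no minimal element for one direction, recursive construction via dependent choice for the other) is exactly what one finds in such references.
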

\noindent
The proof of the above lemma can be found in \cite{Winskel2010}.

\begin{definition}[Well-founded Set]
A set $X$ is {\em well-founded} if the set membership relation 
$\in $ is
well-founded on the set $X$.
\end{definition}

\noindent
As a consequence of Lemma \ref{lemma:decreasingsequences}, we also 
have that:

\begin{enumerate}
\item  If  $X$ is  a well-founded set then $X \not \in X$. 

\item 
If  $X$ is a well-founded set then
it cannot contain an
infinite $\in$-decreasing sequence, i.e., there is no
$\langle x_n \rangle_{n \in \mathbb{N} }$
such that $x_{n+1} \in  x_n$ and $x_n \in X$ for all
 $n \in \mathbb{N}$. 
 \end{enumerate}
 
\noindent 
An important reason that well-founded relations are interesting is because
we can apply the induction and recursion principles, e.g.,~\cite{Winskel2010}.
In this paper  both principles will be used to prove 
 correctness of the Tableau calculus for $\ALCM$.

\begin{definition}[Induction Principle]
\label{definition:inductionprinciple}
If $\rel$  is a well-founded relation on $X$, 
$\varphi$ is some property of elements of $X$, and we want to show that
$\varphi(x)$  holds for all elements $x \in X$,
it suffices to show that:\\
\noindent  
   {\em  if $x \in X$ and $\varphi(y)$ is true for all $y \in X$ such that 
    $y \rel x$, then $\varphi(x)$ must also be true. 
   } 
\end{definition}

\begin{definition}[Function Restriction]
The restriction of a function $f : X \rightarrow Y$ to a subset $X'$
of $X$ is denoted as $\restfun{f}{X'}$ and defined as follows.
\[ 
\restfun{f}{ X'} = \{ (x, f(x)) \mid  x \in X' \}
\]
\end{definition}

On par with induction,
 well-founded relations also support construction of objects by  recursion. 

\begin{definition}[Recursion Principle] 
\label{definition:recursionprinciple}
 If $\rel $ is a well-founded relation on $X$ and $F$
  a function that assigns an object $F(x, g)$
   to each pair of an element $x \in X$
    and a function $g$ on the initial segment 
    $\{y \in X \mid  y \rel  x\}$ of $X$. 
    Then there is a unique function $G$ such that for every $x \in X$,
\[
 G(x)=F(x,\restfun{G}{{\{y \in X \mid  y \rel x\}}}) 
    \]
    \end{definition}

\section{The Description Logic $\ALCM$}
\label{Section:DLALCM}

In this section, we extend the description logic $\mathcal{ALC}$ with
the    meta-modelling defined  by Motz, Rohrer and Severi~\cite{Motz2015,DBLP:conf/jist/MotzRS14}.

\begin{definition}[Meta-modelling axiom] 
A {\em meta-modelling axiom} is a  statement of the form 
 $a \eqm A$  where $a$ is an individual and $A$ is an atomic concept.
 We pronounce $a \eqm A$ as {\em $a$ corresponds to $A$ through meta-modelling}. 
 An {\em Mbox} $\Mb$  is a finite set of meta-modelling axioms.   
\end{definition}
\begin{figure}
\centering
\includegraphics[scale = 0.5]{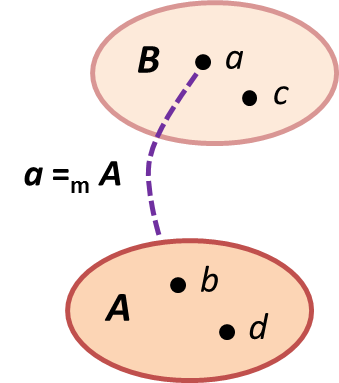}
\caption{Meta-modelling Axiom Example}
\label{fig:metaAxiom}
\end{figure}
\noindent 

\noindent 
In Figure \ref{fig:metaAxiom}, the  meta-modelling axiom 
$a \eqm A$ express that the individual $a$ corresponds to the concept $A$ through meta-modelling. 
\\
We define $\ALCM$ by keeping the same syntax for concept expressions as for $\ALC$.
\\
A {\em knowledge base $\Kb$ in ${\ALCM}$} is a triple 
$\KB{\Tb}{\Ab}{\Mb}$ where  
$\Tb$,   $\Ab$ and $\Mb$  
are  a Tbox,  Abox and  an  Mbox respectively.
The set of all individuals in $\Mb$ 
 is denoted by $\dom (\mathcal M)$  and the set of all concepts 
  by $\range (\mathcal M)$.
\\
Figure \ref{figure:boxesforfirstview} shows the Tbox, Abox and Mbox of the
knowledge base that corresponds to Figure \ref{fig:firstView}.

\begin{figure}

{\small
\begin{center}
\begin{tabular}{cc}

\begin{tabular}{l}
\textbf{Tbox} \\
$River \sqcap Lake \sqsubseteq \bot $ \\\\
\textbf {Mbox} \\

$river \eqm  River$ \\
$lake \eqm Lake$
 
\end{tabular}

&

\begin{tabular}{l}
\textbf {Abox} \\

$HydrographicObject(river)$ \\ 
$HydrographicObject(lake)$ \\
$River(queguay)$ \\ 
$River(santaLucia)$ \\
$Lake(deRocha)$ \\
$Lake(delSauce)$ 

\end{tabular}

\end{tabular}
\end{center}
}

\caption{Tbox, Abox and Mbox for Figure \ref{fig:firstView}}
\label{figure:boxesforfirstview} 

\end{figure}

\begin{definition}[Satisfiability of meta-modelling]
\label{definition:satisMeta}
An interpretation $\interp$ {\em  satisfies} (or it is a {\em model of}) $a \eqm A$
if $a^{\interp} = A^{\interp}$. 
An interpretation $\interp$ {\em satisfies} (or it is {\em  model of}) $\Mb$, denoted by $\interp \models \Mb$, if it satisfies  each statement in $\Mb$.
\end{definition}

\noindent
The semantics of $\ALCM$ makes  use of  the structured  domain elements.   
  In order to give semantics to  meta-modelling, the domain has to consists of 
  basic objects, sets of objects, sets of sets of objects and so on.

\begin{definition}[$S_n$ for $n \in \mathbb{N}$]
\label{definition:domainSet}
Given a non empty set $S_{0}$ of atomic objects, we define $S_{n}$ by induction on $\mathbb{N}$ as follows:
$S_{n+1} = S_{n} \cup \mathcal{P}(S_{n}) $
\end{definition}
It is easy to prove that $S_n \subseteq S_{n+1}$ 
and that $S_n$ is well-founded for all $n \in \mathbb{N}$.  \\ 
A set  $X \subseteq S_{n}$ can contain elements $x$ such that $x \in S_{i}$
 for any $i \leq n$.

\begin{definition}[Model of a Knowledge Base in $\ALCM$]
\label{definition:modelALCM}
An interpretation ${\interp}$ is a \emph{model of a knowledge base
$\Kb = \KB{\Tb}{\Ab}{\Mb}$ in $\ALCM$ } 
(denoted as $\interp \models \Kb$)
 if the following holds:

\begin{enumerate}

\item the domain  $\Delta$  of the interpretation is  a {\em subset}  of some $ S_{n}$ for some  $n \in \mathbb{N}$.

\item  ${\interp}$ is a model of  $\KbALC{\Tb}{\Ab}$ in $\ALC$.

\item ${\interp}$ is a model of $\Mb$.
\end{enumerate}
 
\end{definition}
\noindent
In the first part of Definition \ref{definition:modelALCM} 
we restrict the 
domain of an interpretation in $\ALCM$
 to be a {\em subset} of $S_n$.
The domain $\Delta$ can now contain  sets  since 
 the set $S_n$  is defined recursively using
the powerset operation. 
In the presence of meta-modelling, the domain $\Delta$
cannot longer consist of  only  basic objects 
and cannot be an arbitrary set either.
We require that
the domain be a  well-founded set. The reason for this is explained as follows. Suppose we have a domain  
$\Delta^{\interp} = \{X\}$ where $X = \{X\}$ is a set
that belongs to itself. Intuitively, $X$ is the set 
\[
\{\{\{ \ldots \}\}\}
\]
Clearly, a set like $X$ should be excluded from our interpretation domain since it cannot represent any
 real object from  our  usual applications in Semantic Web
(in other areas or aspects of Computer Science, representing such objects
is useful~\cite{DBLP:journals/jolli/Akman97}). 
\\
Note that $S_0$ does not have to be the same for all models of a knowledge base. 
\\
The second part of Definition \ref{definition:modelALCM} refers to the $\ALC$-knowledge base
 without the Mbox axioms. In the third part of the definition, we add another condition that the model must satisfy considering the meta-modelling axioms. This condition restricts the interpretation of an individual that has a corresponding concept through meta-modelling  to be equal to the concept interpretation. 

\begin{example}
We define a model for the knowledge base of Figure \ref{figure:boxesforfirstview} where 
\[ S_0 = \{ queguay, santaLucia, deRocha, delSauce \}\]
\[
\begin{array}{ll}
 \Delta = \{ &  queguay, santaLucia, deRocha, delSauce,  \\
 &\{queguay, santaLucia\}, \\
 &\{deRocha, delSauce\}, \\
 &\{\{queguay, santaLucia\}, \{deRocha, delSauce\} \}\\
 ~ ~ ~ ~ ~ ~ ~ ~ ~ ~  \}\\
\end{array}
\]


 The  interpretation is defined on 
the individuals with meta-modelling and the
 corresponding atomic concepts to which they are equated  as follows:
\begin{center}
\begin{tabular}{ll}
$river^{\interp} = River^{\interp} = \{queguay, santaLucia\}$  \\
$lake^{\interp} = Lake^{\interp} = \{deRocha, delSauce\}$  
\end{tabular}
\end{center} 
and on the remaining atomic concept which does not appear on
the MBox the interpretation is
defined as follows:
\[
\begin{array}{l}
HydrographicObject^{\interp} 
\\
 = \{ river^{\interp}, lake^{\interp} \}\\
 = \{\{queguay, santaLucia\}, \{deRocha, delSauce\} \}
\end{array}
\]

\end{example}

\begin{definition}[Consistency  in $\ALCM$]
\label{definition:consistentbaseALCM}
We say that a knowledge base $\Kb =\KB{\Tb}{\Ab}{\Mb}$ is \emph{consistent}  (\emph{satisfiable})
if there exists a model of $\Kb$.
\end{definition}

\begin{definition}[Logical Consequence in $\ALCM$]
We say that $\statement$ is a \emph{logical consequence of} 
$\Kb = \KB{\Tb}{\Ab}{\Mb}$ 
(denoted as $\Kb \models \statement$) if all models of $\Kb$ are also models  
of $\statement$ where $\statement$ 
is any of the following  
$\ALCM$ statements, i.e.,
\begin{center}
 $C \sqsubseteq D$  \ \ \ $C(a)$  
 \ \ \ $R(a, b)$  \ \ \  $a \eqm A$   \ \ \ $a = b$  \ \ \ $a \not = b$. 
 \end{center}
  
\end{definition}

\begin{definition}[Meta-concept]
\label{definition:metaconcept}
We say that $C$ is a meta-concept in $\Kb$ if there exists an individual $a$ such that $\Kb \models  C(a)$ and $\Kb \models a \eqm A$.
\end{definition}
\noindent
Then, $C$ is a meta-meta-concept if there exists an individual $a$ such that 
$\Kb \models  C(a)$, $\Kb \models a \eqm A$ and $A$ is a meta-concept.
Note that a meta-meta-concept is also a meta-concept. 
\\
\\
\noindent
We have some new inference problems:

\begin{enumerate}

\item {\em Meta-modelling}. Find out whether $a \eqm A$ or not.

\item {\em Meta-concept}. Find out whether $C$ is a meta-concept or not.

\end{enumerate}
\noindent
As most inference problems in Description Logic the  above  two problems can be reduced to satisfiability.
For the first problem, note that  since  $a \not =_m A$ is not directly available in the 
 syntax, we have replaced it by $a \not = b $  and $ b \eqm A$
 which is  an equivalent statement 
to the negation of  $a \eqm A$. 
For the proof of the following two lemmas, see \cite{Motz2015}.

\begin{lemma}
\label{lemma:firstreasoningproblem}
$\Kb \models a \eqm A$ if and only if 
for some new individual $b$, $\Kb \cup \{a \not = b, b \eqm A \}$ is  inconsistent.
\end{lemma}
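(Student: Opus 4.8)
The plan is to prove the two implications separately. The forward direction is an immediate semantic argument, while the backward direction, which I would do by contraposition, amounts to extending a model by the fresh individual $b$, and this model extension is where the real work lies.

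For the forward implication I would assume $\Kb \models a \eqm A$ and suppose, towards a contradiction, that $\Kb \cup \{a \not = b, b \eqm A\}$ has a model $\interp$. Since $\interp$ is in particular a model of $\Kb$, the hypothesis $\Kb \models a \eqm A$ gives $a^{\interp} = A^{\interp}$ by Definition \ref{definition:satisMeta}. On the other hand $\interp \models b \eqm A$ gives $b^{\interp} = A^{\interp}$, so $a^{\interp} = b^{\interp}$, contradicting $\interp \models a \not = b$, i.e. $a^{\interp} \not = b^{\interp}$. Hence no such model exists and $\Kb \cup \{a \not = b, b \eqm A\}$ is inconsistent; here the only role of $b$ being new is that it may be freely interpreted.

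For the backward implication I would argue by contraposition: assuming $\Kb \not \models a \eqm A$, there is a model $\interp$ of $\Kb$ with $a^{\interp} \not = A^{\interp}$, and I would turn it into a model $\interp'$ of $\Kb \cup \{a \not = b, b \eqm A\}$. The natural candidate leaves $\interp$ unchanged on every symbol of $\Kb$ and sets $b^{\interp'} = A^{\interp}$, so that $b \eqm A$ holds and, since $a^{\interp} \not = A^{\interp}$, also $a \not = b$ holds. The point requiring care is that $b^{\interp'}$ must be a domain element: if $A^{\interp} \notin \Delta$ I would adjoin it, taking $\Delta' = \Delta \cup \{A^{\interp}\}$. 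Because $\Delta \subseteq S_{n}$ we have $A^{\interp} \subseteq S_{n}$, hence $A^{\interp} \in \mathcal{P}(S_{n}) \subseteq S_{n+1}$ and $\Delta' \subseteq S_{n+1}$, so by Definition \ref{definition:modelALCM} the domain is still a subset of some $S_{m}$ and remains well-founded.

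The main obstacle is verifying that this extended interpretation $\interp'$ is genuinely a model of $\Kb$. The adjoined element $A^{\interp}$ must satisfy every concept of $\NNF(\Tb)$, which must hold of all domain elements, while $A^{\interp'}$ is kept equal to $A^{\interp}$ so that $b \eqm A$ is preserved; the Abox assertions concern only the old named individuals and are untouched. I would handle this by giving the new element the same atomic-concept memberships and role profile as some existing element of $\Delta$ lying outside $A^{\interp}$, so that membership in each concept is inherited, using an argument in the style of Lemma \ref{lemma:isomorphism} to justify that such local duplication preserves satisfaction of $\KbALC{\Tb}{\Ab}$. The delicate case is precisely when $\Tb$ constrains membership in $A$ so that no witness outside $A^{\interp}$ is available, and I expect the bulk of the proof to lie in showing that the adjoined element can always be typed consistently with $\Tb$ while remaining outside $A^{\interp}$.
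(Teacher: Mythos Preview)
The paper gives no proof of this lemma, deferring instead to \cite{Motz2015}; there is therefore nothing in the present paper to compare your argument against. Your forward direction is correct and complete.

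For the backward direction, the difficulty you identify is not merely delicate but fatal to the approach, and in fact to the lemma as literally stated. Consider $\Tb = \{\top \sqsubseteq A\}$, $\Ab = \{A(a)\}$, $\Mb = \emptyset$. This $\Kb$ is consistent (take $\Delta = \{p\}\subseteq S_0$, $a^{\interp}=p$, $A^{\interp}=\{p\}$), and $\Kb \not\models a \eqm A$: in any model the TBox forces $A^{\interp} = \Delta^{\interp}$, whereas $a^{\interp} \in \Delta^{\interp}$, and $a^{\interp} = \Delta^{\interp}$ would give $\Delta^{\interp} \in \Delta^{\interp}$, contradicting the well-foundedness required by Definition~\ref{definition:modelALCM}. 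Yet for a fresh $b$ the knowledge base $\Kb \cup \{a \neq b,\, b \eqm A\}$ is inconsistent, since any model would need $b^{\interp} = A^{\interp} = \Delta^{\interp} \in \Delta^{\interp}$. Hence there is no element outside $A^{\interp}$ whose type you could copy, and indeed no model of the extended knowledge base exists at all---your strategy cannot be repaired here. The lemma presumably relies on a tacit assumption (for instance that $A \in \range(\Mb)$ already, which would force $A^{\interp}$ to be a domain element in every model of $\Kb$); you should consult \cite{Motz2015} for the intended hypothesis before trying to complete the argument.
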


\begin{lemma}
\label{lemma:secondreasoningproblem}
A concept $C$ is a meta-concept
if and only if
for some  individual $a$ we have that
 $\Kb \cup \{\neg C(a) \}$  is  inconsistent 
and
for some new individual $b$, $\Kb \cup \{a \not = b, b \eqm A \}$ is inconsistent.
\end{lemma}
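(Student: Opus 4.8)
The plan is to unfold Definition~\ref{definition:metaconcept} and replace each of its two entailment conditions by the matching inconsistency statement in the lemma. By definition, $C$ is a meta-concept exactly when there is an individual $a$ (and an atomic concept $A$) with $\Kb \models C(a)$ and $\Kb \models a \eqm A$. It therefore suffices to establish, for a fixed $a$ and $A$, the two equivalences $\Kb \models C(a) \iff \Kb \cup \{\neg C(a)\}$ is inconsistent, and $\Kb \models a \eqm A \iff$ for some new individual $b$, $\Kb \cup \{a \not= b, b \eqm A\}$ is inconsistent. Substituting both equivalences under the existential quantifier over $a$ and $A$ then yields the statement verbatim, since the same witness $a$ feeds both conjuncts.

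For the first equivalence I would invoke the standard duality between entailment and unsatisfiability, checking that it transfers unchanged to $\ALCM$: the extra requirement on an $\ALCM$-model, namely that its domain be a subset of some $S_n$, does not depend on the assertional part and so is unaffected by adding $\neg C(a)$ to the Abox. Concretely, every model $\interp$ of $\Kb$ satisfies $(\neg C)^\interp = \Delta \setminus C^\interp$; hence $\Kb \cup \{\neg C(a)\}$ has a model iff some model of $\Kb$ has $a^\interp \notin C^\interp$, and contrapositively $\Kb \cup \{\neg C(a)\}$ is inconsistent iff every model of $\Kb$ has $a^\interp \in C^\interp$, i.e.\ iff $\Kb \models C(a)$. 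I would write this out in both directions, taking care that enlarging the Abox neither shrinks nor enlarges the class of admissible domains.

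The second equivalence is precisely Lemma~\ref{lemma:firstreasoningproblem}, which I would simply cite; its content is exactly the replacement of the syntactically unavailable statement $a \not=_{\mfun} A$ by the pair $a \not= b$ and $b \eqm A$ for a fresh $b$. Combining the two equivalences inside the existential over $a$ and $A$ closes the argument.

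I do not expect a genuine obstacle: the lemma is essentially a corollary that packages the entailment–satisfiability duality together with the already-established Lemma~\ref{lemma:firstreasoningproblem}. The only points demanding care are the quantifier bookkeeping (the single witness $a$ must serve both conjuncts, and $A$ is existentially quantified alongside $a$) and the freshness of $b$, which is supplied by Lemma~\ref{lemma:firstreasoningproblem} itself.
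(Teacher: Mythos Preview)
Your proposal is correct and follows the natural decomposition: reduce the meta-concept definition to its two entailment conditions, convert $\Kb \models C(a)$ to inconsistency of $\Kb \cup \{\neg C(a)\}$ via the standard entailment--unsatisfiability duality (which, as you observe, is unaffected by the well-foundedness constraint on $\ALCM$ domains), and invoke Lemma~\ref{lemma:firstreasoningproblem} for the meta-modelling entailment. The paper itself does not give a proof here; it defers both Lemma~\ref{lemma:firstreasoningproblem} and Lemma~\ref{lemma:secondreasoningproblem} to the cited reference~\cite{Motz2015}, so there is no in-paper argument to compare against beyond noting that your route is the expected one.
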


\noindent
Next lemma explains more formally
 why in our definition of ABox
  we included expressions of the form $a = b$ and 
 $a \not = b$. 
If we have an equality  $A \equiv B$ between concepts
then  $a$ and $b$ should be equal. Similarly, if we have that $A$ and $B$ are different 
then $a$ and $b$ should be different.
In other words, since we can express equality and difference between concepts, we also
need to be able to express equality and difference at the level of individuals.

\begin{lemma}[Equality Transference] \label{lemma:equalitytransference} \mbox{  }\\
Let $\Kb = \KB{\Tb}{\Ab}{\Mb}$ be 
a knowledge base,
$\Kb \models a \eqm A$ and $\Kb \models b \eqm B$.

\begin{enumerate}

\item If $\Kb \models a = b$ then $\Kb \models A \equiv B$.

\item If $\Kb \models A \equiv B$ then $\Kb \models a = b$.

\end{enumerate}

\end{lemma}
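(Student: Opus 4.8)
The plan is to reduce both parts to a chain of set equalities inside a single, arbitrary model of $\Kb$. Recall that $\Kb \models \statement$ unfolds to: for every interpretation $\interp$ with $\interp \models \Kb$, we have $\interp \models \statement$. So to establish $\Kb \models A \equiv B$, I would fix an arbitrary model $\interp$ of $\Kb$ and show $A^{\interp} = B^{\interp}$; the analogous reduction applies to proving $\Kb \models a = b$.

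First I would extract, from the hypotheses, the facts that hold in this fixed $\interp$. Since $\interp \models \Kb$ and $\Kb \models a \eqm A$, Definition~\ref{definition:satisMeta} gives $a^{\interp} = A^{\interp}$; likewise $\Kb \models b \eqm B$ gives $b^{\interp} = B^{\interp}$. These two identities are the engine of the whole argument: a meta-modelling axiom forces the individual and its corresponding concept to denote the very same element of the domain.

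For Part~1, I additionally use the hypothesis $\Kb \models a = b$, which yields $a^{\interp} = b^{\interp}$ in $\interp$. Then transitivity of equality gives $A^{\interp} = a^{\interp} = b^{\interp} = B^{\interp}$, so $A^{\interp} = B^{\interp}$, i.e.\ $\interp \models A \equiv B$. Part~2 is entirely symmetric: from $\Kb \models A \equiv B$ I get $A^{\interp} = B^{\interp}$, and then $a^{\interp} = A^{\interp} = B^{\interp} = b^{\interp}$, so $\interp \models a = b$. Since $\interp$ was an arbitrary model of $\Kb$ in each case, both logical-consequence claims follow.

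There is no genuine obstacle here: once the definition of logical consequence is unwound to a statement about a single model, the result is just transitivity of equality applied to the two meta-modelling identities. The only point demanding care is keeping the two uses of $\models$ distinct---logical consequence from $\Kb$ (quantifying over all models) versus satisfaction in the fixed $\interp$---and checking that $\interp \models A \equiv B$ really does mean $A^{\interp} = B^{\interp}$, which follows by reading $A \equiv B$ as the pair $A \sqsubseteq B$, $B \sqsubseteq A$ and using the satisfaction clause for Tbox inclusions.
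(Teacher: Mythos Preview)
Your proposal is correct and matches the paper's approach exactly: the paper simply remarks that the result is immediate because $a$, $b$, $A$, and $B$ are all interpreted as the same object, and you have unpacked that one-line justification into the explicit chain of equalities via Definition~\ref{definition:satisMeta}. There is nothing to add.
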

 The proof of the above lemma is immediate since 
$a$, $b$, $A$ and $B$  are all interpreted
as the same object.
\\
\\
 We  define a tableau algorithm for checking consistency 
of a knowledge base in $\ALCM$
by extending the standard tableau algorithm for $\ALC$.
The expansion rules for $\ALCM$  consist of the rules for
$\ALC$ 
and some additional  expansion rules 
  for meta-modelling (see Figure \ref{figure:ALCMexpansionrules}). 
 The additional 
 expansion rules 
 deal with 
  the equalities and inequalities between individuals with meta-modelling  
 which need to be transferred to the level of concepts as equalities and inequalities between the corresponding concepts. 
 We also need to add 
 an extra  condition  that checks  for circularities (with respect to membership) avoiding non well-founded sets. 
\\
\\
\noindent
 A  \emph{completion forest}  $\forest$ 
for  an  $\ALC$ knowledge base consists of
\begin{enumerate}
\item a set of nodes, labelled with individual names or variable names  (fresh individuals which do not belong to the ABox),  

\item directed edges between some pairs of nodes,
\item for each node labelled $x$, a set $\forest(x)$ of concept expressions, 
\item for each pair of nodes $x$ and $y$, a set $\forest(x, y)$ containing role names,
and
\item two relations between nodes, denoted by $\eql$ and $\neql$.
These relations keep record of the equalities and inequalities 
 of nodes  in the algorithm. 
  The relation $\eql$ is assumed to be 
   reflexive, 
   symmetric and transitive while $\neql$ is assumed to be symmetric.   
We also assume that the relation $\neql$ is {\em compatible with $\eql$}, i.e., 
if  $x' \eql x$ and $x \neql y$ then  $x' \neql y$ for all $x,x', y$.
In the algorithm, every time we add a pair in $\eql$, we close $\eql$ under
reflexivity, 
 symmetry and transitivity. Moreover, every time we add a pair in either $\neql$  or $\eql$,
  we close $\neql$ under compatibility with $\eql$. 

\end{enumerate}

\noindent
We assume that $\Tb$ y $\Ab$ have already been converted into negation normal form.
 
\begin{definition}[Initialization] \label{definition:initializationSHIQM}
The \emph{initial  completion forest} for ${\Kb} = \KB{\Tb}{\Ab}{\Mb}$ 
 is defined by the following procedure.

\begin{enumerate}
\item 
For each  individual $a$ in the knowlegde ($a \in {\As} \cup {\Mb}$)
set $a \eql a$.
\item 
For each $ a = b  \in \As$, set $a \eql b$. 
We also choose an individual as a representative of 
each equivalence class. 
\item For each  $a \not = b$ in $\As$, set $a \neql b$.
\item  
For each  $a \in {\As} \cup {\Mb}$, we do the following:
\begin{enumerate}
\item 
in case $a$ is a representative of an equivalence class then
set 
$\forest(a) = \{C \mid C(a')\in \As, a \eql a' \}$;
\item in case $a$  is not a representative of an equivalence class
then set $\forest(a) = \emptyset$. 
\end{enumerate}
 \item  
For all  $a, b \in {\As}\cup {\Mb}$ that are representatives of some equivalence
class, 
if $\{ R \mid R(a',b') \in \As, a \eql a', b \eql b'\} \not = \emptyset$
then create an edge from $a$ to $b$ and 
set $\forest(a,b) = \{ R \mid R(a',b') \in \As, a \eql a', b \eql b'\}$.
\end{enumerate}
\end{definition} 
 \noindent
Note that in case $a$  is not a representative of an equivalence class
and it has some axiom $C(a)$,
we set $\forest(a) = \emptyset$ because we do not want to apply any expansion rule to
$\forest(a)$. The expansion rules will only be applied to the
representative of the equivalence class of $a$.
\begin{definition}[Contradiction]
 A completion forest $\forest$   has a {\em contradiction} if either \begin{itemize}
\item $A$ and $\neg A$ belongs to  $\forest(x)$  for some atomic concept $A$ and node $x$ or 
\item there are nodes $x$ and $y$ such that $x \neql y$ and $x \eql y$.
\end{itemize}
\end{definition}
\noindent

\noindent We say that  a node $y$ is  a {\em successor} of a node $x$ if  $\lab(x,y)  \not = \emptyset$.
We define that $y$ is a {\em descendant} of $x$ by induction.
\begin{enumerate}
\item 
Every successor  of $x$,  which is a variable, is a descendant  of $x$. 
\item 
Every successor  of a descendant of $x$, which is  a variable, is also a descendant of $x$.
\end{enumerate}

\begin{definition}[Blocking]
We define the notion of blocking by induction.
A node $x$ is  blocked by a node $y$ if $x$ is a descendant of $y$  and $\lab(x) \subseteq \lab(y)$
or $x$ is a descendant of $z$ and $z$ is blocked by $y$. 
\end{definition}

\begin{definition}[$\ALCM$-Complete]
 A  completion  forest $\forest$  is {\em $\ALCM$-complete} (or just {\em complete}) 
 if none of the rules of  Figure \ref{figure:ALCMexpansionrules}  is applicable.
\end{definition}

\begin{definition}[Circularity]
\label{definition:cycle}
We say that the  completion forest 
 $\forest$  has a circularity 
 with respect to $\Mb$ if there is a sequence of meta-modelling axioms  $ a_0 \eqm A_0$, $ a_1  \eqm A_1 $, $ \ldots$ $ a_n \eqm  A_n$  all in $\mathcal{M}$ such that 
\[\begin{array}{ll}
A_1 \in \forest(x_0) & x_0 \eql a_0\\
A_2 \in \forest(x_1) & x_1 \eql a_1\\
\vdots & \vdots \\
A_n \in \forest(x_{n-1}) \ \ \ & x_{n-1} \eql a_{n-1}\\
A_0 \in \forest(x_{n}) & x_{n} \eql a_{n}
\end{array}
\]  
\end{definition}

\noindent
After initialization, the tableau algorithm proceeds by non-deterministically applying the 
{\em expansion rules for $\ALCM$} given 
in Figure~\ref{figure:ALCMexpansionrules}.  \\
 The algorithm says that the ontology $\KB{\Tb}{\Ab}{\Mb}$ 
 is consistent iff 
the expansion rules can be applied in such a way they yield a complete forest
$\forest$  without contradictions nor circularities.   Otherwise the algorithm 
says that it is inconsistent. 
Note that due to the non-determinism of the algorithm, implementations of it have to guess the choices and possibly have to backtrack to choice points if a choice already made has led to a contradiction. 
The algorithm stops when we reach {\em some} $\forest$ 
 that is  complete, has neither contradictions nor circularities or 
 when all the choices have yield a  forest with contradictions or circularities.

\begin{figure}[H]

\begin{description}

\addtolength{\itemsep}{-4pt}

\item 
{\bf $\sqcap$-rule}: \\
If $C \sqcap D \in \forest(x)$ and $\{C, D \} \not \subseteq \forest(x)$ then set $\forest(x) \leftarrow \{C, D \}$.

\item 
{\bf $\sqcup$-rule}: \\
If $C \sqcup D \in \forest(x)$ and $\{C, D \} \cap \forest(x)= \emptyset$ then set $\forest(x) \leftarrow \{C \}$ or                      $\forest(x) \leftarrow \{D \}$.

\item 
{\bf $\exists$-rule}: \\
If $x$ is not blocked, $\exists R.C\in \forest(x)$ and there is no $y$ with $R \in \forest(x,y)$ and $C \in \forest(y)$ then 

1. Add a new node with label $y$ (where $y$ is a new node label), 
2. set $\forest(x,y) = \{R\}$,\\
3. set $\forest(y) = \{ C \}$. 

\item 
{\bf $\forall$-rule}: \\
If $\forall R.C\in \forest(x)$ and there is a node $y$ with $R \in \forest(x,y)$ and $C \not \in \forest(y)$ then set $\forest(x) \leftarrow C$.

\item 
{\bf ${\cal T}$-rule}: \\
If $C \in {\cal T}$ and $C \not \in \forest(x)$, then $\forest(x) \leftarrow C$.

\item 
{\bf $\eql$-rule}: \\
Let $a \eqm A$ and $b \eqm B$ in $\mathcal{M}$.
If $a \eql b$ 
and $A \sqcup \neg B, B \sqcup \neg A$ does not belong to $\mathcal{T}$ then 
add $ A \sqcup \neg B, B \sqcup \neg A$ 
to
${\cal T}$.
\item 
{\bf $\neql$-rule}: \\ 
Let $a \eqm A$ and $b \eqm B$ in $\mathcal{M}$.
If $a \neql b$ 
and there is no node 
 $z$ such that 
$A \sqcap \neg B \sqcup B \sqcap \neg A \in \forest(z)$  
then  
create a new  node $z$ 
 with 
$\forest(z) = \{ A \sqcap \neg B \sqcup B \sqcap \neg A \}$ 

\item 
{\bf close-rule}: \\ 
Let $a \eqm A$ and $b \eqm B$ in $\mathcal{M}$
 where  $a \eql x$, $b \eql y$, 
  $x$ and $y$ are their respective representatives of the equivalence classes.
If  neither $x \eql  y$ nor $x  \neql y$  
then we add either $x \eql y$ or $x \neql y$.
In the case $x \eql y$, we also do the following: \\
1. add $\forest(y)$ to $\forest(x)$, \\
2. for all directed edges from $y$ to some $w$,
create an edge from $x$ to $w$ if it does not exist with $\forest(x,w) = \emptyset$,\\
3.  add $\forest(y, w)$ to $\forest(x, w)$, \\
4.  for all directed edges from some $w$ to $y$,
 create an edge from $w$ to $x$ if it does not exist with $\forest(w,x) = \emptyset$,\\
5. add  $\forest(w, y)$ to  $\forest(w, x)$, \\
6. set  $\forest(y) = \emptyset$ and remove all edges from/to $y$.

\end{description}

\caption{Expansion Rules for $\ALCM$}
\label{figure:ALCMexpansionrules}
\end{figure}

This tableau algorithm has complexity NEXP
and hence, it is not optimal.
In the following section 
 we define an optimal algorithm for
$\ALCM$ and prove that is ExpTime.

\section{A Tableau Calculus for $\ALCM$}
\label{Section:TableauCalculus}

In order to eliminate non-determinism, the
completion trees of the standard tableau algorithm 
are replaced with structures called {\em and-or graphs} \cite{Nguyen2009}. 
In such structure, both branches of a non-deterministic choice introduced by 
disjunction are explicitly represented. 
Satisfiability of the branches is propagated bottom-up and if it reaches
an initial node, we can be sure that a model exists. 
A global catching of nodos and a proper rule-application strategy is used to guarantee the exponential bound on the size of the graph.
The and-or graphs are built using a Tableau Calculus. 
  Our Tableau Calculus for $\ALCM$ is an extension of the Tableau Calculus for $\ALC$
  given by Nguyen and Szalas \cite{Nguyen2009},
  which adds  new rules for handling meta-modelling. 
\\
From now on,  we assume that  Tboxes, Aboxes and all concepts are in negation normal form  (NNF)
(see Figure \ref{figure:fnn}). 
 Note that the concepts in an Mbox are already in NNF since they are all atomic. 
\\
The Tableau Calculus for $\ALCM$ is defined  by the tableau rules of 
Figures~\ref{figure:plainrules} and \ref{figure:primerules}.
In the premises and conclusions of these rules, we find \emph{judgements}
$\judgement$  that
are built using the following grammar.

\begin{enumerate}
\item {\em Simple  judgements} $\judgement$ are of the
following three forms:
\[
\begin{array}{ll}
\derboxes{\Tb}{\Ab}{\Mb} & \ \ \  \mbox{{\em  \basejuicio}}  \\\\
\derconcepts{\Tb}{\Xs}  & \ \ \  \mbox{{\em  \varjuicio}}  \\\\
\derbot & \ \ \  \mbox{{\em absurdity judgement}}    
\end{array}
\]

\item  {\em Or-judgements} are of the form
 $\judgement_1 \ornode \judgement_2$,  where  $\judgement_1$ and $\judgement_2$ are simple judgements.   

\item {\em And-judgements} are of the form 
$\judgement_1 \andnode  \ldots \andnode \judgement_k$, where  $\judgement_i$  are simple 
judgements for all $1 \leq i \leq k$. 

\end{enumerate}
\noindent
The  first basic judgement is  just the
knowledge base $\KB{\Tb}{\Ab}{\Mb}$. 
 After initialization, the Aboxes $\Ab$ of the  Tableau  Calculus 
  do not contain equality assertions $ a =b$
because we choose the individual $a$ as a
 canonical representative and replace $b$ by $a$.
  We write $\As [a/b]$ and $\Mb [a/b]$ to denote
  the replacement of $b$ by $a$ in $\As$ and $\Mb$ respectively.
  \\
For the second basic judgement, we have that
 $\Tb$ is a Tbox and $\Xs$ is a set of concepts.
The set $\Xs$ intuitively represents the set of concepts
satisfiable by a certain unknown $x$.
The key feature of this tableau calculus is actually not to create  variables
$x$ for the existential restrictions as the standard tableau for Description Logic does.
It actually forgets the variables
and only cares about the set $\Xs$ of concepts that should be satisfiable.
By not  writing $x$ explicitly, the set $\Xs$ can be shared
by another  existential coming from another branch
in the graph 
and this will allow us to obtain an 
ExpTime algorithm for checking consistency.
\\
Recall from Definition \ref{definition:consistentbaseALCM} and \ref{definition:modelALCM}
that 
$\derboxes{\Tb}{\Ab}{\Mb}$
is satisfiable if there exists $\interp$ such that
\begin{enumerate}
\item $\interp \models  \bigsqcap \Tb \equiv  \top$ thus $\interp$ is a model of 
$\Tb$, 
\item $\interp \models \Ab$ and 
\item $\interp \models \Mb$.
\end{enumerate}
\noindent
Recall  that
$\derconcepts{\Tb}{\Xs}$
is satisfiable    
if there exists $\interp$ such that
\begin{enumerate}
\item 
$\interp \models  \bigsqcap \Tb \equiv  \top$ thus $\interp$ is a model of 
$\Tb$ and 
\item $\interp$ satisfies the set $\Xs$ of concepts, i.e. 
$(\bigsqcap \Xs)^\interp \not = \emptyset$.
\end{enumerate}  
%
We have that $\derbot$ is always unsatisfiable.
\\
We say that 
$\judgement_1 \andnode \ldots \andnode  \judgement_k$
is satisfiable if 
$\judgement_i$ is satisfiable for all $1 \leq i \leq k$.
We use
$\bigandnode  \{\judgement_1, \ldots, \judgement_k\}$
to denote
$\judgement_1 \andnode \ldots \andnode \judgement_k$.
\\
We say that
$\judgement_1 \ornode  \judgement_2$
is satisfiable if 
 $\judgement_1$ or $\judgement_2$ is satisfiable. 
 \\  
The tableau rules are written downwards, they have only one simple judgement as premise
and can have a sequence of simple judgements as conclusion.
{\em Unary rules} have only a simple  judgement  as conclusion
 and they are of the form:
\[
 \inference{\judgement_0}{\judgement_1}
\]
where $\judgement_0$ and $\judgement_1$ are both simple.
\\
{\em Non-unary rules} have a conclusion which is a and-judgements or an or-judgements.
 In the first case, we have an {\em and}-rule which
is of the form:
\[
 \inference{\judgement_0}{\judgement_1 \andnode \ldots \andnode \judgement_k}
\]
where $\judgement_i$ is simple for all $1 \leq i \leq k$.
In this case, the arity of the rule is $k$. 
\\
When the conclusion is an or-judgements, we have
an {\em or}-rule which is of the form:
\[ \inference{\judgement_0}{\judgement_1 \ornode \judgement_2}\]
where $\judgement_1$ and $\judgement_2$ are   simple. 
In this case, the arity of the rule is $2$.
\\
%
A \emph{bottom rule} is a rule whose conclusion is $\derbot$.
 The $\transR$ and $\transPR$ rules are called \emph{transitional}, the rest of the 
rules are called  \emph{static}.


\noindent The rules $\orR$, $\orPR$ and $\closePR$ are all or-rules
 and all of them have arity $2$.
The rules $\transR$ and $\transPR$ are and-rules. The arity of the and-rules
depend on the number of existentials in $\Ab$ or in $\Xs$.  
The rest of the rules are unary since their conclusion is a simple judgement.\\
In the Tableau Calculus, the rules $\orR$, $\orPR$ and $\closePR$ are deterministic.
The inherent choice  of these rules
is explicitly represented using the or-judgement. 
\\
The Tableau Calculus does not have an explicit Tbox-rule
as the standard tableau algorithm. Instead,
 this rule is ``spread inside other rules'', namely $\transR$, $\transPR$ and $\difference$-rules
 ensuring  that the new individuals satisfy the concepts in the Tbox.
Also, the initialization  ensures that
 all the individuals of the initial knowledge base
satisfy the concepts of the Tbox (see Definition \ref{definition:andorgraph}).\\
The rules $\transR$ and $\transPR$ need to be and-rules because
all the outermost existentials are treated simultaneously~\footnote{The concept $\exists R. \exists S. C$
has two nested existentials and the outermost one is $\exists R$.}.
Each of these rules is a compact way of doing 
what was necessary to do with  three rules, 
$\exists$-rule, $\forall$-rule and the Tbox-rule in 
 the standard tableau algorithm.

The rules $\transR$ and $\transPR$ are indirectly introducing unknown individuals
 for each outermost existential. Those unknown individuals should  satisfy all concepts
in the Tbox and inherit all concepts from the  $\forall$
of the corresponding role.\\
%
%

We explain the intuition behind the new rules that deal with meta-modelling,
which are $\botThreePR$,
$\closePR$, $\equal$ and $\difference$. 
If $a \eqm A$ and $b \eqm B$ then the individuals $a$ and $b$ 
represent concepts. Any equality at the level of 
individuals should be transferred as an equality between
concepts and similarly with the difference. 
\\ 
 Note that the Aboxes of the Tableau  Calculus 
 do not contain  equality assertions $ a =b$
because  we choose the individual $a$ as a
 canonical representative and replace $b$ by $a$ in the initialization as well as
 in the $\closePR$-rule.
\\
The $\closePR$-rule is an or-rule that 
distinguishes between the two possibilities: either   $a$ and $ b$ are equal  or
they are different.
In the case   $a$ and $b$ are equal, we choose $a$ as canonical representative
and replace $b$ by $a$. Note that $b$ is replaced by $a$ in the Mbox
and we get $a \eqm B$ as the result of replacing $b$ by $a$ in $b \eqm B$. 
In the case $a$ and $b$ are different, we simply add the axiom $a \not = b$ to the Abox.

The $\equal$-rule transfers an ``equality between individuals'' to 
the level of concepts. Since we choose canonical representatives for individuals,
this happens only when $a \eqm A$ and $a \eqm B$. 
Instead of adding    the statement  $A \equiv B$ to the TBox, 
we add its negation normal form which is
$(A \sqcap \neg B) \sqcup (\neg A \sqcap B)$.
Since we add this new concept to the TBox, we 
also have to add that all the existing individuals
satisfy this concept.
Note that the principal premise $a \eqm B$  
is removed from the Mbox and does
not appear in the conclusions of the rule.

The $\difference$-rule transfers the difference between individuals 
to the level of concepts. If  $a \not =  b$ is in the Abox, 
then we should add that 
$A \not \equiv B$. However, we cannot add
 $A \not \equiv B$ because the negation of $\equiv$ is not
directly  available in
the language. So, what we do is to 
 replace it by an equivalent statement,  i.e.  we add an element $d_0$
  that  is in $A$ but not in $B$ or it is in $B$ but it is not in $A$.
Note also that the individual $d_0$ should also satisfy all the concepts
that are in the Tbox $\Tb$.

The $\botThreePR$-rule ensures that there are no circularities and hence, 
the domain of the canonical interpretation is well-founded. This rule uses the following definition:

\begin{definition}[Circularity of an Abox w.r.t an Mbox]
\label{definition:circular}
We say that $\Ab$ has a circularity w.r.t. $\Mb$, denoted as
$\circular(\Ab, \Mb)$, 
if   there is a sequence of meta-modelling axioms 
$ a_1  \eqm A_1 $, $ a_2  \eqm A_2 $, $ \ldots$, $ a_n \eqm  A_n$ all in $\Mb$ such that $A_1(a_2)$, $A_2(a_3)$, \ldots, $A_n (a_1)$ are in $\Ab$. 
\end{definition}
\noindent
For example, the Abox $\Ab = \{A(a), B(b) \}$ has a circularity w.r.t to the
Mbox $\Mb = \{ a \eqm A, b \eqm B\}$.
\\
A {\em formula} is either  a concept, or an Abox statement or an Mbox statement. 
Formulas are denoted by greek letters $ \varphi$, etc.  
The distinguished formulas of the premise 
are called the \emph{principal formulas} of the rule.

\begin{remark}
\mbox{ }

\begin{itemize}

\item 
Note that the rules
$\orPR$,
$\andPR$,
$\forallPR$,   $\difference$ 
and $\closePR$ need a side condition to ensure
termination (otherwise the same rule can be vacuously applied infinite times).
The $\equal$-rule does not need a side condition because the
axiom $a \eqm B$ is removed from the Mbox.

\item 
Note that the rules $\orPR$ and 
$\andPR$ keep the principal formula while
the rules $\orR$ and $\andR$ do not.
For the rules 
$\orPR$ and 
$\andPR$, we cannot remove the principal formula
in the conclusion because 
this could lead to infinite applications of
these rules  when we combine any of these
rules with the $\forallPR$-rule. 
\end{itemize}
\end{remark}

\begin{figure}

\[
\begin{array}{lr}
\inference[$\botR$]{\derconcepts {\Ts} {\Xs \cup 
\{ A, \neg A\} }}
                          {\derbot} 
& \ \ 
\inference[$\orR$]{\derconcepts {\Ts} {\Xs \cup \{C \sqcup D \}}}
                       {\derconcepts{\Ts} {\Xs \cup \{ C \}} \ornode \derconcepts{\Ts}{\Xs \cup \{D \}}}
\\
\\ 
\inference[$\andR$]{\derconcepts{\Ts}{\Xs \cup \{ C \sqcap D \}}}
                               {\derconcepts{\Ts}{\Xs \cup \{ C, D \}}} 
& \ \
\inference[$\transR$]
             {\derconcepts{\Ts}{\Xs}}
             {\bigandnode   \{ \derconcepts{\Ts}
                                        {\Xs_{\exists R. C}}
                                         \ \suchthat \ \exists R.C \in \Xs \}}  
\\
\\
& \mbox{where }
\Xs_{\exists R.C} = \{ C \}  \cup 
 \{ D \ \suchthat   \ \forall R.D \in  \Xs \} \cup 
                                           \Ts
\end{array}
\]

\caption{Tableau rules for variable judgements  }

\label{figure:plainrules}

\end{figure}

\begin{figure}
\[
\begin{array}{l}
\begin{array}{lr}
\inference[$\botOnePR$]
          {\derboxes{\Ts}{\As \cup \{ B(a), \neg B(a) \}}{\Mb}}
          {\derbot}
&
\begin{array}{r}
\inference[$\andPR$]
{\derboxes{\Ts}{\As \cup \{(C \sqcap D)(a)\} }{\Mb}}
{\derboxes{\Ts}{\As' }{\Mb} }
{\Scale[0.8]{\begin{array}{c}
              C(a) \not \in \As \\
              \mbox{ and}\\ 
             D(a) \not \in \As  \end{array}
             }} \\[1.5em]
            \mbox{where } 
\As' = \As \cup \{ (C \sqcap D) (a), C(a), D(a)\}
\end{array}
             \\
             \\
\inference[$\botTwoPR$]
          {\derboxes{\Ts}{\As \cup \{ a \not = a \}}{\Mb}}
          {\derbot}
&
\begin{array}{r}
\inference[$\orPR$]
{\derboxes{\Ts}
         {\As \cup \{(C \sqcup D)(a)\}}{\Mb}}
{\derboxes{\Ts}{\As'}{\Mb}
 \ornode \derboxes{\Ts}{\As''}{\Mb}} 
{\Scale[0.8]{ \begin{array}{c} 
             C(a) \not \in \As \\
              \mbox{ or}\\
                   D(a)\not \in  \As  \end{array} }} 
\\[1.5em]
 \mbox{where } \As'= \As \cup \{(C \sqcup D)(a), C(a)\} 
 \\ 
\As'' = \As \cup \{ (C \sqcup D)(a),  D(a) \}
\end{array}
\\
\\
\inference[$\botThreePR$]
          {\derboxes{\Ts}{\As}{\Mb}}
          {\derbot}~{\Scale[0.8]{ \circular(\As, \Mb)}}
\end{array}
\\
\\
\inference[$\forallPR$]
           {\derboxes{\Ts}
                     {\As \cup \{ \forall R.C(a), R(a, b) \}}
                     {\Mb}}
           {\derboxes{\Ts}{\As \cup \{ \forall R.C(a), R(a, b), C(b) \} }{\Mb}}~{ \Scale[0.8]{C(b) \not \in \As }}
\\
\\
\inference[$\transPR$]
{\derboxes{\Ts}{\As}{\Mb}}
  { \bigandnode \{  
             \derconcepts{\Ts}{ \Xs_{\exists R. C(a)}} \suchthat \ \exists R.C(a) \in \As \}
  }
  \\[1.5em]
\mbox{where } \Xs_{\exists R. C(a)}  =
   \{ C \} \cup 
              \{ D \ \suchthat \ \forall R.D(a) \in  \As \} \cup 
              \Ts
  \\
  \\ 
 \inference[$\closePR$]
{\derboxes{\Ts}{\As}{\Mb}}
 {\derboxes{\Ts}{\As'}{\Mb'} \ornode \derboxes{\Ts}{\As'' }{\Mb}}
                {\Scale[0.8]{
                     \{a, b \} \subseteq \dom (\Mb) \ \ \  
                                             a \not = b \not \in \As  \ \ \
                                             a \not = b 
                              }} 
\\[1.5em]
 \mbox{where } \As' = \As [a/b], \Mb' = \Mb[a/b]
 \mbox{ and } \As'' = \As \cup \{ a \neq b \}
 \\
\\
\inference[\equal]
  { \derboxes{\Ts}{\As}{\Mb \cup \{ a \eqm A, a \eqm B \}} }
 {\derboxes{\Ts'}{\As \cup \As'}{\Mb \cup \{ a \eqm A \} }}
\\[1.5em]
\mbox{where } 
\Ts' =  \Ts \cup \{ 
 (A \sqcup \neg B),  (B \sqcup \neg A) \}
\\
\ \ \ \ \ \ \ \ \ 
\As' =  
 \{ ((A \sqcup \neg B) \sqcap (B \sqcup \neg A))(d)  
 \suchthat 
 d \in \dom (\As) \cup \dom (\Mb) \cup \{ a \} \}
\\
\\ 
\inference[$\difference$]
{\derboxes{\Ts}{\As \cup \{ a \neq b \}}
{\Mb \cup \{a \eqm A, b \eqm B \}}}
 {
 \derboxes{\Ts}{\As'}{\Mb \cup\{ a \eqm A,
  b \eqm B \}}  }~{\Scale[0.8]{{\sf cond}_{\difference}}}
\\[1.5em]
\mbox{where } 
{\sf cond_{\difference}}\mbox{ means  that
  there is no $d$ such that }
  (A \sqcap \neg B \sqcup \neg A \sqcap B)(d)  \in \As
\\
\hspace{1cm}
\mbox{{\small 
$\As' = \As \cup \{a \not = b \} \cup \As''$
}}
\\
\hspace{1cm}
\mbox{{\small 
$\As'' =  \{(A \sqcap \neg B \sqcup \neg A \sqcap B)(d_0)\}
 \cup \{ C(d_0)    \suchthat  C \in \Ts \}$}} \\
\hspace{2cm} \mbox{ {\small for a new individual $d_0$}}
\end{array}
\]

\caption{Tableau rules  for base judgements}
\label{figure:primerules}
\end{figure}

\begin{definition}[Preferences]
\label{definition:preferences}
The rules are applied in the following order:

\begin{enumerate}

\item The  bottom rules -- which  are 
$\botR$, $\botOnePR$, $\botTwoPR$ and $\botThreePR$ --
are applied with higher priority.

\item The rest of the unary rules -- which are $\andR$, $\forallPR$, $\andPR$, 
$\equal$ and $\difference$ --
are applied only if no bottom rule   is applicable.

\item The $\orR$, $\orPR$ and $\closePR$-rules
 are applied  only if  no unary rule is applicable.

\item The $\transR$ and $\transPR$-rules
are applied if no other rule is applicable.

\end{enumerate}

\end{definition}

\noindent
We now describe how to construct an {\em  and-or graph} 
 from the tableau calculus of Figures \ref{figure:plainrules} and
\ref{figure:primerules}. 
This graph is built  using {\em global caching}, i.e. the graph
contains at most one node with that judgement as label and this node is
processed (expanded) only once.
The complexity of the tableau calculus without global caching would
be double exponential. But using global caching, we do not repeat
nodes and the complexity is exponential.

\begin{definition}[And-or graph for a knowledge base in $\ALCM$]
\label{definition:andorgraph}
Let $\Kb = \KB{\Tb}{\Ab}{\Mb}$ be a knowledge base in $\ALCM$.
The {\em and-or graph for $\KB{\Tb}{\Ab}{\Mb}$}, also called
 {\em tableau for $\KB{\Tb}{\Ab}{\Mb}$}, 
is a graph  $\andorgraph$ constructed as follows.

\begin{enumerate}

\item The graph contains nodes of three kinds:  {\em \varnode s},{\em \basenode s}
and absurdity nodes. 
The label of a \varnode \ is $\derconcepts{\Ts}{\Xs}$.
The label of a \basenode \ is 
$\derboxes{\Tb}{\Ab}{\Mb}$. 
The label of an absurdity node is $\bot$.

\item 
The root of the graph is a  \basenode \  whose label is  $\derboxes{\Tb_0}{\Ab_0}{\Mb_0}$
where 
\[
\begin{array}{ll}
\Tb_0 & := \Tb \\
\Ab_0 & := \Ab^{*} \cup 
\{C(a) \ \suchthat C \in \Tb, \  a  \in \dom(\Ab^{*}) \cup \dom(\Mb^{*}) \} \\
 
\Mb_0 & := \Mb^{*} \\
\end{array}
\]

and $\Ab^{*}$ and $\Mb^{*}$ are obtained from $\Ab$ and $\Mb$
 by choosing a canonical representative $a$ for each assertion $a = b$ and replacing $b$ by $a$.

\item 
 Base nodes   are expanded using the  rules of 
Figure~\ref{figure:primerules}
while  variable nodes  are expanded using the rules of Figure~\ref{figure:plainrules}.

\item For every node $v$ of the graph, if a $k$-ary rule $\delta$ is applicable to
(the label of) $v$ in the sense that  an instance of $\delta$ has the label of $v$ as
premise and $Z_1, \ldots , Z_k$ as possible conclusions, then choose such a rule according
to the preference  of Definition \ref{definition:preferences} 
and apply it to $v$ to make $k$ successors $w_1, \ldots, w_k$
with labels  $Z_1, \ldots , Z_k$, respectively.

\item If the graph already contains a node $w_i'$ with label $Z_i$ then instead of creating a new node $w_i$ with label $Z_i$ as a successor of $v$ we just connect $v$ to $w_i'$ and assume $w_i = w_i'$.

\item If the applied rule is $\transR$ or $\transPR$ then
 we label the edge $(v, w_i)$ by 
 $\exists R. C$ if 
 the principal formula is either
  $\exists R.C$ or $(\exists R.C)(a)$.

\item If the rule applied to $v$ is an or-rule then $v$ is an \emph{or-node}.
If the rule applied to $v$ is an and-rule then
$v$ is an   \emph{and-node}.

 \item If no rule is applicable to $v$ then $v$ is an  \emph{end-node}  as well as an and-node. 

\end{enumerate}
\end{definition}

\begin{remark}
\label{remark:graph}
We make the following observations on the construction of the and-or graph.
 
\begin{itemize}

\item  The graph cannot contain edges from a  variable  node to a base node. 

\item 
Each non-end node is ``expanded'' exactly once, using only one rule.
Expansion continues until no further expansion is possible.

\item The nodes have unique labels.

\item Nodes expanded by applying a non-branching static  rule can be treated
either as or-nodes or and-nodes. We choose to treat them as or-nodes.
Applying the $\andR$ to a node causes the node to become 
an or-node (which might seem counter-intuitive).
 
\item  The graph is finite (see Lemma \ref{lemma:cantNodesGraph}).

\end{itemize}

\end{remark}

\begin{definition}[Marking] 
\label{definition:marking}
A \emph{marking} of an and-or graph $\andorgraph$ is a subgraph $\andorgraph'$ of $\andorgraph$ such that:

\begin{itemize}

\item the root of $\andorgraph$ is the root of $\andorgraph'$.

\item if $v$ is a node of $\andorgraph'$ and is an or-node of $\andorgraph$ 
then there exists exactly  one edge $(v, w)$ of $\andorgraph$ that is an edge of $\andorgraph'$.

\item if $v$ is a node of $\andorgraph'$ and is an and-node of $\andorgraph$ then every edge $(v, w)$ of $\andorgraph$
 is an edge of $\andorgraph'$.

\item if $(v, w)$ is an edge of $\andorgraph'$ then $v$ and $w$ are nodes of $\andorgraph'$.

\end{itemize}

\noindent
A marking $\andorgraph'$ of an and-or graph $\andorgraph$ for $\KB{\Tb}{ \Ab}{ \Mb}$ is \emph{consistent} if 
it does not contain any node with label  $\derbot$.
\end{definition}

\begin{example}
\label{ex:exampleGraph}
Figure \ref{fig:firstGraph} shows the and-or graph for the following knowledge base. 

\begin{center}
\begin{tabular}{ccc}
\begin{tabular}{l}
\textbf{Tbox} \\
\fbox{$\top \sqsubseteq \exists S.A  $}
\end{tabular} 
&
\begin{tabular}{l}
\textbf {Abox} \\
\fbox{
$\exists R.A(d)$\ \
$\forall R.\neg B(d)$
}
\end{tabular}
&
\begin{tabular}{l}
\textbf {Mbox} \\
\fbox{
$a \eqm  A$ \ \ 
$b \eqm B$
}
\end{tabular}
\end{tabular}
\end{center}
To reduce the number of nodes, 
we apply lazy unfolding   (denoted by \emph{l. unf.} in the figure)~\cite{Horrocks03}.
Instead of adding $(A \sqcup \neg B) \sqcap (B \sqcup \neg A)$ in the $\equal$-rule, we add $A \equiv B$. Then,
 we do lazy unfolding and replace A by B in expressions of
 the form $A$ or $\neg A$ that appear in $\Xs$ or
 in expressions of the form $A(x)$ or $\neg A(x)$ that
 appear in $\Ab$. 
\end{example}

\begin{figure}
\centering
\includegraphics[width = 0.75\linewidth]{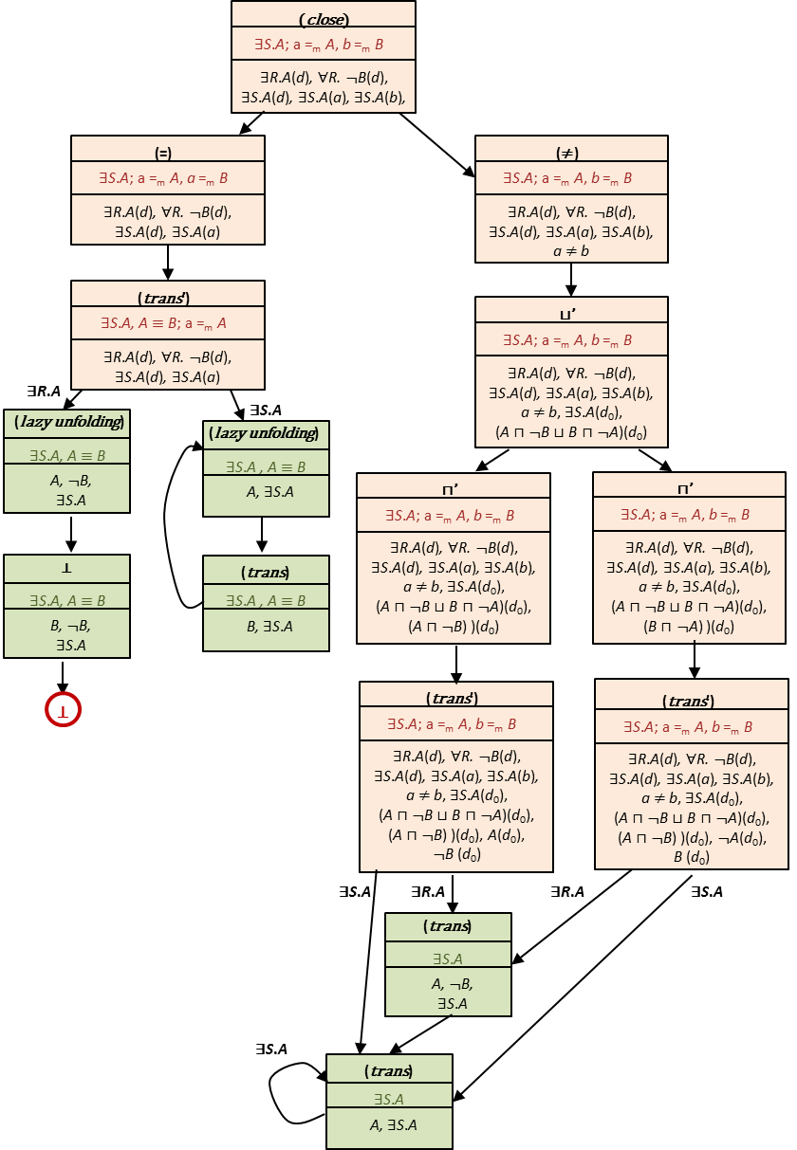}
\caption{And-or graph for the knowledege base: 
$\top \sqsubseteq \exists S.A$, $\exists R.A(d)$, $\forall R.\neg B(d)$, $a \eqm A$, $b \eqm B$ }
\label{fig:firstGraph}
\end{figure}

\section{Correctness of the Tableau Calculus for $\ALCM$}
\label{Section:Correctness}

In this section we prove correctness of the
Tableau Calculus for $\ALCM$ presented in the previous section.

\begin{theorem}[Correctness of the Tableau Calculus] 
\label{theorem:correctnesstableau}
The $\ALCM$ knowledge base  $\Kb = \KB{\Tb}{\Ab}{\Mb}$ is consistent
if and only if the and-or graph for  $\Kb$
 has a consistent marking.
\end{theorem}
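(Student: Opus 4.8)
The plan is to prove the two implications separately, treating the $\ALC$ skeleton with the machinery of Nguyen and Szalas~\cite{Nguyen2009} and isolating the genuinely new $\ALCM$ content around the meta-modelling rules.

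For the forward direction (consistency implies a consistent marking) I would start from a model $\interp$ of $\Kb$ and use it to \emph{guide} a marking: at every or-node select a satisfiable successor and at every and-node keep all successors. The resulting subgraph is a marking in the sense of Definition~\ref{definition:marking}. The heart of this direction is a \emph{local soundness} lemma: each tableau rule of Figures~\ref{figure:plainrules} and~\ref{figure:primerules} preserves satisfiability downwards, so that if the premise is satisfiable then at least one conclusion of an or-rule, and every conclusion of an and-rule, is satisfiable. Granting this, the marking guided by $\interp$ never reaches a $\derbot$ node, hence is consistent. The only non-routine case is $\botThreePR$: I must check that a satisfiable base judgement $\derboxes{\Ts}{\As}{\Mb}$ can never satisfy $\circular(\As,\Mb)$. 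This is precisely where well-foundedness enters. A circularity $a_1 \eqm A_1,\dots,a_n \eqm A_n$ with $A_1(a_2),\dots,A_n(a_1)$ in $\As$ would force, in any model, $a_{i+1}^{\interp}\in A_i^{\interp}=a_i^{\interp}$ and hence a membership cycle $a_1^{\interp}\ni a_n^{\interp}\ni\cdots\ni a_1^{\interp}$, contradicting the requirement of Definition~\ref{definition:modelALCM} that $\Delta^{\interp}$ be a subset of some $S_n$ and thus well-founded.

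For the backward direction (a consistent marking yields a model) I would proceed in two stages. First, following the Nguyen and Szalas construction, I would read off from a consistent marking an $\ALC$-interpretation $\interp_0$ whose domain consists of the base individuals together with the variable nodes reached through the transitional rules $\transR$ and $\transPR$, with concept membership and role edges dictated by the node labels; the saturation enforced by the static rules and the preference order (Definition~\ref{definition:preferences}) gives $\interp_0\models\KbALC{\Tb}{\Ab}$ in $\ALC$, including the concepts that the $\equal$- and $\difference$-rules have added to $\Ts$ and $\As$. At this stage meta-modelling holds only ``up to transference'': by the $\equal$-rule and Lemma~\ref{lemma:equalitytransference}, $a^{\interp_0}=b^{\interp_0}$ forces $A^{\interp_0}=B^{\interp_0}$ whenever $a\eqm A$, $b\eqm B$, and the $\difference$-rule supplies the converse separation. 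The second stage turns $\interp_0$ into a genuine $\ALCM$-model by reinterpreting domain elements as sets. I would define a relation $\rel$ on $\Delta^{\interp_0}$ by $y\rel x$ whenever $x=a^{\interp_0}$ for some $a\eqm A\in\Mb$ and $y\in A^{\interp_0}$; consistency of the marking means $\botThreePR$ never fired, so $\rel$ is acyclic, and finiteness of $\Delta^{\interp_0}$ makes it well-founded. By the recursion principle (Definition~\ref{definition:recursionprinciple}) I would build $f$ on $\Delta^{\interp_0}$ with $f(x)=\{\,f(y)\mid y\in A^{\interp_0}\,\}$ when $x=a^{\interp_0}$ for a meta-modelled individual, and $f(x)$ a fresh atom of $S_0$ otherwise. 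Transporting the structure of $\interp_0$ along $f$ onto $\Delta:=f(\Delta^{\interp_0})$ yields $\interp$, which by Lemma~\ref{lemma:isomorphism} still models $\KbALC{\Tb}{\Ab}$, which satisfies $a^{\interp}=A^{\interp}$ by construction so that $\interp\models\Mb$, and whose domain sits inside some $S_n$ by a rank bound coming from finiteness of $\rel$; hence $\interp\models\Kb$.

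I expect the decisive difficulty to lie entirely in this last stage, namely in verifying simultaneously that the well-founded recursion defining $f$ is well-defined (independence of the defining clause from the choice of $a$, which uses the equality transference of the first stage), injective (which combines the $\rel$-inductive injectivity on smaller elements with the separation provided by the $\difference$-rule and the freshness of the atoms), and range-bounded inside a single $S_n$. This is exactly the point where the three ingredients special to $\ALCM$ — acyclicity enforced by $\botThreePR$ and the equality and difference transference enforced by $\equal$ and $\difference$ — must be combined to realise the self-referential constraint $a^{\interp}=A^{\interp}$ within a well-founded set. The purely $\ALC$ parts of both directions follow standard Nguyen and Szalas reasoning and should amount to bookkeeping.
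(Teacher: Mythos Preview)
Your proposal is correct and follows essentially the same approach as the paper: the forward direction via a local preservation-of-satisfiability lemma (the paper's Lemma~\ref{lema:preservationrules}, with your well-foundedness argument for $\botThreePR$ spelling out what the paper calls ``trivial''), and the backward direction via the well-founded recursion the paper calls $\unfold$ (Definition~\ref{definition:unfold}), whose well-definedness, injectivity, and rank bound are established exactly along the lines you sketch in Theorems~\ref{theorem:recursion} and~\ref{theorem:saturatedgraphALCM} and Lemma~\ref{lemma:setisinSn}. The paper's only organisational difference is that it first walks the saturation path of base nodes to a fully saturated node $v_n$, builds the model there through an intermediate ``saturated $\setRoles$-structure'' abstraction (Definitions~\ref{definition:saturatedgraph}--\ref{definition:saturatedgraphALCM} and the algorithm of Table~\ref{table:algorithmbuildRstructured}), and then propagates satisfiability back to the root via a converse-preservation lemma (Lemmas~\ref{lemma:conversepreservation} and~\ref{lemma:satisfiableSust}) that undoes the substitutions introduced by $\closePR$ --- a bookkeeping step your sketch leaves implicit under ``standard Nguyen--Szalas reasoning''.
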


\noindent
The if direction is proved in Theorem \ref{theorem:soundess}
(Soundness)
and the only if direction is proved in 
Theorem \ref{theorem:completeness} (Completeness).
We give an idea of the proof.
For the if direction, it is necessary to prove that
the rules preserve satisfiability. A consistent marking
exists because we start from a satisfiable node with label
$\Kb = \KB{\Tb}{\Ab}{\Mb}$. 
For the only if direction, we prove that the converse of base rules
preserve satisfiability, i.e. if the conclusion is satisfiable, so is
the premise.
Then, it is enough to  construct a canonical model
for a base node where no base rules are applicable.
We know at that point that the Mbox has no circularities
which is crucial to enforce that the model is well-founded.
We also know that 
the $\equal$ and $\difference$-rules cannot longer be applied,
which means that there is an isomorphism between
the canonical interpretation of $\ALC$ and $\ALCM$.
\\
 
 We illustrate the idea of  canonical interpretation with the following example. Suppose we have an ontology $(\mathcal{T}, \mathcal{A}, \mathcal{M})$  with four individuals $a, b, c$ and $d$
with axioms $B(a)$, $A(c)$, $ A(d)$ and the meta-modelling axioms   given by
$a  \eqm  A $ and $b  \eqm  B $.
The canonical interpretation $\interp$ of the $\mathcal{ALC}$ ontology 
is then,
\[
\begin{array}{ll}
\Delta^{\interp} & =\{ a, b, c, d \} \\
A^{\interp} &   = \{c, d \} \\
B^{\interp} &  =\{a\}
\end{array}
\]
Intuitively, we see that we need to force the following equations to 
 make the meta-modelling axioms 
$a  \eqm  A $ and 
$b  \eqm  B $ satisfiable:
\begin{center}
\begin{tabular}{l}
$a = \{c, d \}$ \\
$b = \{a\}$
\end{tabular}
\end{center}
These equations do not have circularities. 
Then, the canonical   interpretation
$\interp'$ for the ontology in $\mathcal{ALCM}$ is now defined as follows.
\[\begin{array}{ll}
\Delta^{\interp'} 
& = \{\{c, d \}, \{ \{c,d \} \}, c, d \} \\
A^{\interp'} &   = a^{\interp'} = \{c, d \} \\
B^{\interp'} &  = b^{\interp'} = \{ \{c, d \} \}
\end{array}
\]
In this case, $\interp'$ is a model of $(\mathcal{T}, \mathcal{A}, \mathcal{M})$.
By defining  $S_0 = \{c, d\}$, 
we see that  $\Delta^{\interp'}  \subset S_2$.

\subsection{Soundness}

In order to prove soundness, the following lemma is crucial:

\begin{lemma}[Preservation of Satisfiability in the Tableau Calculus]
\label{lema:preservationrules}
All the rules of Figures \ref{figure:plainrules} and
\ref{figure:primerules} preserve satisfiability, i.e.
if the premise is satisfiable so is the conclusion.
\end{lemma}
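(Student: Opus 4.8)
The plan is to argue by a case analysis on which tableau rule was applied, proving separately for each rule that satisfiability of the premise entails satisfiability of the conclusion. I would first split the rules into two groups: the four bottom rules $\botR$, $\botOnePR$, $\botTwoPR$ and $\botThreePR$, whose conclusion is the unsatisfiable judgement $\derbot$, and all the remaining rules. For a bottom rule the implication holds vacuously, so it suffices to show that its premise is never satisfiable. For $\botR$ and $\botOnePR$ no interpretation can place an element in both $A^\interp$ and $(\neg A)^\interp = \Delta \setminus A^\interp$; for $\botTwoPR$ no interpretation can satisfy $a \neq a$ since $a^\interp = a^\interp$. For $\botThreePR$ the side condition $\circular(\As,\Mb)$ is the crucial ingredient: a cycle $a_1 \eqm A_1, \ldots, a_n \eqm A_n$ with $A_1(a_2), \ldots, A_n(a_1) \in \As$ forces, in any model, $a_{i+1}^\interp \in A_i^\interp = a_i^\interp$ and $a_1^\interp \in A_n^\interp = a_n^\interp$, i.e. a membership cycle $a_1^\interp \ni a_2^\interp \ni \cdots \ni a_n^\interp \ni a_1^\interp$. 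This contradicts the requirement in Definition \ref{definition:modelALCM} that $\Delta$ be a subset of some well-founded $S_n$, so the premise is unsatisfiable.

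For the purely $\ALC$ rules I would in each case take a model $\interp$ of the premise and verify that the very same $\interp$ models the conclusion. For $\andR$ and $\andPR$, the witness in $(C \sqcap D)^\interp$ lies in $C^\interp \cap D^\interp$; for $\orR$ and $\orPR$, a witness in $(C \sqcup D)^\interp$ lies in $C^\interp$ or in $D^\interp$, selecting the corresponding disjunct of the or-judgement; for $\forallPR$, $a^\interp \in (\forall R.C)^\interp$ together with $(a^\interp, b^\interp) \in R^\interp$ yields $b^\interp \in C^\interp$. For the transitional rules $\transR$ and $\transPR$ I would use the same $\interp$ for every conjunct of the resulting and-judgement: for each outermost existential $\exists R.C$ (resp. $(\exists R.C)(a)$) the witness $y$ guaranteed by membership in $(\exists R.C)^\interp$ inherits every $D$ with $\forall R.D$ present and, because $\interp$ validates $\Ts$, every concept of $\Ts$, so $y$ witnesses $(\bigsqcap \Xs_{\exists R.C})^\interp \neq \emptyset$. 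Here one must remember that for an and-judgement each conjunct may be satisfied by a different witness, which is exactly what the definition of satisfiability of $\bigandnode$ permits.

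The meta-modelling rules require a little more care. For $\closePR$ I would case-split on whether $a^\interp = b^\interp$ in the given model: if they are equal, the renaming $[a/b]$ preserves every assertion, so $\interp$ models $(\Ts, \As[a/b], \Mb[a/b])$; if they differ, $\interp \models a \neq b$ and $\interp$ models $(\Ts, \As \cup \{a \neq b\}, \Mb)$; either way one disjunct is satisfiable. For $\equal$ the two axioms $a \eqm A$ and $a \eqm B$ force $A^\interp = a^\interp = B^\interp$, whence $(A \sqcup \neg B)^\interp = (B \sqcup \neg A)^\interp = \Delta$, so $\interp$ validates the concepts added to the Tbox and every individual trivially satisfies the new assertions in $\As'$; thus $\interp$ itself models the conclusion. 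For $\difference$, from $a^\interp \neq b^\interp$ together with $a^\interp = A^\interp$ and $b^\interp = B^\interp$ we obtain $A^\interp \neq B^\interp$, hence there is an element $e \in ((A \sqcap \neg B) \sqcup (\neg A \sqcap B))^\interp$; I would define $\interp'$ to agree with $\interp$ everywhere and set $d_0^{\interp'} = e$ for the fresh individual $d_0$, so that $\interp'$ satisfies the new assertions $\As''$ (using again that $\interp'$ validates $\Ts$, so $e$ satisfies every $C \in \Ts$) while leaving $\Delta$ untouched.

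I expect the delicate points to be precisely the places where well-foundedness interacts with the construction: showing the $\botThreePR$ premise unsatisfiable relies essentially on the domain being a subset of some well-founded $S_n$, and in the $\difference$ rule one must realise the witness for $d_0$ by \emph{reusing} an existing domain element rather than adjoining a fresh set, so that the resulting interpretation remains a model of an $\ALCM$ knowledge base. The remaining rules are routine once the correct reading of satisfiability for and-judgements and or-judgements is fixed.
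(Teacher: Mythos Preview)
Your proposal is correct and follows essentially the same approach as the paper: a case analysis on the rules, with the bottom rules handled by showing their premises are unsatisfiable, and the meta-modelling rules $\closePR$, $\equal$, $\difference$ treated exactly as you describe (case split on $a^\interp = b^\interp$; deriving $A^\interp = B^\interp$ from the shared individual; extending the interpretation by assigning the fresh $d_0$ to an existing witness). You actually spell out more than the paper does---the paper omits the $\ALC$ rules entirely and only sketches the bottom rules---but where the two overlap the arguments coincide.
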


\begin{proof}
Note that the statement holds trivially for the bottom rules
since their premises are unsatisfiable. 
  We now give the proof of  the statement for some of the rules that are most interesting. 
  
\begin{description}

	\item[$\closePR$-rule.]  Suppose the premise   
	$\derboxes{\Ts}{\As}{\Mb}$ is satisfiable.
	Then,  there 
	 exists an interpretation 
	 $\interp $  such that 
	 \[\begin{array}{rl}
	  \bullet & \interp \models \bigsqcap \Ts \equiv \top		 \\ 
	  \bullet & \interp \models \As \\
	 \bullet & \interp \models \Mb
	 \end{array}
	 \]
		
	According to the interpretation $\interp$, given two individuals $a,b$ belonging to $\dom (\Mb)$ we have that either 	$a^\interp = b^\interp$ or $a^\interp \not = b^\interp$.\\
	
	We will prove that in both cases 
	$\derboxes{\Ts}{\As [a/b]}{\Mb[a/b]} \ornode   \derboxes{\Ts}{\As \cup \{a\neq b\}}{\Mb}$
	is satisfiable.

	\begin{enumerate}
		\item  Suppose  $a^\interp = b^\interp$.
		  It is easy to prove the following properties:
		\[ \interp \models \As, a^\interp = b^\interp \mbox{ implies } \interp \models \As[a/b] \]
		\[ \interp \models \Mb, a^\interp = b^\interp \mbox{ implies } \interp \models \Mb[a/b] \]
		 		
		 Hence, 
		$ \derboxes{\Ts}{\As[a/b]}{\Mb[a/b]}$ is satisfiable and so is  
		$\derboxes{\Ts}{\As [a/b]}{\Mb[a/b]} \ornode   \derboxes{\Ts}{\As \cup \{a \neq b\}}{\Mb}$.\\

		\item  Suppose  $a^\interp \neq b^\interp$.		
		 This means that  $\interp \models a \neq b$ and then $\interp \models \As \cup \{a \neq b\}$. 		
		Therefore, 
		$\derboxes{\Ts}{\As \cup \{ a \neq b \}}{\Mb}$ is satisfiable
		 and so is 
		$\derboxes{\Ts}{\As[a/b]}{\Mb[a/b]} \ornode   \derboxes{\Ts}{\As \cup \{a \neq b\}}{\Mb}$.\\
	\end{enumerate}
	
	\item[$\equal$-rule.]  
	Suppose the premise   
	$\derboxes{\Ts}{\As}{\Mb \cup \{ a \eqm A, a \eqm B\} }$ is satisfiable.
	Hence,  there 	 exists an interpretation 
	 $\interp $  such that 
	 \[\begin{array}{rl}
	 \bullet & \interp \models \bigsqcap \Ts \equiv \top	 \\
	  \bullet & \interp \models \As \\
	 \bullet & \interp \models \Mb \cup \{ a \eqm A, a \eqm B\}
	 \end{array}.
	 \]
	We need to prove that:
	
	\begin{enumerate}
	\item $ \interp \models \bigsqcap (\Ts \cup \{ (A \sqcup \neg B) \sqcap (B \sqcup \neg A)) \equiv \top$
	\item $\interp \models \As  \cup  \{ ((A \sqcup \neg B) \sqcap (B \sqcup \neg A))(d)  
\suchthat d \in \dom (\As) \cup \dom (\Mb) \cup \{ a \} \}$
	\item $\interp \models \Mb \cup \{a \eqm A\}$\\
	\end{enumerate}

	\begin{enumerate}
	\item Since $\interp \models a \eqm A$, we have that
	\begin{equation} 
	\label{eq:one}
	 a^\interp = A^\interp
	 \end{equation}
	 and since $\interp \models a \eqm B$, we also  have that
	 \begin{equation}
	\label{eq:two}
	 a^\interp = B^\interp
	 \end{equation}
	From (\ref{eq:one}) and (\ref{eq:two}), we get
	\begin{equation}
	 \label{eq:three}
	 A^\interp = B^\interp
	\end{equation}\\
	
	We will prove that $\interp$ validates $(A \sqcup \neg B) \sqcap (B \sqcup \neg A)$, that is
	\[
	((A \sqcup \neg B) \sqcap (B \sqcup \neg A))^\interp = \Delta^\interp.\]
	Applying the definition of interpretation we have that
	\[ ((A \sqcup \neg B) \sqcap (B \sqcup \neg A))^\interp = 
	(A^\interp \cup (\Delta \backslash B^\interp )) \cap (B^\interp \cup (\Delta \backslash A^\interp ) \]
	From (\ref{eq:three}) and replacing in the above expression we obtain
	\[ ((A \sqcup \neg B) \sqcap (B \sqcup \neg A))^\interp = 
	(A^\interp \cup (\Delta \backslash A^\interp )) \cap (A^\interp \cup (\Delta \backslash A^\interp ) = \Delta^ \interp\]
	So, $\interp \models (A \sqcup \neg B) \sqcap (B \sqcup \neg A) \equiv \top$ and then 
	\[
	\interp \models \bigsqcap (\Ts \cup \{ (A \sqcup \neg B) \sqcap (B \sqcup \neg A)) \equiv \top.
	\]
	
	\item Since $((A \sqcup \neg B) \sqcap (B \sqcup \neg A))^\interp = \Delta^\interp $,  we  have that
	\[
	 d^\interp \in ((A \sqcup \neg B) \sqcap (B \sqcup \neg A))^\interp 
	 \]
	  for all 
	$d \in \dom(\As) \cup \dom(\Mb) $.\\
	
	\item $\Mb \cup \{ a \eqm A\} \subset \Mb \cup \{ a \eqm A, a \eqm B\}$ and 
	
	$\interp \models \Mb \cup \{ a \eqm A, a \eqm B\}$, so $\interp \models \Mb \cup \{ a \eqm A \}$
	
	\end{enumerate}
	This means that the conclusion of the $\equal$-rule 
	\[\derboxes{\Ts_1 }{\As_1 }{\Mb \cup \{a \eqm A \}}\] 

where  $\Ts_1  := \Ts \cup \{ (A \sqcup \neg B),  (B \sqcup \neg A) \}$ and 
 $\As_1 =  \As  \cup  \{ ((A \sqcup \neg B) \sqcap (B \sqcup \neg A))(d)  
\suchthat d \in \dom (\As) \cup \dom (\Mb) \cup \{ a \} \}$
is satisfiable. \\

	\item[$\difference$-rule.]
	Suppose the premise   
	$\derboxes{\Ts}{\As \cup \{a \neq b \} }{\Mb \cup \{ a \eqm A, b \eqm B \}}$ is satisfiable.
	Consequently,  there exists an interpretation 
	 $\interp $  such that 
	 \[\begin{array}{rlll}
	 \bullet & \interp \models \bigsqcap \Ts \equiv \top	 \\
	  \bullet & \interp \models \As, a \neq b  &\mbox{thus  } &  \interp \models \As \mbox{ and } \interp \models a \neq b  \\
		\\
	 \bullet & \interp \models \Mb,   a \eqm A, b \eqm B  & \mbox{thus } & \interp \models \Mb, \interp \models  a \eqm A \mbox{ and } \interp \models  b \eqm B\\
																					
	 \end{array}.
	 \]
	
	In order to prove that the conclusion of the rule is satisfiable we only need to prove that
		\begin{enumerate}
		\item $\interp \models (A \sqcap \neg B \sqcup \neg A \sqcap B)(d_0)$ where $d_0$ is a new individual and
		\item $\interp \models C(d_0)$ for all $C \in \Ts$ 
	\end{enumerate}
	as the other conditions follow immediately from the satisfiable property of the premise.\\

	\begin{enumerate}
		\item Since $\interp \models a \neq b$, we have that
					\begin{equation}
					\label{eq:four}
						a^\interp \neq b^\interp
					\end{equation}
					
					Since $\interp \models a \eqm A$ and $\interp \models  b  \eqm B$, we also have that
					\begin{equation}
					\label{eq:five}
						 a^\interp = A^\interp  \ \ \ \ \
						  b^\interp  = B^\interp
						\end{equation}	
						
					It follows from (\ref{eq:four}) and (\ref{eq:five}) that the sets $A^\interp$ and $B^\interp$ are different.
	Hence, there is an element in one of these sets that is not in the other.	
	
	That is to say, there exists an element on the domain of the interpretation 
	$(x \in \Delta^\interp)$ that belongs to the set 
	$(A^\interp \cap (\Delta \backslash B ^\interp)) \cup ((\Delta \backslash A ^\interp) \cap B^\interp)$.
		
	If we use $d_0$, a new individual in $\Kb$, to denote $x$ $({d_0}^\interp = x)$ 
	we have that  $\interp \models (A \sqcap \neg B \sqcup \neg A \sqcap B)(d_0)$.\\

	\item 
Since $\interp \models \bigsqcap \Ts \equiv \top$ we have that $\interp$ validates every concept $C \in \Ts$. This means that $x \in C^\interp$ for all $x \in \Delta^\interp$ 	and for all $C \in \Ts$. In particular: ${d_0}^\interp \in C^\interp$ and hence, $\interp \models C(d_0)$ for all $C \in \Ts$. 
		
	\end{enumerate}
	
	In this way we can conclude that the $\difference$-rule preserves satisfiability.

\end{description}

\end{proof}

\begin{lemma}[Preservation of Satisfiability in the And-or-graph]
\label{lema:preservationnodes}
In an And-or graph $\andorgraph$, for every no-end node with satisfiable label:
\begin{itemize}
	\item if it is an and-node, all its successors have a satisfiable label.
	\item if it is an or-node, at least one of its successors has a satisfiable label.
\end{itemize}
 \end{lemma}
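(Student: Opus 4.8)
The plan is to reduce the node-level statement to the rule-level Lemma~\ref{lema:preservationrules} by a case analysis on the rule $\delta$ that expands the non-end node $v$. Since $v$ is a non-end node, exactly one rule $\delta$ is applied to it (Remark~\ref{remark:graph}), and its successors $w_1, \ldots, w_k$ are labelled by the simple judgements $\judgement_1, \ldots, \judgement_k$ occurring in the conclusion of $\delta$. Because the label of $v$ is the premise of this instance and is assumed satisfiable, Lemma~\ref{lema:preservationrules} gives that the conclusion of $\delta$ is satisfiable. A first useful observation is that $\delta$ cannot be a bottom rule: its conclusion $\derbot$ is unsatisfiable by definition, so if a bottom rule applied to the satisfiable label of $v$, preservation of satisfiability would force $\derbot$ to be satisfiable, a contradiction. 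Hence $\delta$ is a non-bottom rule, and it remains to translate ``the conclusion is satisfiable'' into the claim about the successor labels, using the definitions of satisfiability for and- and or-judgements.

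If $\delta$ is $\transR$ or $\transPR$, then $v$ is an and-node and the conclusion has the form $\bigandnode\{\judgement_1, \ldots, \judgement_k\}$. By definition an and-judgement is satisfiable exactly when every $\judgement_i$ is satisfiable; hence every successor $w_i$ carries a satisfiable label, which is precisely the and-node clause. If instead $\delta$ is $\orR$, $\orPR$ or $\closePR$, then $v$ is an or-node and the conclusion is $\judgement_1 \ornode \judgement_2$; by definition this is satisfiable when $\judgement_1$ or $\judgement_2$ is, so at least one successor carries a satisfiable label, giving the or-node clause.

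Finally I would handle the unary (non-branching static) rules $\andR$, $\forallPR$, $\andPR$, $\equal$ and $\difference$, whose conclusion is a single simple judgement $\judgement_1$. By the convention in Remark~\ref{remark:graph} these nodes are treated as or-nodes, so it suffices that one successor be satisfiable; since $v$ has the unique successor $w_1$ labelled $\judgement_1$ and Lemma~\ref{lema:preservationrules} makes $\judgement_1$ satisfiable, this is immediate. The only delicate point is bookkeeping rather than mathematics: global caching may identify a newly created successor with an existing node, but satisfiability depends solely on the label, which is unchanged under the identification, so caching is harmless. I expect this alignment---matching the and/or classification of each rule to the corresponding clause and confirming that caching does not interfere---to be the only real obstacle, everything else following directly from Lemma~\ref{lema:preservationrules}.
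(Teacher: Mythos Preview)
Your proposal is correct and follows essentially the same approach as the paper: reduce to Lemma~\ref{lema:preservationrules} and then unpack the definition of satisfiability for and- and or-judgements. The paper organizes the case split by whether $v$ is an and-node or an or-node rather than by the rule applied, but the content is the same; your version is in fact slightly more careful, since you explicitly rule out bottom rules and treat the unary (one-successor) case separately, whereas the paper tacitly folds unary rules into the or-node case without comment.
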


\begin{proof}
Let $v$ be a no-end node with label $E_1$ satisfiable.
The successors of $v$ are obtained by applying one of the rules of  Figures   \ref{figure:plainrules} and
\ref{figure:primerules} (let it be $r$) using $E_1$ as premise of the rule. In addition, the labels of the successors nodes are obtained from the conclusion of the rule.
\begin{description}
	\item[$v$ is an and-node] The conclusion of $r$ is a and-judgement of the form: $\judgement_1 \andnode  \ldots \andnode \judgement_k$, where  $\judgement_i$  are simple judgements for all $1 \leq i \leq k$.
		
	Applying  Lemma   \ref{lema:preservationrules} we know that $\judgement_1 \andnode  \ldots \andnode \judgement_k$ is satisfiable since $E_1$ (premise of $r$) is satisfiable.
	So each  $\judgement_i$ is satisfiable.
	Based on the construction of the graph $\andorgraph$ the labels of the successors of $v$ are each $\judgement_i$.
	Consequently, the labels of all the successors of $v$ are satisfiable.
	
	\item [$v$ is an or-node] The conclusion of $r$ is a or-judgement of the form: $\judgement_1 \ornode \judgement_2$,  where  $\judgement_1$ and $\judgement_2$ are simple judgements.
	Since $E_1$ (premise of $r$) is satisfiable, so is $\judgement_1 \ornode \judgement_2$ (applying 
	 Lemma  \ref{lema:preservationrules}). From the definition of satisfiable we know that $\judgement_1$ is satisfiable or $\judgement_2$ is satisfiable.
	
	In these cases the node $v$ has two successors $w_1$, $w_2$ and the labels of these nodes are $\judgement_1$ and $\judgement_2$ respectively. Hence, at least one of the successors of $v$ has a label which is satisfiable.
\end{description}
\end{proof}

\begin{theorem}[Soundness of the Tableau Calculus of $\ALCM$]
\label{theorem:soundess}
Let $\Kb = \KB{\Tb}{\Ab}{\Mb}$
be a knowledge base of $\ALCM$ in negation normal form.
If $\Kb$ is consistent
then the and-or graph for  $\Kb$
 has a consistent marking.
\end{theorem}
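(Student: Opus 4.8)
The plan is to produce a consistent marking by a top-down traversal of the graph, maintaining the invariant that every node placed in the marking carries a satisfiable label. The two preservation lemmas do almost all of the work: Lemma~\ref{lema:preservationrules} guarantees that the rules never destroy satisfiability, and Lemma~\ref{lema:preservationnodes} lifts this to the graph, telling us that a satisfiable and-node has all its successors satisfiable, while a satisfiable or-node has at least one satisfiable successor.

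First I would check that the root label is satisfiable. Since $\Kb$ is consistent, it has a model $\interp$. The root is labelled $\derboxes{\Tb_0}{\Ab_0}{\Mb_0}$ (Definition~\ref{definition:andorgraph}), which differs from $\Kb$ in two harmless ways. Replacing each individual $b$ by its canonical representative $a$ for every $a = b \in \Ab$ preserves satisfaction, because $\interp \models a = b$ forces $a^{\interp} = b^{\interp}$, so the same substitution argument used in the $\closePR$-case of Lemma~\ref{lema:preservationrules} gives $\interp \models \Ab^{*}$ and $\interp \models \Mb^{*}$. Adding the assertions $C(a)$ for every $C \in \Tb$ and every individual $a$ is also sound: since $\Tb$ is in negation normal form, $\interp \models \Tb$ means $C^{\interp} = \Delta^{\interp}$ for each $C \in \Tb$, whence $a^{\interp} \in C^{\interp}$ for every individual and thus $\interp \models C(a)$. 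Therefore $\interp$ is a model of $\derboxes{\Tb_0}{\Ab_0}{\Mb_0}$ and the root is satisfiable.

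Next I would build the marking $\andorgraph'$ as the least subgraph of $\andorgraph$ that contains the root and is closed under the following choices: for each or-node of $\andorgraph$ already in $\andorgraph'$, select exactly one outgoing edge whose target has a satisfiable label (one exists by Lemma~\ref{lema:preservationnodes}), and for each and-node already in $\andorgraph'$, include all its outgoing edges together with their targets (all satisfiable, again by Lemma~\ref{lema:preservationnodes}); an end-node, being an and-node with no successors, contributes nothing further. Because $\andorgraph$ is finite (Remark~\ref{remark:graph}) this fixpoint is reached in finitely many steps, and a straightforward induction on the construction establishes the invariant that every node of $\andorgraph'$ has a satisfiable label. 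By construction $\andorgraph'$ meets the four conditions of Definition~\ref{definition:marking}. Finally, since $\derbot$ is always unsatisfiable, the invariant rules out any node labelled $\derbot$, so the marking is consistent, as required.

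The main subtlety I anticipate is not in the satisfiability bookkeeping but in making the marking construction legitimate over a cyclic, globally cached graph: the per-node choice for an or-node must be fixed as a function of the node alone and independently of how that node is reached, since a single node may be the selected successor of several and-nodes in $\andorgraph'$. One must then argue that defining $\andorgraph'$ as a least fixpoint indeed yields a subgraph satisfying Definition~\ref{definition:marking}, rather than an ill-defined path-dependent selection. Once the choice is a function of the node, this is routine given finiteness, and the invariant transfers through cycles without difficulty.
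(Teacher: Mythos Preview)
Your proposal is correct and follows essentially the same approach as the paper: both construct the marking top-down from the root, use Lemma~\ref{lema:preservationnodes} to pick a satisfiable successor at or-nodes and include all successors at and-nodes, and conclude that no node labelled $\derbot$ can appear. Your treatment is in fact more careful than the paper's on two points it glosses over—the explicit verification that the root label $\derboxes{\Tb_0}{\Ab_0}{\Mb_0}$ is satisfiable, and the observation that the or-node choice must be a function of the node to handle global caching and cycles cleanly.
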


\begin{proof}
Let $\andorgraph$ be the and-or graph for $\Kb$.
We will construct $\andorgraphP$ a consistent marking of $\andorgraph$.
\begin{itemize}
\item First of all, we initialize $\andorgraphP$ with the root of $\andorgraph$.
If  $\Kb = \KB{\Tb}{\Ab}{\Mb}$ is consistent then the label of this node calculated as stated in definition \ref{definition:andorgraph} is satisfiable.

\item Then for each node $v$ in $\andorgraphP$:
		\begin{itemize}
			\item if $v$ is an and-node we add all the successors of $v$ in $\andorgraph$ to $\andorgraphP$, and all the edges between them.
			Applying  Lemma  \ref{lema:preservationnodes}, we deduce that all these new nodes in $\andorgraphP$ have satisfiable labels.
			
			\item if $v$ is an or-node we add to  $\andorgraphP$ the successors of $v$ in $\andorgraph$ having a satisfiable label. We know this node exists from  Lemma  \ref{lema:preservationnodes}.
	
	\end{itemize}
\end{itemize}

It is easy to show that $\andorgraphP$ is a marking of $\andorgraph$ (see Definition \ref{definition:marking}).
The node we use to initialize $\andorgraphP$ has a satisfiable label and we ensure that each node we add has a satisfiable label, so in $\andorgraphP$ there is no node with label $\derbot$. Consequently $\andorgraphP$ is a consistent marking of $\andorgraph$.
Since $\andorgraphP$ is a subgraph of a finite graph (see Remark \ref{remark:graph}),  it is finite.

\end{proof}

\subsection{Completeness}

We define three notions of saturated \graph{\setRoles} \
1) for a Tbox $\Tb$, 2) for an $\ALC$-knowledge base $\KbALC{\Tb}{\Ab}$
and 3) for an $\ALCM$-knowledge base $\KB{\Tb}{\Ab}{\Mb}$.
These three notions are  conditions that ensure satisfiability
of a set $\Xs$ of concepts w.r.t. $\Tb$, 
satisfiability  of a $\ALC$-knowledge base
 $\KbALC{\Tb}{\Ab}$ and satisfiability  of  an $\ALCM$-knowledge base
 $\KB{\Tb}{\Ab}{\Mb}$ respectively
(see Theorems 
\ref{theorem:saturatedgraphvariable}, 
\ref{theorem:saturatedgraphALC} and  \ref{theorem:saturatedgraphALCM}).
They correspond to the so-called ``tableau'' in Description Logic~\cite{Horrocks1999,Motz2015}.
A  saturated \graph{\setRoles} can be seen as an abstraction of a model. 
The weaker notion of consistent model graph given by Nguyen and Szalas  \cite{Nguyen2009} 
includes
clauses 1-5 of Definition \ref{definition:saturatedgraph}
but not the last one. 
Hence, 
it only ensures satisfiability of a set of concepts with respect
to an empty Tbox.
The sufficient conditions for satisfiability (Theorems 
\ref{theorem:saturatedgraphvariable}, 
\ref{theorem:saturatedgraphALC} and  \ref{theorem:saturatedgraphALCM})
structure  the proof of completeness and make it more neat.

\begin{definition}[\graph{\setRoles}]
\label{definition:Rgraph}
We say that $(\Delta, \lab, \Eps)$ is an {\em \graph{\setRoles}} \  if

\begin{itemize}
\item
$\Delta$ is a non-empty set,
\item
$\lab$  maps each element in $\setS$ to a set of concepts, 

\item
 $\Eps : \setRoles \rightarrow 2^{\setS \times \setS}$ 
maps each role in $\setRoles$  to a set of pairs of elements in $\setS$.

\end{itemize}   

\end{definition}

\begin{definition}[Saturated \graph{\setRoles}]
\label{definition:saturatedgraph}
We say that $\T = (\setS, \lab, \Eps)$ is a 
{\em saturated \graph{\setRoles} } \ for a Tbox $\Tb$ if
$\T = (\setS, \lab, \Eps)$ is an \graph{\setRoles}
that  satisfies the following properties
for all $x,y \in \setS$, $R \in \setRoles$ and
 concepts  $C, C_1, C_2$.
\begin{enumerate}
 \item \label{condition:negRestructuraTb}
 If $\neg C \in \lab(x)$, then $ C \not \in \lab(x)$.

 \item \label{condition:capRestructuraTb}
 If $C_1 \sqcap C_2 \in \lab(x)$, then $C_1 \in \lab(x)$ and $C_2 \in \lab(x)$. 
 \item  \label{condition:cupRestructuraTb}
 If $C_1 \sqcup C_2 \in \lab(x)$, then $C_1 \in \lab(x)$ or $C_2 \in \lab(x)$. 
 \item \label{condition:forallRestructuraTb}
 If $\forall R.C \in \lab(x)$ and $(x,y) \in \Eps(R)$, then $C \in \lab(y)$. 
 \item \label{condition:existsRestructuraTb}
  If $\exists R.C \in \lab(x)$, 
  then there is some $y \in \setS$ such that $(x, y) \in \Eps(R)$ and $C \in \lab(y)$. 
	
 \item \label{condition:TbRestructuraTb}
 If $C \in \Tb$ then $C \in \lab(x)$ for all $x \in \setS$. 
\end{enumerate}

\end{definition}

\begin{definition}
\label{definition:interpretationinducedbygraph}
The interpretation $\interp$ induced by an \graph{\setRoles} is defined
as follows.
\[
\begin{array}{lll}
\Delta^\interp &= \setS \\
A^\interp &  := \{ x \in \Delta \suchthat A \in \lab(x) \}
\\
R^\interp & := \Eps(R)
\end{array}
\] 

\end{definition}

As usual, it is enough for an interpretation as the one given above  to
define it only  for atomic concepts $A$.
Next lemma gives  a characterization of that interpretation for complex concepts.

\begin{lemma}
\label{lemma:conceptinterp}
Let 
$\T = (\setS, \lab, \Eps)$ be a
saturated \graph{\setRoles} \ for
a Tbox $\Tb$ and $\interp$ be the interpretation induced by  $\T$. For every $x \in \setS$, if
$C \in \lab(x)$ then  $x \in C^\interp$.
\end{lemma}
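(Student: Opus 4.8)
The plan is to prove the statement by structural induction on the concept $C$, exactly as suggested by the remark that "it is enough for an interpretation as the one given above to define it only for atomic concepts." The induction hypothesis will be that for every proper subconcept $C'$ of $C$ and every $x \in \setS$, if $C' \in \lab(x)$ then $x \in C'^\interp$. Since the saturation conditions of Definition~\ref{definition:saturatedgraph} only guarantee implications in one direction (if a compound concept is in a label then its components are suitably placed), I expect a one-directional statement to go through cleanly, whereas a full biconditional would fail. So I would only prove the stated implication.

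First I would handle the base cases. For an atomic concept $A$, the claim $A \in \lab(x) \Rightarrow x \in A^\interp$ is immediate from Definition~\ref{definition:interpretationinducedbygraph}, since $A^\interp = \{x \mid A \in \lab(x)\}$. For a negated atomic concept $\neg A$, I would use saturation condition~\ref{condition:negRestructuraTb}: if $\neg A \in \lab(x)$ then $A \notin \lab(x)$, hence $x \notin A^\interp$ by the definition of $A^\interp$, so $x \in (\neg A)^\interp = \Delta \setminus A^\interp$. Since everything is in negation normal form, these are the only atomic-level cases I need.

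Next I would treat the inductive cases one operator at a time, each invoking the matching saturation clause together with the induction hypothesis. For $C_1 \sqcap C_2 \in \lab(x)$, condition~\ref{condition:capRestructuraTb} gives $C_1, C_2 \in \lab(x)$, so by induction $x \in C_1^\interp$ and $x \in C_2^\interp$, whence $x \in (C_1 \sqcap C_2)^\interp$. The $\sqcup$ case uses condition~\ref{condition:cupRestructuraTb} symmetrically. For $\forall R.C \in \lab(x)$, I would take an arbitrary $y$ with $(x,y) \in R^\interp = \Eps(R)$; condition~\ref{condition:forallRestructuraTb} yields $C \in \lab(y)$, so by induction $y \in C^\interp$, giving $x \in (\forall R.C)^\interp$. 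For $\exists R.C \in \lab(x)$, condition~\ref{condition:existsRestructuraTb} supplies a witness $y$ with $(x,y) \in \Eps(R)$ and $C \in \lab(y)$, and the induction hypothesis finishes it.

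The main subtlety, rather than obstacle, is being careful about what the induction is really on: the $\forall$ and $\exists$ cases reduce the claim at $x$ to the hypothesis applied to a \emph{different} element $y$ but a strictly smaller concept $C$, so the induction must be on the structure of the concept (uniformly over all elements of $\setS$) and not on anything attached to a fixed node. Once the induction is set up this way, every case is a direct application of exactly one saturation clause plus the hypothesis, and the Tbox condition~\ref{condition:TbRestructuraTb} plays no role in this lemma (it will be used later to show the induced interpretation is a model of $\Tb$). I would therefore present the proof as a clean case analysis on the outermost connective of $C$.
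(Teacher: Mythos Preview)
Your proposal is correct and follows essentially the same approach as the paper: a structural induction on $C$, with each case invoking the corresponding saturation clause from Definition~\ref{definition:saturatedgraph} and the induction hypothesis. The paper's proof only spells out the cases $A$, $\neg A$, $C_1 \sqcap C_2$, and $\exists R.D$, whereas you also treat $\sqcup$ and $\forall$ and make explicit that the induction is uniform over all $x \in \setS$; these are helpful clarifications but not a different method.
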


\begin{proof}
This is proved by induction on $C$. We prove some cases. 

\begin{itemize}

\item Suppose $C = A$. This is the base case.
By Definition \ref{definition:interpretationinducedbygraph},
we have that $x \in A^{\interp}$.

\item  Suppose $C = \neg D$. Since $C$ is in
negation normal form, we have that $D $ is an atomic
concept, say $A$.
Since $\T$ is a saturated \graph{\setRoles}, by Definition \ref{definition:saturatedgraph}, $A \notin \lab(x)$.
By Definition \ref{definition:interpretationinducedbygraph},
we have that $x \not \in A^{\interp}$.
Hence, $x \in (\neg A)^{\interp}$.

\item Suppose $C = C_1 \sqcap C_2$. By Definition \ref{definition:saturatedgraph}, $C_1 \in \lab(x)$ and $C_2 \in \lab(x)$. By induction hypothesis we have that $x \in C_1^\interp$ and $x \in C_2^\interp$. Then $x \in C_1^\interp \cap C_2^\interp$, so $x \in (C_1 \sqcap C_2)^\interp$, which means that $x \in C^\interp$.

\item Suppose $C = \exists R.D$. By Definition \ref{definition:saturatedgraph}, 
there is some $y \in \setS$ such that $(x, y) \in \Eps(R)$ and $D \in \lab(y)$. 
By Definition \ref{definition:interpretationinducedbygraph}, we have that $(x, y) \in R^\interp$. 
By induction hypothesis, 
 $y \in D^\interp$. Then $x \in (\exists R.D)^\interp$, and so $x \in C^\interp$.
\end{itemize}

\end{proof}

\begin{theorem}
\label{theorem:saturatedgraphvariable}
Let  $\T = (\setS, \lab, \Eps)$ be a saturated \graph{$\setRoles$} \  
for a Tbox $\Tb$.
If $\Xs \subseteq \lab(x_0)$ for some $x_0 \in \setS$
then $\Xs$ is satisfiable w.r.t. $\Tb$.
\end{theorem}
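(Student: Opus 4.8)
The plan is to take the interpretation $\interp$ induced by the saturated \graph{\setRoles} $\T$, exactly as in Definition~\ref{definition:interpretationinducedbygraph}, and show that it witnesses satisfiability of $\derconcepts{\Tb}{\Xs}$. Recall that this amounts to checking two conditions: that $\interp$ is a model of $\Tb$, i.e. $\interp \models \bigsqcap \Tb \equiv \top$, and that $(\bigsqcap \Xs)^\interp \not= \emptyset$. Note first that $\interp$ is a genuine interpretation, since $\Delta^\interp = \setS$ is non-empty by Definition~\ref{definition:Rgraph}. The whole argument runs under the standing assumption that all concepts are in negation normal form, which is precisely what Lemma~\ref{lemma:conceptinterp} needs.

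First I would verify the Tbox condition. Fix an arbitrary $C \in \Tb$. By the saturation clause~\ref{condition:TbRestructuraTb} of Definition~\ref{definition:saturatedgraph}, we have $C \in \lab(x)$ for every $x \in \setS$. Since $\Delta^\interp = \setS$, Lemma~\ref{lemma:conceptinterp} gives $x \in C^\interp$ for every $x \in \Delta^\interp$, hence $C^\interp = \Delta^\interp$, i.e. $\interp$ validates $C$. As $C$ was an arbitrary concept of $\Tb$, the interpretation $\interp$ validates every concept in $\Tb$, which is exactly $\interp \models \bigsqcap \Tb \equiv \top$.

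Next I would handle $\Xs$, using the hypothesis $\Xs \subseteq \lab(x_0)$. For each $C \in \Xs$ we then have $C \in \lab(x_0)$, so Lemma~\ref{lemma:conceptinterp} yields $x_0 \in C^\interp$. Therefore $x_0 \in \bigcap_{C \in \Xs} C^\interp = (\bigsqcap \Xs)^\interp$, so this set is non-empty. Combining this with the Tbox condition established above shows that $\interp$ satisfies $\derconcepts{\Tb}{\Xs}$, as required.

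There is essentially no hard part remaining at this level: the substantive content, namely the inductive transfer from membership in $\lab(x)$ to membership in $C^\interp$ for complex concepts, has already been discharged in Lemma~\ref{lemma:conceptinterp}. Given that lemma, the present statement is just a matter of reading off the two clauses of the definition of satisfiability w.r.t.\ a Tbox. The only point deserving mild care is to invoke the correct saturation clause for the Tbox part (clause~\ref{condition:TbRestructuraTb}, quantified over \emph{all} $x \in \setS$) so that one concludes $C^\interp = \Delta^\interp$ rather than merely $C^\interp \neq \emptyset$, since a model of $\Tb$ must \emph{validate} each of its concepts.
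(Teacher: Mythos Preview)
Your proposal is correct and follows essentially the same approach as the paper: take the induced interpretation $\interp$, use Lemma~\ref{lemma:conceptinterp} to get $x_0 \in C^\interp$ for every $C \in \Xs$, and invoke the last clause of Definition~\ref{definition:saturatedgraph} for the Tbox condition. You spell out the Tbox step in slightly more detail than the paper (explicitly applying Lemma~\ref{lemma:conceptinterp} to conclude $C^\interp = \Delta^\interp$ rather than just citing the clause), but the argument is the same.
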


\begin{proof}
Let $\interp$ be an interpretation induced by $\T$.
It follows from Lemma \ref{lemma:conceptinterp} that 
$x_0 \in C^\interp $ for all $C \in \Xs$.
Hence $(\bigsqcap \Xs)^\interp \not = \emptyset$.
We also have that $\interp$ is a model of $\Tb$
from the last clause of Definition \ref{definition:saturatedgraph}.
\end{proof}

\begin{definition}[Saturated \graph{\setRoles} for $\KbALC{\Tb}{\Ab}$]
\label{definition:saturatedgraphALC}
We say that $\T = (\setS, \lab, \Eps)$ is a 
{\em saturated \graph{\setRoles} } \ for $\KbALC{\Tb}{\Ab}$ if
$\T = (\setS, \lab, \Eps)$ is a saturated  \graph{\setRoles} \
for $\Tb$ that satisfies the following properties.
\begin{enumerate}

\item \label{condition:dominioRestructuraALC} $\Delta$ contains all the individuals of $\Ab$.

 \item  \label{condition:assertIndivRestructuraALC}If $C(a) \in \Ab$, then $C \in \lab(a)$. 

 \item \label{condition:assertRolesRestructuraALC} If $R(a, b) \in \Ab$, then  $(a, b) \in \Eps(R)$. 

 \item \label{condition:diffRestructuraALC} If $a \not = b  \in \Ab$ then $a$ and $b$ are syntactically different.

 \end{enumerate}
 \end{definition}
 
 \begin{theorem}
\label{theorem:saturatedgraphALC}
 Let $\Ab$ an Abox without equality axioms and  
 $\T = (\setS, \lab, \Eps)$ be a 
{\em saturated \graph{\setRoles} } \ for $\KbALC{\Tb}{\Ab}$.
Then, the $\ALC$-knowledge base $\KbALC{\Tb}{\Ab}$ is satisfiable (consistent).

 \end{theorem}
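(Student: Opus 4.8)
The plan is to take the interpretation $\interp$ induced by $\T$ in the sense of Definition~\ref{definition:interpretationinducedbygraph} and show it is a model of $\KbALC{\Tb}{\Ab}$. Since that definition only fixes $\interp$ on atomic concepts and roles, I would first extend it to individuals by setting $a^\interp = a$ for every individual $a$ occurring in $\Ab$. This assignment is legitimate precisely because condition~\ref{condition:dominioRestructuraALC} of Definition~\ref{definition:saturatedgraphALC} guarantees that every such $a$ is already an element of $\setS = \Delta^\interp$.

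First I would check $\interp \models \Tb$. As $\T$ is in particular a saturated \graph{\setRoles} for $\Tb$, condition~\ref{condition:TbRestructuraTb} of Definition~\ref{definition:saturatedgraph} gives $C \in \lab(x)$ for every $C \in \Tb$ and every $x \in \setS$; Lemma~\ref{lemma:conceptinterp} then yields $x \in C^\interp$ for all $x$, so $C^\interp = \Delta^\interp$ and $\interp$ validates each $C \in \Tb$. (This is the same reasoning underlying Theorem~\ref{theorem:saturatedgraphvariable}.) Hence $\interp$ is a model of the Tbox.

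Next I would verify $\interp \models \Ab$ assertion by assertion, using the extra conditions of Definition~\ref{definition:saturatedgraphALC}, Lemma~\ref{lemma:conceptinterp}, and the convention $a^\interp = a$. For a concept assertion $C(a) \in \Ab$, condition~\ref{condition:assertIndivRestructuraALC} gives $C \in \lab(a)$, so Lemma~\ref{lemma:conceptinterp} gives $a^\interp = a \in C^\interp$. For a role assertion $R(a,b) \in \Ab$, condition~\ref{condition:assertRolesRestructuraALC} gives $(a,b) \in \Eps(R) = R^\interp$, that is $(a^\interp, b^\interp) \in R^\interp$. For an inequality $a \not = b \in \Ab$, condition~\ref{condition:diffRestructuraALC} tells us $a$ and $b$ are syntactically distinct, whence $a^\interp = a \not = b = b^\interp$. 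Finally, $\Ab$ contains no equality assertions by hypothesis, so there is nothing further to check. Collecting these, $\interp$ satisfies both $\Tb$ and $\Ab$, so $\interp \models \KbALC{\Tb}{\Ab}$ and the knowledge base is consistent.

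The step I expect to be the main obstacle is not any computation but the conceptual point already flagged above: Definition~\ref{definition:interpretationinducedbygraph} says nothing about how individuals are interpreted, so the real content is to observe that individuals must be read as themselves and that condition~\ref{condition:dominioRestructuraALC} is exactly what makes this well defined. The hypothesis that $\Ab$ has no equality axioms is what keeps the identity assignment sound, since otherwise two syntactically distinct individuals forced equal by $\Ab$ could not both be interpreted as themselves; the preceding initialization and $\closePR$-rule are what remove such equalities, which is why the theorem may assume them away.
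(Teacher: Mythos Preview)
Your proposal is correct and matches the paper's proof essentially line for line: define $a^\interp = a$, invoke Lemma~\ref{lemma:conceptinterp} together with condition~\ref{condition:TbRestructuraTb} for the Tbox, and then use conditions~\ref{condition:assertIndivRestructuraALC}--\ref{condition:diffRestructuraALC} of Definition~\ref{definition:saturatedgraphALC} to dispatch each kind of Abox assertion. Your additional remarks on why condition~\ref{condition:dominioRestructuraALC} legitimizes $a^\interp = a$ and why the no-equalities hypothesis is needed are helpful elaborations but do not change the argument.
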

 
 \begin{proof}
Let $\interp$ be the interpretation induced by  $\T$.
For each individual $a$ of $\Ab$, we define
$a^\interp = a$. 
 We will show that $\interp$ is a model for $\KbALC{\Tb}{\Ab}$. \\
$\interp$ is a model of $\Tb$ by the last clause of Definition \ref{definition:saturatedgraph} and Lemma \ref{lemma:conceptinterp}. To show that $\interp$ is a model of $\Ab$, we will prove that $a^\interp \in C^\interp$ for all  $C(a) \in  \Ab$, $(a^\interp, b^\interp) \in R^\interp$ for all $R(a, b)  \in \Ab$ and  $a^\interp \not= b^\interp$  for all $a \not= b  \in \Ab$.

\begin{itemize}

\item Suppose $C(a) \in  \Ab$. By Definition \ref{definition:saturatedgraphALC}, $C \in \lab(a)$ and by Lemma \ref{lemma:conceptinterp}, $a^\interp = a \in C^\interp$.

\item Suppose $R(a, b) \in  \Ab$. By Definition \ref{definition:saturatedgraphALC}, $(a, b) \in \Eps(R)$ and by Definition \ref{definition:interpretationinducedbygraph}, $(a^\interp, b^\interp)  = (a, b) \in R^\interp$.
  
\item Suppose $a \not= b  \in \Ab$. By Definition \ref{definition:saturatedgraphALC}, $a$ and $b$ are  syntactically different, so $a^\interp \not= b^\interp$.

\end{itemize}

 \end{proof}

 \begin{definition}[Circularity of an \graph{\setRoles} w.r.t an Mbox]
\label{definition:circularMbox}
We say that $\T = (\setS, \lab, \Eps)$ has a circularity w.r.t. $\Mb$ 
if   there is a sequence of meta-modelling axioms 
$ a_1  \eqm A_1 $, $ a_2  \eqm A_2 $, $ \ldots$, $ a_n \eqm  A_n$ all in $\Mb$ such that 
$A_1 \in \lab(a_2)$, $A_2 \in \lab(a_3)$, \ldots, $A_n \in \lab (a_1)$. 
\end{definition}

\begin{definition}[Saturated \graph{\setRoles} \ for $\KB{\Tb}{\Ab}{\Mb}$]
\label{definition:saturatedgraphALCM}
We say that $\T = (\setS, \lab, \Eps)$ is a 
{\em saturated \graph{\setRoles} } \ for $\KB{\Tb}{\Ab}{\Mb}$ if
$\T = (\setS, \lab, \Eps)$ is a saturated  \graph{\setRoles} \
for $\KbALC{\Tb}{\Ab}$ that satisfies the following properties.
 \begin{enumerate} 
 
 \item  \label{condition:dominioRestructureALCM} $\Delta$ contains all the individuals of $\Mb$.

 \item \label{condition:circularitiesRestructureALCM} $\T$ has no circularities w.r.t. $\Mb$.
 
  \item \label{condition:eqConceptsRestructureALCM} if $a \eqm A\in \Mb$ and 
$a \eqm B\in \Mb$, then   $A = B$, i.e. $A$ and $B$ are syntactically equal. 

  \item \label{condition:difIndivRestructureALCM} if $a$ and $b$ are syntactically different, 
 $a \eqm A\in \mathcal{M}$ and $b \eqm B\in \mathcal{M}$, then there is some $t \in \setS$ such that $A \sqcap \neg B \sqcup B \sqcap \neg A \in \lab(t)$. 
 \end{enumerate}
\end{definition}

Note that the set $\setS$ of  a saturated \graph{\setRoles} for $\KB{\Tb}{\Ab}{\Mb}$ 
contains the individuals of $\Mb$
as well as the ones of $\Ab$.
In Theorem~\ref{theorem:constructionsaturatedstructure}
we  construct a \graph{\setRoles} for
a knowledge base $\Kb = \KB{\Tb}{\Ab}{\Mb}$
where no other rule is applicable to $\Kb$ except for the 
 $\transPR$-rule. In particular,  $\Ab$ does not contain equalities
and $\Mb$ does not contain two axioms with the same individual $a$, i.e.
if $a \eqm A$ and $a \eqm B$ are in
$\Mb$ then $A$ and $B$ are syntactically equal. 

 \begin{definition}[From Basic Objects to Sets]
 \label{definition:unfold}
Let  $\T = (\setS, \lab, \Eps)$ be a  saturated \graph{$\setRoles$}
for $\KB{\Tb}{\Ab}{\Mb}$. 
For  $x \in \setS$  we define $\unfold(x)$ as follows. 
\[
\begin{array}{lll}
\unfold(x)  = & x  &  \mbox{ if } x \not \in \dom(\Mb)  \\
\unfold(x) = & \{ \unfold(y) \mid A \in \lab(y)  \} 
  & \mbox{ otherwise, i.e. $x\in \dom(\Mb)$ and $x \eqm A \in \Mb$ } 
\end{array}
\] 
\end{definition}
Since  $\setS$ contains the individuals of $\Mb$,
 an element of $\setS$ either belongs to $\dom(\Mb)$ or not.
In case, $x \in \dom(\Mb)$ then we have that $x \eqm A \in \Mb$.

\begin{example}
 We consider the example of Figure \ref{figure:boxesforfirstview} and
 add a new individual $hydrographic$ and the meta-modelling axiom

 \[
 hydrographic \eqm HydrographicObject
 \]
 Here we have for example that $river$ is an individual with
 meta-modelling. As such, its interpretation should be a set
 and not a basic object. The set associated to $river$  is given
 by the function $\unfold$ and it is as follows.
 \[
 \begin{array}{lll}
 \unfold{(river)} & = & \{queguay, santaLucia\}
 \end{array}
 \]
 The individual $hydrographic$ has also meta-modelling. But
 its inhabitants also have meta-modelling. 
 The set associated to $hydrographic$ is a set of sets
 given as follows. 
 \[
 \begin{array}{lll}
 \unfold{(hydrographic)} &
 = & \{\{queguay, santaLucia\}, \\
 & & \{deRocha, delSauce\} \} 
 \end{array}
 \]
  On the other hand, $queguay$ does not have meta-modelling
  and we define $\unfold$ as follows.
 \[ \unfold{(queguay)} = queguay.\]
  \end{example}

\begin{theorem}[Correctness of the recursive definition]
\label{theorem:recursion}
Let $\T = (\setS, \lab, \Eps)$ be a 
 saturated \graph{$\setRoles$}
for $\KB{\Tb}{\Ab}{\Mb}$.  
The function $\unfold$ is a correct recursive definition. 
\end{theorem}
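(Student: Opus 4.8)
The plan is to obtain $\unfold$ as an instance of the Recursion Principle (Definition \ref{definition:recursionprinciple}), so the task reduces to exhibiting the well-founded relation that governs the two defining clauses. First I would define a binary relation $\rel$ on $\setS$ by setting $y \rel x$ exactly when $x \in \dom(\Mb)$, $x \eqm A \in \Mb$ and $A \in \lab(y)$. By condition \ref{condition:eqConceptsRestructureALCM} of Definition \ref{definition:saturatedgraphALCM}, each $x \in \dom(\Mb)$ has a single associated concept $A$ with $x \eqm A \in \Mb$, so $\rel$ is well defined. The two clauses defining $\unfold$ then match the shape required by the Recursion Principle: when $x \notin \dom(\Mb)$ the initial segment $\{y \mid y \rel x\}$ is irrelevant and $\unfold(x)=x$ uses no recursive call, whereas when $x \in \dom(\Mb)$ with $x \eqm A$ the value $\unfold(x) = \{\unfold(y) \mid A \in \lab(y)\}$ depends only on the values of $\unfold$ on the initial segment $\{y \mid y \rel x\}$.

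It remains to show that $\rel$ is well founded on $\setS$, which is the crux of the argument. By Lemma \ref{lemma:decreasingsequences} it suffices to rule out infinite $\rel$-decreasing sequences. Suppose, for contradiction, that $\langle x_i \rangle_{i \in \mathbb{N}}$ satisfies $x_{i+1} \rel x_i$ for all $i$. Every $x_i$ then has a $\rel$-predecessor, so $x_i \in \dom(\Mb)$ for all $i$; write $x_i \eqm A_i \in \Mb$, so that $A_i \in \lab(x_{i+1})$ by the definition of $\rel$. Since $\Mb$ is finite, $\dom(\Mb)$ is finite, and by the pigeonhole principle there are indices $i < j$ with $x_i = x_j$.

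From the finite segment $x_i, x_{i+1}, \ldots, x_{j-1}$ I would read off a circularity: the meta-modelling axioms $x_i \eqm A_i, x_{i+1} \eqm A_{i+1}, \ldots, x_{j-1} \eqm A_{j-1}$ all lie in $\Mb$, and one has $A_i \in \lab(x_{i+1}), \ldots, A_{j-2} \in \lab(x_{j-1})$ together with $A_{j-1} \in \lab(x_j) = \lab(x_i)$. This is precisely a circularity of $\T$ w.r.t. $\Mb$ in the sense of Definition \ref{definition:circularMbox}, contradicting condition \ref{condition:circularitiesRestructureALCM} of Definition \ref{definition:saturatedgraphALCM}. Hence no infinite decreasing sequence exists, $\rel$ is well founded, and the Recursion Principle yields a unique function $\unfold$ satisfying the two defining equations. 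I expect the main obstacle to be exactly this translation of a hypothetical infinite $\rel$-descent into a finite cycle; the finiteness of $\dom(\Mb)$ together with the no-circularity clause of the saturated structure is what makes it go through.
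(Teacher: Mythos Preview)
Your proposal is correct and follows essentially the same approach as the paper: define $y \rel x$ iff $x \eqm A \in \Mb$ and $A \in \lab(y)$, use the finiteness of $\dom(\Mb)$ together with the no-circularity clause of Definition~\ref{definition:saturatedgraphALCM} to turn a hypothetical infinite $\rel$-descent into a cycle, and then invoke the Recursion Principle. If anything, your write-up is a bit more careful than the paper's in explicitly checking that the defining clauses of $\unfold$ have the shape required by Definition~\ref{definition:recursionprinciple} and in using condition~\ref{condition:eqConceptsRestructureALCM} to ensure the associated concept $A$ is unique.
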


\begin{proof} It follows from 
the third clause of Definition \ref{definition:saturatedgraphALCM} that
$\unfold$ is indeed a function. 
In order to prove that is a correct recursive definition, we
apply the recursion principle on well-founded relations (see Definition \ref{definition:recursionprinciple}).
\\  
For this,  we define the relation 
$\prec$  on  $\setS$  as 
 $y \prec a$  iff
$A \in \forest(y)$ and $a \eqm A \in \Mb$.
Since 
$\forest$ has no circularities
w.r.t.\ $\Mb$, it is easy to prove that
$\prec$ is well-founded. 
Suppose towards a contradiction that 
$\prec$  is not well-founded. 
It follows from Lemma  \ref{lemma:decreasingsequences},
that  there exists an infinite $\prec$-decreasing sequence, 
i.e. there is $\langle x_{i}\rangle_{i \in \mathbb{N}}$ such 
that $x_{i+1} \prec  x_i$ and $x_i \in \Delta$ for all $i \in \mathbb{N}$.
 \[ \ldots \prec x_{i+1} \prec x_i \prec \ldots \prec x_1
 \]
 By definition of $\prec$, 
 we have that
 $x_{i} \in \dom(\Mb)$ for all $i \in \mathbb{N}$.
  Since the Mbox is finite,
 there exists an element  in the above sequence 
 that should occurs at least twice, i.e. 
 $x_{i+1} = x_{i+n+1}= a_1$ for some $i, n \in \mathbb{N}$. Hence, 
 we have a cycle
 \[  \equalto{x_{i+n+1}}{a_1}  \prec \equalto{x_{i+n}}{a_n}  \prec  \equalto{x_{n+i-1}}{a_{n-1}}
  \prec \ldots \prec \equalto{x_{i+2}}{a_2} \prec \equalto{x_{i+1}}{a_1} 
 \]
It is easy to see that 
 this contradicts the fact
 that $\T$ has no circularities w.r.t. $\Mb$.
 \\
 Since $\prec$ is well-founded, 
we can  apply the recursion principle  
in Definition \ref{definition:unfold}.
 Note that in the recursive step of that definition, we have that
$y \prec x$. 
\end{proof}

\begin{lemma}
\label{lemma:setisinSn}
Let $\T = (\setS, \lab, \Eps)$ be a 
saturated \graph{$\setRoles$}
for $\KB{\Tb}{\Ab}{\Mb}$.  

If  $S_0 = \{ x \in \setS \mid x \notin \dom(\Mb) \}$ 
 then
\[\unfold(x) \in S_{\sharp \Mb}\]
\end{lemma}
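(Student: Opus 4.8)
The plan is to prove the sharper statement $\unfold(x) \in S_{\maxprec(x)}$ for every $x \in \setS$, where $\maxprec(x)$ denotes the length of the longest $\prec$-decreasing chain starting at $x$ and $\prec$ is the well-founded relation from the proof of Theorem~\ref{theorem:recursion} (so $y \prec x$ iff $A \in \lab(y)$ and $x \eqm A \in \Mb$). Once this is established, I would bound $\maxprec(x) \le \sharp\Mb$ and appeal to the monotonicity $S_m \subseteq S_n$ for $m \le n$ (which follows from $S_n \subseteq S_{n+1}$, noted right after Definition~\ref{definition:domainSet}) to conclude $\unfold(x) \in S_{\maxprec(x)} \subseteq S_{\sharp\Mb}$, as required.

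First I would check that $\maxprec(x)$ is a well-defined natural number by bounding the length of every $\prec$-decreasing chain. Consider a chain $x_n \prec x_{n-1} \prec \cdots \prec x_0$. Whenever $x_{i+1} \prec x_i$ we have $x_i \eqm A_i \in \Mb$ for some $A_i$, so each of $x_0, \dots, x_{n-1}$ lies in $\dom(\Mb)$. Moreover these elements are pairwise distinct: if $x_i = x_j$ with $i < j \le n-1$, then $x_i \succ x_{i+1} \succ \cdots \succ x_j = x_i$ would close a cycle, which by the definition of $\prec$ contradicts the absence of circularities w.r.t.\ $\Mb$ (clause~\ref{condition:circularitiesRestructureALCM} of Definition~\ref{definition:saturatedgraphALCM}). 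Hence $n \le \sharp\dom(\Mb) \le \sharp\Mb$, so every chain has length at most $\sharp\Mb$ and in particular $\maxprec(x) \le \sharp\Mb$ for all $x$.

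Next I would carry out the induction, which is legitimate because $\prec$ is well-founded by Theorem~\ref{theorem:recursion}, so the induction principle of Definition~\ref{definition:inductionprinciple} applies. If $x \notin \dom(\Mb)$, then $x$ is $\prec$-minimal, $\maxprec(x) = 0$, and $\unfold(x) = x \in S_0$ by Definition~\ref{definition:unfold}, so the claim holds. If $x \in \dom(\Mb)$ with $x \eqm A \in \Mb$, then $\unfold(x) = \{ \unfold(y) \mid A \in \lab(y) \}$. For each such $y$ we have $y \prec x$, hence $\maxprec(y) \le \maxprec(x) - 1$, and the induction hypothesis together with monotonicity gives $\unfold(y) \in S_{\maxprec(y)} \subseteq S_{\maxprec(x)-1}$. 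Therefore $\unfold(x) \subseteq S_{\maxprec(x)-1}$, so $\unfold(x) \in \mathcal{P}(S_{\maxprec(x)-1}) \subseteq S_{\maxprec(x)}$ by Definition~\ref{definition:domainSet}, completing the step.

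The two cases of the induction are routine; the real content lies in the uniform bound $\maxprec(x) \le \sharp\Mb$, whose only non-trivial ingredient is turning a repeated element of a $\prec$-chain into a circularity so that clause~\ref{condition:circularitiesRestructureALCM} of Definition~\ref{definition:saturatedgraphALCM} can be invoked — essentially the same pigeonhole argument already used in the proof of Theorem~\ref{theorem:recursion}. I would expect this bounding step to be the main obstacle, but it is mild given that the finiteness of $\Mb$ and the no-circularity condition are already in hand.
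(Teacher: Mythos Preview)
Your proposal is correct and follows essentially the same approach as the paper: define $\maxprec(x)$, bound it by $\sharp\Mb$ via the pigeonhole/no-circularity argument, prove $\unfold(x)\in S_{\maxprec(x)}$ by induction, and conclude by monotonicity of $S_n$. You have in fact supplied more detail than the paper, which merely asserts that the bound and the induction are ``not difficult'' and ``easy''; the only cosmetic difference is that the paper phrases the induction as induction on the natural number $\maxprec(x)$ rather than well-founded induction on $\prec$, but the two are equivalent here.
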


\begin{proof}
 Let $\maxprec(x)$ 
be  the {\em maximal length} 
 of all descending $\prec$-sequences starting from $x \in \setS$.
 This number is finite because $\prec$ is well-founded by
 Theorem \ref{theorem:recursion}. It is not difficult  to show that
 \begin{center}
$\maxprec(x) \leq \sharp\Mb$ for all  $x$ in $\setS$.
\end{center}
It is also easy to prove that
\[
\unfold (x) \in S_{\maxprec(x)}
\]
by induction on $\maxprec(x)$.
Then $\unfold(x) \in S_{\maxprec(x)}
\subseteq S_{\sharp\Mb}$ since
$S_n$ is a monotonic function on $n \in \mathbb{N}$ (see Definition \ref{definition:domainSet}).
\end{proof}

\begin{lemma}
\label{lemma:Igualsintsema}
 Let $\Ab$ be an Abox without equality axioms,
 $\T = (\setS, \lab, \Eps)$ be a 
{\em saturated \graph{\setRoles} } \ for $\KB{\Tb}{\Ab}{\Mb}$,
$\{a \eqm A,b \eqm B \} \subseteq \Mb$ and $\interp$ the interpretation induced by $\T$. 
Then,  $a$ and $b$ are syntactically equal if and only if   $A^\interp = B^\interp$.
\end{lemma}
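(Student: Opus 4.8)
The plan is to prove both directions of the biconditional $a$ and $b$ are syntactically equal iff $A^\interp = B^\interp$, where $\interp$ is the interpretation induced by the saturated $\setRoles$-structure $\T$. Recall from Definition~\ref{definition:interpretationinducedbygraph} that $A^\interp = \{x \in \setS \mid A \in \lab(x)\}$ and similarly for $B$. The key resources are the two meta-modelling conditions of Definition~\ref{definition:saturatedgraphALCM}: condition~\ref{condition:eqConceptsRestructureALCM}, which says that if $a \eqm A$ and $a \eqm B$ are both in $\Mb$ then $A$ and $B$ are syntactically equal, and condition~\ref{condition:difIndivRestructureALCM}, which says that if $a$ and $b$ are syntactically different (with $a \eqm A$ and $b \eqm B$ in $\Mb$) then there is some $t \in \setS$ with $A \sqcap \neg B \sqcup B \sqcap \neg A \in \lab(t)$.

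\textbf{The easy direction.} First I would prove that $a = b$ (syntactic equality) implies $A^\interp = B^\interp$. If $a$ and $b$ are the same individual, then from $a \eqm A \in \Mb$ and $b \eqm B \in \Mb$ we obtain $a \eqm A \in \Mb$ and $a \eqm B \in \Mb$. Condition~\ref{condition:eqConceptsRestructureALCM} of Definition~\ref{definition:saturatedgraphALCM} then forces $A$ and $B$ to be syntactically equal. Since syntactically equal concepts have identical interpretations by definition of $\cdot^\interp$, we conclude $A^\interp = B^\interp$ immediately.

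\textbf{The harder direction via contrapositive.} For the converse, I would argue the contrapositive: if $a$ and $b$ are syntactically different, then $A^\interp \neq B^\interp$. Assuming $a \neq b$ syntactically, condition~\ref{condition:difIndivRestructureALCM} gives a witness $t \in \setS$ with $A \sqcap \neg B \sqcup B \sqcap \neg A \in \lab(t)$. By Lemma~\ref{lemma:conceptinterp} (which states that $C \in \lab(x)$ implies $x \in C^\interp$), we get $t \in (A \sqcap \neg B \sqcup B \sqcap \neg A)^\interp$. Unfolding the interpretation, this means $t \in (A^\interp \cap (\Delta \setminus B^\interp)) \cup ((\Delta \setminus A^\interp) \cap B^\interp)$, so $t$ lies in exactly one of $A^\interp$ and $B^\interp$ but not the other. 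Hence $A^\interp \neq B^\interp$, completing the contrapositive.

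\textbf{Anticipated obstacle.} The main subtlety is making sure the precedence of the operators in $A \sqcap \neg B \sqcup B \sqcap \neg A$ is read correctly so that the unfolding into set operations is right; by the stated precedence order ($\neg$ binds tighter than $\sqcap$, which binds tighter than $\sqcup$) this parses as $(A \sqcap \neg B) \sqcup (B \sqcap \neg A)$, which indeed expresses the symmetric difference membership. The rest is a direct application of Lemma~\ref{lemma:conceptinterp} and Definition~\ref{definition:interpretationinducedbygraph}, with no deep difficulty; the whole argument is essentially bookkeeping once the right saturation conditions are invoked.
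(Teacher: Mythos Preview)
Your proposal is correct and follows essentially the same approach as the paper. The only minor difference is in the second direction: where you invoke Lemma~\ref{lemma:conceptinterp} to pass from $A \sqcap \neg B \sqcup B \sqcap \neg A \in \lab(t)$ directly to $t \in (A \sqcap \neg B \sqcup B \sqcap \neg A)^\interp$, the paper instead unfolds the saturation conditions of Definition~\ref{definition:saturatedgraph} by hand (splitting on the $\sqcup$, then the $\sqcap$, then using the $\neg$-condition) to conclude that either $A \in \lab(t), B \notin \lab(t)$ or vice versa, and only then appeals to the definition of $A^\interp$ and $B^\interp$; since Lemma~\ref{lemma:conceptinterp} is itself proved from those same conditions, the two arguments amount to the same thing.
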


\begin{proof}

\begin{itemize}

\item if 
 $a$ and $b$ are syntactically equal 
then $a \eqm A$ and $a \eqm B$ are both in $\Mb$. 
Applying the Definition 
\ref{definition:saturatedgraphALCM} we have that $A=B$ and so are their interpretations.\\

\item if $a$ and $b$ are syntactically different, applying 
 Definition \ref{definition:saturatedgraphALCM}, 
 it follows that there is some $t \in \setS$ such that $A \sqcap \neg B \sqcup B \sqcap \neg A \in \lab(t)$.
	Applying  Definition \ref{definition:saturatedgraph} we have two possibles cases:
	 \begin{enumerate}
		\item $A\in \lab(t)$ and $B \not \in \lab(t)$, so $t \in A^\interp$ and $t \not\in B^\interp$, then 
			$A^\interp \not\subseteq B^\interp$.
			
		\item $A \not \in \lab(t)$ and $B\in \lab(t)$, so $t \in B^\interp$ and $t \not\in A^\interp$, then 
			$B^\interp \not\subseteq A^\interp$.\\
	\end{enumerate}
	
	In both cases, $A^\interp \neq B^\interp$ 

\end{itemize} 

\end{proof}

\begin{theorem} \label{theorem:saturatedgraphALCM}
 Let $\Ab$ an Abox without equality axioms and  
 $\T = (\setS, \lab, \Eps)$ be a 
{\em saturated \graph{\setRoles} } \ for $\KB{\Tb}{\Ab}{\Mb}$.
Then,  $\KB{\Tb}{\Ab}{\Mb}$ is satisfiable (consistent).
\end{theorem}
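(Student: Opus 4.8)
The plan is to exhibit an explicit interpretation $\interp'$ for the $\ALCM$-knowledge base and verify it meets all three clauses of Definition~\ref{definition:modelALCM}. Since $\T = (\setS, \lab, \Eps)$ is in particular a saturated \graph{\setRoles} for $\KbALC{\Tb}{\Ab}$, Theorem~\ref{theorem:saturatedgraphALC} already hands us the induced interpretation $\interp$ which is a model of the $\ALC$-knowledge base $\KbALC{\Tb}{\Ab}$. The issue is that $\interp$ interprets each individual $a$ as the syntactic token $a$ itself, so it will not in general satisfy the meta-modelling axioms $a \eqm A$. The idea is to transport $\interp$ along the function $\unfold$ of Definition~\ref{definition:unfold}, which replaces each meta-individual by the set of (unfolded) elements of its corresponding concept, thereby forcing $a^{\interp'} = A^{\interp'}$.

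First I would show that $\unfold : \setS \to \unfold(\setS)$ is a bijection onto its image. The recursion of Definition~\ref{definition:unfold} is well-defined by Theorem~\ref{theorem:recursion}, and I would argue injectivity using the well-founded relation $\prec$: two distinct elements cannot unfold to the same set because $\unfold$ acts as the identity on non-meta-individuals and recursively encodes the membership structure on meta-individuals. Then I would define $\interp'$ as the image of $\interp$ under $\unfold$, i.e.\ $\Delta^{\interp'} = \unfold(\setS)$, $a^{\interp'} = \unfold(a)$, $A^{\interp'} = \{\unfold(x) \mid x \in A^\interp\}$, and $\Eps'(R) = \{(\unfold(x), \unfold(y)) \mid (x,y) \in \Eps(R)\}$. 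By construction $\unfold$ is an isomorphism between $\interp$ and $\interp'$ in the sense of the isomorphism definition for $\ALC$, so by Lemma~\ref{lemma:isomorphism} $\interp'$ remains a model of $\KbALC{\Tb}{\Ab}$, settling clause~2 of Definition~\ref{definition:modelALCM}.

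For clause~3, I would verify $a^{\interp'} = A^{\interp'}$ for each $a \eqm A \in \Mb$. Unwinding the definitions, $a^{\interp'} = \unfold(a) = \{\unfold(y) \mid A \in \lab(y)\}$ by Definition~\ref{definition:unfold}, while $A^{\interp'} = \{\unfold(x) \mid x \in A^\interp\} = \{\unfold(x) \mid A \in \lab(x)\}$ using the definition of the induced interpretation; these two sets coincide, so $\interp' \models \Mb$. Here I would invoke Lemma~\ref{lemma:Igualsintsema} to confirm that the definition of $\unfold$ is consistent across the meta-axioms—in particular that distinct meta-individuals are mapped to distinct sets, matching condition~\ref{condition:difIndivRestructureALCM}, and equal ones agree. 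For clause~1, that $\Delta^{\interp'}$ is a subset of some $S_n$, I would apply Lemma~\ref{lemma:setisinSn}: taking $S_0 = \{x \in \setS \mid x \notin \dom(\Mb)\}$, every $\unfold(x)$ lies in $S_{\sharp\Mb}$, hence $\Delta^{\interp'} \subseteq S_{\sharp\Mb}$.

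The main obstacle I anticipate is clause~1, the well-foundedness of the domain, since this is precisely the novel aspect of $\ALCM$ that prevents a naive translation argument. This is exactly where the no-circularities condition~\ref{condition:circularitiesRestructureALCM} of the saturated \graph{\setRoles} does the heavy lifting: it guarantees $\prec$ is well-founded (Theorem~\ref{theorem:recursion}), which makes $\unfold$ well-defined and, via Lemma~\ref{lemma:setisinSn}, places the whole domain inside a set $S_n$ that is itself well-founded. The remaining verifications—injectivity of $\unfold$ and the isomorphism conditions—are routine inductions on $\prec$ and on concept structure, so the real content is in marshalling the already-proved lemmas about $\unfold$ rather than in any new calculation.
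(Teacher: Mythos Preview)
Your proposal is correct and follows essentially the same approach as the paper: transport the induced $\ALC$-interpretation $\interp$ along $\unfold$ to obtain $\interp'$, use Lemma~\ref{lemma:setisinSn} for clause~1, Lemma~\ref{lemma:isomorphism} (via injectivity of $\unfold$ proved by well-founded induction on $\prec$, ultimately relying on Lemma~\ref{lemma:Igualsintsema}) for clause~2, and a direct unwinding of definitions for clause~3. The only cosmetic difference is that the paper places the injectivity argument and the invocation of Lemma~\ref{lemma:Igualsintsema} inside the proof of clause~2 rather than clause~3, but the content is the same.
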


\begin{proof}
The interpretation $\interp$ induced by $\T$ is 
a model of $\KbALC{\Tb}{\Ab}$ by Theorem \ref{theorem:saturatedgraphALC}.
We define the interpretation $\interp'$ as follows.
\[
\begin{array}{ll}
\Delta^{\interp'} & = \{ \unfold(x) \suchthat x \in \Delta^\interp \} \\
A^{\interp'} & = \{ \unfold (x) \suchthat x \in A^{\interp} \} \\
R^{\interp'} & = \{ (\unfold(x), \unfold(y) ) \suchthat (x, y) \in R^\interp \}\\
a^{\interp'} & = \unfold (a^{\interp})
\end{array}
\]
We prove the three clauses of Definition \ref{definition:modelALCM}.

\begin{enumerate}

\item 
It follows from Lemma \ref{lemma:setisinSn} that
\[\Delta^{\interp'} \subseteq  S_{\sharp \Mb}\]

\item  
By Lemma \ref{lemma:isomorphism},   
$\interp'$ is a model of $\KbALC{\Tb}{\Ab}$ because $\interp$ and $\interp'$ are isomorphic interpretations.
\begin{center}
{\bf Claim.} $\interp$ and $\interp'$ are isomorphic interpretations.
\end{center}
In order to  prove the claim, we prove
 that $\unfold$ is a bijection between the domains of $\interp$ and $\interp'$ as follows.

\begin{itemize}
\item It follows from Theorem \ref{theorem:recursion} that $\unfold$ is a function. 

\item It is surjective since the elements of the domain of $\interp'$ are  defined as the result of applying $\unfold$ to the elements of the domain of $\interp$.

\item \ We prove that $\unfold $ is injective by induction on $\rel$
by appling the Induction Principle given in Definition \ref{definition:inductionprinciple}. 
We  prove that if $x_1$ and $x_2$ are two different elements in $\Delta$
then $\unfold(x_1) \not = \unfold(x_2)$, or equivalently that
if $\unfold(x_1)  = \unfold(x_2)$ then $x_1 = x_2$.
Suppose now that  $x_1$ and  $x_2$ are two different elements of $\Delta$.
\begin{itemize}
 \item {\bf Base Case}. Suppose $x_1,x_2$ are individuals without meta-modelling. 
 It   follows from the definition of $\unfold$ that  $\unfold(x_1) \neq \unfold(x_2)$.
 \item  {\bf Base Case}. Suppose only one of them has meta-modelling, say $x_1$.  
 We know that $\unfold(x_1) \not\in \Delta$ since it is a set but $\unfold(x_2) \in \Delta$ (is $x_2$) so $\unfold(x_1) \neq \unfold(x_2)$.
	\item  {\bf Inductive Case}.
	
	Suppose that $x_1$ and $x_2$ are individuals with meta-modelling:
	$x_1 \eqm A_1$ and $x_2 \eqm A_2$.
	It follows from the definition of $\unfold$  that:	
	\[\unfold(x_1) = \{ \unfold(q) \suchthat A_1 \in \forest(q) \}\]
	\[\unfold(x_2) = \{ \unfold(q') \suchthat A_2 \in \forest(q') \}\]
	 Suppose towards a contradiction that  $\unfold(x_1) = \unfold(x_2)$.   Hence, 
	$\unfold(x_1) \subseteq \unfold(x_2)$ and $\unfold(x_2) \subseteq \unfold(x_1)$.
	Since $\unfold(x_1) \subseteq \unfold(x_2)$,
	for all $q$  such that $A_1 \in \forest(q)$ there exists $q'$ such that
	$A_2 \in \forest(q')$ and $\unfold(q)=\unfold(q')$.\\
	It follows from the definition of $\rel$ that $q \rel x_1$ and $q'\rel x_2$, 
	
	By induction hypothesis,  if $q \not = q'$ then
	$\unfold(q) \not =\unfold(q')$.
	Since $\unfold(q)=\unfold(q')$, we actually have that $q=q'$. 
	\\
	Thus, $\{ q \suchthat A_1 \in \forest(q) \} \subseteq \{ q' \suchthat A_2 \in \forest(q') \}$, which means that:
	
	\begin{equation}
   \label{eq:InjectiveA1inA2}	
		{A_1}^\interp \subseteq {A_2}^\interp
	\end{equation}
	
	Analogously,  from $\unfold(x_2) \subseteq \unfold(x_1)$ we can prove that  
	\begin{equation}
   \label{eq:InjectiveA2inA1}	
		{A_2}^\interp \subseteq {A_1}^\interp
	\end{equation}
	
	It follows from (\ref{eq:InjectiveA1inA2}) and (\ref{eq:InjectiveA2inA1}) that 
	${A_2}^\interp = {A_1}^\interp$. Then, applying Lemma \ref{lemma:Igualsintsema} we have that $x_1 = x_2$
	which contradicts the fact that $x_1$ and $x_2$ are different.
	Hence, $\unfold(x_1) \not = \unfold(x_2)$.

\end{itemize}
In this way we can conclude that $\unfold$ is injective, so it is a bijection and thus $\interp$ and $\interp'$ are isomorphic interpretations.\\
\end{itemize}

\item We prove that ${\interp'}$ is a model of $\Mb$. Suppose $a \eqm A \in \Mb$. 
From the definition of $\interp'$, we know that:

\[a^{\interp'} = \unfold (a^{\interp}).\]

Applying the definition of $\unfold$, we have that:

\[a^{\interp'} = \{ \unfold(y) \suchthat A \in \lab(y) \}\]

Finally, applying the Definition  \ref{definition:interpretationinducedbygraph} we have

\[a^{\interp'} = \{ \unfold(y) \suchthat y \in A^{\interp}  \}\]

which is the definition of $A^{\interp'}$

Thus $\interp' \models a \eqm A $ for all $a \eqm A \in \Mb$, then ${\interp'}$ is a model of $\Mb$.

\end{enumerate}

\end{proof}

\begin{lemma}
\label{lemma:satisfiableSust}
If $\derboxes{\Ts}{\As[a/b]}{\Mb[a/b]}$ is satisfiable, $\{a, b \} \subseteq \dom (\Mb) \cup \dom(\As)$, then $\derboxes{\Ts}{\As}{\Mb}$ is satisfiable.
\end{lemma}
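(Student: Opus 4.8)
The plan is to start from a model of the collapsed knowledge base and ``undo'' the substitution at the semantic level by interpreting $b$ exactly as $a$. Concretely, assume $\derboxes{\Ts}{\As[a/b]}{\Mb[a/b]}$ is satisfiable, so there is an interpretation $\interp$ with $\interp \models \bigsqcap \Ts \equiv \top$, $\interp \models \As[a/b]$ and $\interp \models \Mb[a/b]$. First I would observe that $a^{\interp}$ is well defined: since $\{a,b\} \subseteq \dom(\Mb) \cup \dom(\As)$, the individual $a$ occurs in $\As[a/b]$ or in $\Mb[a/b]$ (any occurrence of $a$ or of $b$ in the original boxes becomes an occurrence of $a$ after the substitution), so $\interp$ assigns it a value. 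I would then define $\interp'$ to agree with $\interp$ on the domain, on every atomic concept, on every role and on every individual except $b$, and set $b^{\interp'} := a^{\interp}$.

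The key technical step is the substitution property
\[
c^{\interp'} = (c[a/b])^{\interp} \qquad \text{for every individual } c,
\]
which follows by a trivial case split: if $c = b$ then $c^{\interp'} = a^{\interp} = (c[a/b])^{\interp}$, and if $c \neq b$ then $c^{\interp'} = c^{\interp} = (c[a/b])^{\interp}$. Since $\interp$ and $\interp'$ share the same domain and interpret every atomic concept and every role identically, a routine induction on concept structure gives $C^{\interp'} = C^{\interp}$ for every concept $C$, and $R^{\interp'} = R^{\interp}$ for every role $R$.

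With these two facts the three model conditions of Definition \ref{definition:modelALCM} are checked mechanically. For the Tbox, $\interp' \models \bigsqcap \Ts \equiv \top$ is immediate because $\Ts$ contains no individuals and the concept interpretations coincide. For the Abox, each assertion $\varphi \in \As$ maps to $\varphi[a/b] \in \As[a/b]$, which $\interp$ satisfies; rewriting the individuals occurring in $\varphi$ through the substitution property, together with the concept and role interpretations being unchanged, transfers satisfaction to $\interp' \models \varphi$, and this covers $C(c)$, $R(c,d)$ and $c \neq d$ assertions uniformly. Analogously, for every meta-modelling axiom $c \eqm C \in \Mb$ we have $c[a/b] \eqm C \in \Mb[a/b]$, so $(c[a/b])^{\interp} = C^{\interp}$, and the substitution property together with $C^{\interp'} = C^{\interp}$ yields $c^{\interp'} = C^{\interp'}$, i.e. $\interp' \models c \eqm C$. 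Hence $\interp'$ is a model of $\derboxes{\Ts}{\As}{\Mb}$.

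The proof has no deep obstacle; the only points requiring care are bookkeeping ones. I expect the most delicate case to be the inequality assertions $c \neq d \in \As$: here one must note that satisfiability of $\As[a/b]$ already forbids $a \neq b \in \As$, since it would produce $a \neq a$, which no interpretation satisfies. Thus the substitution never identifies two individuals that $\As$ declares distinct, and $c^{\interp'} \neq d^{\interp'}$ then follows directly from $(c[a/b])^{\interp} \neq (d[a/b])^{\interp}$. The second point to state explicitly is the well-definedness of $a^{\interp}$ noted above, which is exactly where the hypothesis $\{a,b\} \subseteq \dom(\Mb) \cup \dom(\As)$ is used.
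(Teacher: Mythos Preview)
Your proposal is correct and follows essentially the same approach as the paper: both define a new interpretation that agrees with the given model except that it interprets $b$ as $a^{\interp}$, and both verify the Tbox, Abox and Mbox conditions, singling out the impossibility of $a \neq b \in \As$ as the only subtle case. Your abstraction of the substitution property $c^{\interp'} = (c[a/b])^{\interp}$ yields a slightly more uniform treatment than the paper's explicit case analysis, but the underlying argument is the same; just note that your list of Abox assertion forms should also include equality assertions $c = d$, which your substitution property handles identically.
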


\begin{proof} 
 Suppose  $\derboxes{\Ts}{\As[a/b]}{\Mb[a/b]}$ is satisfiable.
Then, 
 there exists $\interp$ such that $\interp \models \As[a/b]$, 
 $\interp \models \Mb[a/b]$ and $\interp \models \bigsqcap \Ts \equiv \top$.\\

We define  a new interpretation $\interp_1$ that extends $\interp$ by adding 
$b^{\interp_1} = a^{\interp}$. 

\begin{itemize}

\item  Since $\interp \models \bigsqcap \Ts \equiv \top$, we obviously have that

\begin{equation}
		\label{eq:eqTboxSust}
		\interp_1 \models \bigsqcap \Ts \equiv \top.
		\end{equation}\\

\item  The Abox $\As[a/b]$ is obtained from $\As$ by replacing all occurrences of $b$ by $a$. 
Let $\statement \in \As$. We will prove that $\interp_1 \models \statement$.

There are several possibilities:

	\begin{enumerate}
	
	\item Suppose that $\statement \in \As$ does not contain $b$. Then, 
	 it is easy to see that $\interp_1 \models \statement$ because 
$\statement$ and $\statement [a/b]$ are exactly the same statement
and $\interp \models \statement$.

	\item Suppose that $\statement$ is  $C(b)$. Then, $\statement [a/b]$
	is  $C(a)$.  Hence, 
	$b^{\interp_1}  = a^{\interp} 
	               \in C^{\interp} = C^{\interp_{1}} $ since  $\interp \models C(a)$.    
	So $\interp_1 \models C(b)$.
		
	\item   Suppose that $\statement$ is $a = b$ or $b = a$.
	Since $b^{\interp_1} = a^{\interp_1}$, $\interp_1 \models a = b$ and $\interp_1 \models b = a$

	\item Suppose that $\statement$ is $b = c$ and $c$ is not $a$.
	Then, $\statement [a/b]$ is $a = c$ and $\interp \models a = c$.
	Since $b^{\interp_1} = a^{\interp_1} = c^{\interp_1}$, $\interp_1 \models b = c$.
	
	\item Suppose that $\statement$ is $c = b$ and $c$ is not $a$.
	This case is similar to the previous one.

	\item Suppose that $\statement$ is $b \neq c$ and $c$ is not $a$.
	Then, $\statement [a/b]$ is 
	$a \neq c$ and  $\interp \models a \neq c$.
	 Since $b^{\interp_1} = a^{\interp_1}\neq c^{\interp_1}$, 
	  $\interp_1 \models b \neq c$.

	\item Suppose that $\statement$ is $c \neq b$ and $c$ is not $a$. This case is similar to the previous one.

	\item Suppose that $\statement$ is  $a \neq b$ or $b \neq a$.
	But then $\statement [a/b]$ is $a \neq a$. 
	This case is not possible because $a \neq a$ is not satisfiable.

	\end{enumerate}
	Hence, we have just proved that 
		\begin{equation}
		\label{eq:eqAboxSust}
		\interp_1 \models \As.
		\end{equation}

\item  Let $\statement \in \Mb$. We will prove that $\interp_1 \models \statement$.
There are two  possibilities:
\begin{enumerate}
\item Suppose $\statement$ does not contain $b$. Then,
it is easy to see that $\interp_1 \models \statement$.

\item Suppose $\statement$ is  $b \eqm A \in \Mb$.
Then $\statement [a/b]$  is $a \eqm A$  and $\interp \models a \eqm A$.
Thus,
$b^{\interp_1} = a^\interp = A^{\interp} = A^{\interp_1}$ and
$\interp_1 \models b \eqm A$.

\end{enumerate}

We  conclude that

\begin{equation}
		\label{eq:eqMboxSust}
		\interp_1 \models \Mb.
		\end{equation}
		
\end{itemize}

From \eqref{eq:eqTboxSust}, \eqref{eq:eqAboxSust} and \eqref{eq:eqMboxSust} we can affirm that  $\derboxes{\Ts}{\As}{\Mb}$  is satisfiable.
\end{proof}

\begin{lemma}
\label{lemma:conversepreservation}
Suppose there is an edge from  a base node $v$ to  a base node 
$w$ in an and-or graph $\andorgraph$.
If the label of $w$ is satisfiable, so is the label of $v$. 
\end{lemma}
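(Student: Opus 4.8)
The plan is to argue by case analysis on the rule that produced the edge $(v,w)$, showing in each case that a model of the label of $w$ is already a model of the label of $v$. First I would observe that an edge from a base node to a base node can only be created by a rule of Figure~\ref{figure:primerules} whose conclusion is itself a base judgement. This excludes the bottom rules $\botOnePR$, $\botTwoPR$, $\botThreePR$, whose conclusion is $\derbot$, and the transitional rule $\transPR$, whose successors are variable nodes labelled $\derconcepts{\Ts}{\Xs}$. Hence the rule applied at $v$ is one of $\andPR$, $\orPR$, $\forallPR$, $\closePR$, $\equal$ or $\difference$, and for each I unfold its definition from Figure~\ref{figure:primerules}.

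For $\andPR$, $\forallPR$ and $\difference$, together with either branch of $\orPR$ and the branch $\derboxes{\Ts}{\As \cup \{a \neq b\}}{\Mb}$ of $\closePR$, the argument is immediate: the Tbox and the Mbox of the conclusion coincide with those of the premise, while the Abox of the conclusion contains that of the premise. Thus any interpretation $\interp$ satisfying the label of $w$ already validates $\bigsqcap \Ts$, satisfies the smaller Abox, and is a model of the Mbox, so it satisfies the label of $v$. For the substitution branch of $\closePR$, whose label is $\derboxes{\Ts}{\As[a/b]}{\Mb[a/b]}$, the claim is exactly Lemma~\ref{lemma:satisfiableSust}: the side condition $\{a,b\} \subseteq \dom(\Mb)$ supplies its hypothesis, so satisfiability of the conclusion yields satisfiability of the premise $\derboxes{\Ts}{\As}{\Mb}$.

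The one genuinely non-trivial case, and the step I expect to be the main obstacle, is the $\equal$-rule, because its premise carries an extra meta-modelling axiom $a \eqm B$ that is discarded in the conclusion, so I must recover it. Suppose $\interp$ satisfies the conclusion $\derboxes{\Ts'}{\As \cup \As'}{\Mb \cup \{a \eqm A\}}$, where $\Ts' = \Ts \cup \{A \sqcup \neg B, B \sqcup \neg A\}$. Since $\interp$ validates every concept of $\Ts'$, it validates $A \sqcup \neg B$ and $B \sqcup \neg A$, which forces $A^\interp = B^\interp$ exactly as in the $\equal$-case of Lemma~\ref{lema:preservationrules}. Because $\interp \models a \eqm A$ we have $a^\interp = A^\interp = B^\interp$, hence $\interp \models a \eqm B$; the remaining components of the premise hold since $\Ts \subseteq \Ts'$, $\As \subseteq \As \cup \As'$ and $\Mb \cup \{a \eqm A\} \subseteq \Mb \cup \{a \eqm A, a \eqm B\}$. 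Therefore $\interp$ is a model of the label of $v$, which completes the case analysis and the proof.
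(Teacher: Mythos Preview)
Your proof is correct and follows essentially the same approach as the paper: a case analysis on the rule producing the edge, with the monotonicity argument (same Tbox and Mbox, larger Abox) handling $\andPR$, $\orPR$, $\forallPR$, $\difference$ and the inequality branch of $\closePR$, Lemma~\ref{lemma:satisfiableSust} handling the substitution branch of $\closePR$, and the recovery of $a \eqm B$ from $A^\interp = B^\interp$ handling $\equal$. Your grouping of the easy cases is slightly more streamlined than the paper's, but the substance is identical.
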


\begin{proof}

As $w$ is a successor of $v$ in and-or graph $\andorgraph$ and $v$ is a base node, the label of $w$ is obtained from the application of some rule $r$ of the Figure  \ref{figure:primerules} to the label of $v$.
Given that $w$ is a base node, $r$ must be one of : $\andPR$,$\orPR$,$\forallPR$,$\closePR$,$\equal$ or$\difference$. We will analyse each one of these cases.

\begin{description}
\item [$\{\andPR,\orPR,\forallPR \}$-rules] - The labels of $v$ and $w$ are of the form: $\derboxes{\Ts}{\As_v}{\Mb}$ and 

$\derboxes{\Ts}{\As_w}{\Mb}$ respectively.
If $\derboxes{\Ts}{\As_w}{\Mb}$ is satisfiable, so is $\derboxes{\Ts}{\As_v}{\Mb}$ since they have the same TBox, MBox and $\As_v$ is strictly contained in $\As_w$.

\item[$\closePR$-rule] - If $\derboxes{\Ts}{\As_v}{\Mb}$ is the label for $v$,   
 there are two possibilities for the label of $w$:
	\begin{enumerate}
	 \item Suppose the label is $\derboxes{\Ts}{\As_v \cup \{ a \neq b \}}{\Mb}$. As in the case above if $\derboxes{\Ts}{\As_v \cup \{ a \neq b \}}{\Mb}$ is satisfiable, so is $\derboxes{\Ts}{\As_v}{\Mb}$ .
	
		\item Suppose the label is $\derboxes{\Ts}{\As_v[a/b]}{\Mb[a/b]}$.
		 This case follows from Lemma \ref{lemma:satisfiableSust}.
		
	\end{enumerate}

	\item[$\difference$-rule] - If the label of $v$ is $\derboxes{\Ts}{\As_v \cup \{ a \neq b\}}{\Mb \cup \{ a \eqm A, b \eqm B \}}$ then the label of $w$ is of the form: $\derboxes{\Ts}{\As_w}{\Mb \cup \{ a \eqm A, b \eqm B \}}$.
	
	So if $\derboxes{\Ts}{\As_w}{\Mb \cup \{ a \eqm A, b \eqm B \}}$ is satisfiable, so is the label of $v$ since they have the same TBox, MBox and $\As_v \cup \{ a \neq b\}$ is strictly contained in $\As_w$.
	
	\item [$\equal$-rule] - The label of $v$ is of the form: $\derboxes{\Ts_v}{\As_v}{\Mb \cup \{ a \eqm A, a \eqm B \}}$, 
	the label of $w$ is $\derboxes{\Ts_w}{\As_w}{\Mb \cup \{ a \eqm A \}}$ where $\Ts_w = \Ts_v \cup \{ (A \sqcup \neg B) \sqcap (B \sqcup \neg A) \}$
	and $\As_w = \As_v \cup  \{ ((A \sqcup \neg B) \sqcap (B \sqcup \neg A))(d) \mid d \in \dom(\As_v) \cup \dom(\Mb) \cup \{a\} \} $. \
	\\
	 Since  the label of $w$ is satisfiable,
	there  exists $\interp$ such that:
	
	\begin{enumerate}
	 \item $\interp$ validates all the concepts in $\Ts_w$, so $\interp$ validates all the concepts in $\Ts_v$ and $(A \sqcup \neg B) \sqcap (B \sqcup \neg A)$.
	 \item $\interp \models \As_w$, so $\interp \models \As_v \cup \{ ((A \sqcup \neg B) \sqcap (B \sqcup \neg A)) (d) \mid d \in \dom(\As_v) \cup \dom(\Mb) \cup \{a\}   \}$, then 
	 $\interp \models \As_v $.
	 \item $\interp \models \Mb \cup \{ a \eqm A \}$, then $\interp \models \Mb$ and $\interp \models a \eqm A$ \\
	\end{enumerate}

	Since $\interp$ validates $(A \sqcup \neg B) \sqcap (B \sqcup \neg A)$ we have that: $((A \sqcup \neg B) \sqcap (B \sqcup \neg A))^\interp = \Delta^\interp$.
	
	Applying the definition of interpretation we have that 
	
	$(A^\interp \cup (\Delta \backslash B^\interp)) \cap (B^\interp \cup (\Delta \backslash A^\interp)) = \Delta^\interp$ , so 	
	$(A^\interp \cup (\Delta \backslash B^\interp)) =  \Delta^\interp$ and 
	
	$(B^\interp \cup (\Delta \backslash A^\interp)) = \Delta^\interp$.
	
	Consequently, $B^\interp \subseteq A^\interp$ and $A^\interp \subseteq B^\interp$, which it means that
	
	  \begin{equation}
	   \label{eq:equalityAB}
	    A^\interp = B^\interp
	  \end{equation}
	  
	  Since $\interp \models a \eqm A$ we have that:
	  \begin{equation}
	   \label{eq:equalityaA}
	    a^\interp = A^\interp
	  \end{equation}

	   It follows from
	   (\ref{eq:equalityAB}) and (\ref{eq:equalityaA}) 
	   that 
	    $a^\interp = B^\interp$, so $\interp \models a \eqm B$.\\
	   
	   Hence, we have proved all the necessary conditions 
	    for the satisfiability of 
	    
	   $\derboxes{\Ts_v}{\As_v}{\Mb \cup \{ a \eqm A, a \eqm B \}}$.
	
\end{description}
\end{proof}

\begin{table}

\setlength{\parindent}{0pt}\line(1,0){344}
\newline
{\bfseries{Algorithm for building a saturated \graph{\setRoles}}} 
\newline
\setlength{\parindent}{0pt}\line(1,0){344}
\newline
{\bfseries{Input}}:  $\andorgraphP$ a consistent marking of and-or graph w.r.t.  $\Kb = \derboxes{\Tb}{\Ab}{\Mb}$ such that
 $\Kb$ satisfies the hypotheses of Theorem \ref{theorem:constructionsaturatedstructure}. 
\newline
{\bfseries{Output}}: A  \graph{\setRoles} $\T = (\setS, \lab, \Eps)$  saturated  for $\KB{\Tb}{\Ab}{\Mb}$.

\begin{enumerate}
\item  Let $v_0$ be the root of $\andorgraphP$ 



$\Delta := \dom(\Ab) \cup \dom(\Mb)$

For each $a \in \Delta$ do
			\begin{itemize}
			 \item $\lab(a) := \{ C \suchthat C(a) \in \Ab \}$
			 \item mark $a$ as unresolved
			 \item  $f(a) = v_0$ 
      \end{itemize}
			
end-for

For each role name $R$ do
	\begin{itemize}
	 \item $\Eps(R) := \{ (a,b) \suchthat R(a,b) \in \Ab \}$
	\end{itemize}
end-for

\item While $\Delta$ 
contains unresolved  elements, select one of them: $x$ and do
 \begin{itemize}
		\item For each $\exists R.C \in \lab(x)$ do
			\begin{itemize}

					 \item $u := f(x)$
			
					 \item Let $w_0$ the node of $\andorgraphP$ such the edge $(u,w_0)$ is labeled by $\exists R.C$

			 \item Let $w_0$ \ldots $w_h$ be the saturation path of $w_0$ w.r.t. $\andorgraphP$.
			
			 \item 
			  Let $\Ys = \bigcup_{i=0}^{h} \Xs_{i}$ where 
			              the labels of $w_0$ \ldots $w_h$ are
			              $\derconcepts{\Ts}{\Xs_0}$, ..., $\derconcepts{\Ts}{\Xs_h}$.

			\item if there  is no  $y \in \Delta$ such that $\lab(y) = \Ys$ then
				\begin{itemize}
				  \item 
				  create a new variable $y$ and 
				  add $y$ to $\Delta$
					\item $\lab(y) := \Ys$
					\item mark $y$ as unresolved
					\item $f(y) := w_h$

				 \end{itemize}
				
			end-if
			\item Add $(x,y)$ to $\Eps(R)$
			\end{itemize}

	end-for
	
	\item Mark $x$ as resolved

\end{itemize}
end-while
\end{enumerate}

\setlength{\parindent}{0pt}\line(1,0){344}

\caption{Algorithm for building a saturated \graph{\setRoles}}

\label{table:algorithmbuildRstructured}

\end{table}

\begin{definition}[Saturation path] 
\label{definition:saturationpath}
Let $\andorgraph$ be an and-or graph with a consistent marking $\andorgraph'$ and let $v$ be a node of $\andorgraph'$.
A \emph{saturation path} of $v$ w.r.t.  $\andorgraph'$ is a finite sequence $v_0 = v, v_1$ \ldots $v_k$ of nodes of $\andorgraph'$, with 
$k \geq 0$, such that, for every $0 \leq i < k$, $v_i$ is an or-node and $(v_i,v_{i+1})$ is and edge of $\andorgraph'$, and $v_k$ is and-node.
\end{definition}

\noindent 
Since the end-node is an and-node,
all nodes  of $\andorgraph'$ have a saturation path.
\\\\

\noindent 
The domain $\Delta$ 
of the \graph{\setRoles}
obtained by applying the algorithm of Table \ref{table:algorithmbuildRstructured}, 
consists of elements  that also belong to 
 $\dom(\As) \cup \dom(\Mb)$, which we call {\em individuals}, and 
elements 
 that are created to meet the conditions of the knowledge base
 given by the existential restrictions,  
 which we call {\em variables}. 
\\

\noindent The function $f$  in the algorithm  of Table \ref{table:algorithmbuildRstructured}
maps elements in $\Delta$ to an and-node of the graph $\andorgraphP$.
\begin{enumerate}
\item If $a$ is an individual from the knowledge base then $f(a) = v_0$. 
The individual $a$  comes from the unique base node which is the initial
node $v_0$. In this case, the base node is also an and-node because
no base rule is applicable to the knowledge base except for $\transPR$.

\item If $y$ is a  variable (a newly created individual),
 then $f(y) = w_h$.
The individual $y$ comes from the last  variable node $w_h$ in the saturated path
which is an and-node.

\end{enumerate}

\begin{theorem}[Correctness of the Algorithm   of Table \ref{table:algorithmbuildRstructured}]
\label{theorem:constructionsaturatedstructure}
Let $\Kb = \KB{\Tb}{\Ab}{\Mb}$ be a knowledge base in $\ALCM$ in negation normal form such that 
\begin{enumerate}
\item $\Ab$ does not contain equalities,
\item  no base rule is applicable to $\Kb$  except for the $\transPR$-rule, 
\item 
 $C(d) \in \Ab$  for all $C \in \Tb$
 and  $d \in \dom(\Ab) \cup \dom(\Mb)$.
\end{enumerate}
Suppose  
$\andorgraph$ is an and-or graph for  
$\Kb = \KB{\Tb}{\Ab}{\Mb}$ with a consistent marking $\andorgraphP$.
Then, 
the structure $\T = (\setS, \lab, \Eps)$ 
obtained by applying the algorithm of Table~\ref{table:algorithmbuildRstructured} is a 
saturated \graph{\setRoles} for $\KB{\Tb}{\Ab}{\Mb}$. 
\end{theorem}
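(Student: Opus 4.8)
The plan is to verify, one family of clauses at a time, every condition of the three stacked definitions that together assert that $\T$ is a saturated \graph{\setRoles} for $\KB{\Tb}{\Ab}{\Mb}$: first the six clauses of Definition~\ref{definition:saturatedgraph} (for the Tbox $\Tb$), then the four clauses of Definition~\ref{definition:saturatedgraphALC} (for $\KbALC{\Tb}{\Ab}$), and finally the four clauses of Definition~\ref{definition:saturatedgraphALCM} (the genuinely meta-modelling ones). Throughout I would split $\setS$ into the \emph{individuals}, namely the elements of $\dom(\Ab) \cup \dom(\Mb)$, whose label is fixed at initialization as $\lab(a) = \{ C \suchthat C(a) \in \Ab \}$ and whose $f$-value is the root $v_0$, and the \emph{variables}, the fresh elements created in the while-loop, whose label is the union $\Ys = \bigcup_i \Xs_i$ over their saturation path and whose $f$-value is the terminal and-node $w_h$ of that path.

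The clauses for individuals are the easy half and follow directly from hypothesis~2 (no base rule other than $\transPR$ applies to $\Kb$) together with hypotheses~1 and~3. Clauses~2 and~3 of Definition~\ref{definition:saturatedgraph} hold because $\andPR$ and $\orPR$ are inapplicable; clause~4 holds because $\forallPR$ is inapplicable (so $\forall R.D \in \lab(a)$ and $R(a,b) \in \Ab$ force $D(b) \in \Ab$); clause~1 holds because $\botOnePR$ is inapplicable; and clause~6 holds because hypothesis~3 inserts every $C \in \Tb$ into $\Ab$ as $C(d)$. The four Abox clauses of Definition~\ref{definition:saturatedgraphALC} are then immediate from the initialization of $\lab$ and $\Eps$ and from inapplicability of $\botTwoPR$ (giving syntactic distinctness whenever $a \neq b \in \Ab$).

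The variable clauses are where the work lies, and they are the main obstacle. The key fact I would isolate first is a persistence/decomposition observation about saturation paths: along $w_0, \dots, w_h$ the only static non-bottom rules applied are $\andR$ and $\orR$, each removing only a conjunction or a disjunction, so atomic concepts, negated atoms, existentials and every $\forall R.D$ \emph{persist} once they appear, whereas every conjunction and disjunction occurring in some $\Xs_i$ must be decomposed before $w_h$, since $w_h$ is an and-node to which no static rule applies. Persistence gives that $\forall R.D \in \lab(x) = \Ys$ forces $\forall R.D \in \Xs_h$, so that when $\transR$ (or $\transPR$ at an individual) is applied at $f(x) = w_h$ and the edge labelled $\exists R.C$ is followed, the successor's set $\Xs_{\exists R.C} = \{ C \} \cup \{ D \suchthat \forall R.D \in \Xs_h \} \cup \Ts$ already contains $C$, all inherited $D$, and all of $\Tb$. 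This yields clause~5 by construction (the while-loop creates exactly such a $y$ with $C \in \lab(y)$ and adds $(x,y)$ to $\Eps(R)$), clause~4 (each $\forall R.D \in \lab(x)$ propagates $D$ into $\lab(y)$ for precisely the $R$-edges the algorithm builds), and clause~6 for variables (the summand $\Ts$). Clauses~2 and~3 follow from the decomposition half, and clause~1 follows from consistency of the marking: if $A$ and $\neg A$ both lay in $\Ys$ they would, by persistence, both lie in $\Xs_h$, making $\botR$ applicable and forcing a $\derbot$-node into the marking, a contradiction.

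Finally I would dispatch the meta-modelling clauses of Definition~\ref{definition:saturatedgraphALCM}. Clause~1 (the domain contains $\dom(\Mb)$) is by construction, and clause~3 ($a \eqm A, a \eqm B \in \Mb \Rightarrow A = B$) follows because $\equal$ is inapplicable. For clause~2 I would use that every element of $\dom(\Mb)$ is an individual, so that $A_i \in \lab(a_{i+1})$ is equivalent to $A_i(a_{i+1}) \in \Ab$; hence a circularity of $\T$ w.r.t.\ $\Mb$ would be exactly a witness of $\circular(\Ab, \Mb)$, which is excluded because $\botThreePR$ is inapplicable. For clause~4, given syntactically distinct $a, b$ with $a \eqm A, b \eqm B \in \Mb$, inapplicability of $\closePR$ forces $a \neq b \in \Ab$, whereupon inapplicability of $\difference$ forces some $d$ with $(A \sqcap \neg B \sqcup \neg A \sqcap B)(d) \in \Ab$, i.e.\ the required concept in $\lab(d)$ (up to commutativity of $\sqcap$). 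Assembling the three layers then completes the proof.
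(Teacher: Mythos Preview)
Your proposal is correct and follows essentially the same approach as the paper: both proofs verify the three layers of Definitions~\ref{definition:saturatedgraph}, \ref{definition:saturatedgraphALC}, and \ref{definition:saturatedgraphALCM} by splitting elements of $\setS$ into individuals (handled via inapplicability of the base rules, hypotheses~1--3) and fresh variables (handled via the structure of saturation paths and the $\transR$/$\transPR$ rules). Your organisation is slightly tighter in that you isolate the persistence/decomposition observation about saturation paths as a single key fact rather than re-deriving it per clause, and your argument for clause~4 of Definition~\ref{definition:saturatedgraphALCM} is actually more complete than the paper's, since you explicitly invoke inapplicability of $\closePR$ to obtain $a \neq b \in \Ab$ before appealing to inapplicability of $\difference$.
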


\begin{proof}
We first prove that $\T$ is a saturated \graph{\setRoles} for $\Tb$.
For this, we need to prove the six 
conditions of Definition
	\ref{definition:saturatedgraph}.
	Suppose
	 $x,y \in \setS$, $R \in \setRoles$ and  concepts  $C, C_1, C_2, D$.

	\begin{enumerate}
	 \item 
	 Suppose towards a contradiction that 
	 there is $x \in \Delta$ such that 
	 $\{ \neg C, C \} \subseteq \lab(x)$.
	 Since all concepts are in \NNF, $C$ is an atomic concept  $B$,
	  so $\{ \neg B, B  \} \subseteq \lab(x)$. We divide en two cases:
	 \begin{enumerate}
	  \item Suppose $x \in \dom(\Ab) \cup \dom(\Mb)$.\\
	  By construction, $\{ \neg B(x), B(x)  \} \subseteq \Ab$. 
	  But then we can apply the $\botOnePR$-rule which it is not possible 
	 by the second hypothesis for $\Kb$, namely no base rule is applicable to $\Kb$  except for the $\transPR$-rule.

	  \item Suppose $x \not \in \dom(\Ab) \cup \dom(\Mb)$. \\
	  By construction, $\lab(x) = \bigcup_{i=0}^{h} \Xs_{i}$ where $\derconcepts{\Ts}{\Xs_0}$, ..., $\derconcepts{\Ts}{\Xs_h}$ are the
	  labels of the saturation path $w_0$, \ldots, $w_h$, $f(x)=w_h$.
	  It is not possible that each concept is on a different node since $w_0$, \ldots, $w_h$ is a saturation path
	  and no rules eliminate an atomic concept of the label of a node.
	  Then,  $\{ \neg B, B \} \subseteq \Xs_i$ for some $i$ 
	  such that $1 \leq i \leq h$. 
	  By Definition \ref{definition:preferences}, 
	  the $\botR$-rule has higher priority than any of the other rules.
	  This means that $w_i$ should be the last node in the path, i.e.
	   $i=h$.
	  But then, $w_h$ would be an absurdity node which contradicts the
	  $\andorgraphP$ is a consistent marking.

	 \end{enumerate}

	 \item 
	 %
	  Suppose $C_1 \sqcap C_2 \in \lab(x)$. We split in two cases:
	   
	 \begin{enumerate}
	
	  \item Suppose  $x \in \dom(\Ab) \cup \dom(\Mb)$.\\
	  By construction, 
	  $C_1 \sqcap C_2 \in \lab(x)$ if $(C_1 \sqcap C_2)(x)  \in \Ab$.
	  Then we have that $\{C_1(x), C_2(x) \} \subseteq \Ab$ 
	  because the $\andPR$-rule is not applicable to $\Ab$.
	  Hence, $\{C_1, C_2 \} \subseteq \lab(x)$
	  
	  \item Suppose $x \not \in \dom(\Ab) \cup \dom(\Mb)$.\\
	  By construction, $\lab(x) = \bigcup_{i=0}^{h} \Xs_{i}$  
	 where 
	  $f(x) = w_h$ 
	 and  the labels of the saturation path 
	 $w_0$, \ldots, $w_h$ are 
	 $\derconcepts{\Ts}{\Xs_0}$, ..., $\derconcepts{\Ts}{\Xs_h}$.
	Hence, there exists $1 \leq i \leq h$ such that 
	$C_1 \sqcap C_2 \in \Xs_i$.
	We have that $C_1 \sqcap C_2 \not \in \Xs_h$ because 
	$w_h$ is an and-node and 	the $\andR$-rule is not applicable to
		 $\derconcepts{\Ts}{\Xs_h}$.   
		 Hence, there is an $i <h$ such that 
		 $C_1 \sqcap C_2  \in \Xs_i$, $C_1 \sqcap C_2 \not \in \Xs_{i+1}$
		 and $C_1,  C_2  \in \Xs_{i+1}$.
		 Hence, 
		   $C_1, C_2  \in \Xs_{i+1} \subseteq \lab(x)$.        
	 \end{enumerate}

	 \item Suppose $C_1 \sqcup C_2 \in \lab(x)$.
	 We divide in two cases:
	 
	 
	 \begin{enumerate}
	  \item Suppose $x \in \dom(\Ab) \cup \dom(\Mb)$.
	  \\
	  By construction, $C_1 \sqcup C_2 \in \lab(x)$ if $(C_1 \sqcup C_2)(x)  \in \Ab$.
	  \\
	  Then we have that $C_1(x) \in \Ab$ or $ C_2(x) \in \Ab$ because the $\orPR$-rule is not applicable to $\Ab$.
	  Hence, $C_1 \in \lab(x)$ or $C_2 \in \lab(x)$.

	  \item Suppose $x \not \in \dom(\Ab) \cup \dom(\Mb)$.
	  
	  By construction, $\lab(x) = \bigcup_{i=0}^{h} \Xs_{i}$  
	 where 
	  $f(x) = w_h$ 
	 and  the labels of the saturation path 
	 $w_0$, \ldots, $w_h$ are 
	 $\derconcepts{\Ts}{\Xs_0}$, ..., $\derconcepts{\Ts}{\Xs_h}$.
	
	$C_1 \sqcup C_2 \not\in \Xs_h$ because $w_h$ is and-node and the $\orR$-rule is not applicable to $\derconcepts{\Ts}{\Xs_n}$.
	Hence, there exists $1 \leq i < h$ such that $C_1 \sqcup C_2 \in \Xs_i$, $C_1 \sqcup C_2 \not\in \Xs_{i+1}$ and 
	$C_1 \in \Xs_{i+1}$ or $C_2 \in \Xs_{i+1}$. So $C_1 \in \lab(x)$ or $C_2 \in  \lab(x)$.

	 \end{enumerate}

	 \item 
			 Assume $x \in \Delta$, $\forall R.D \in \lab(x)$ 
			 and $(x,y) \in \Eps(R)$, we show that $D \in \lab(y)$.
			 
				\begin{enumerate}
				 \item Suppose $x \in \dom(\Ab) \cup \dom(\Mb)$.

					By construction, $\forall R.D(x) \in \Ab$.
					
					\begin{itemize}
				    \item Suppose $y \in \dom(\Ab) \cup \dom(\Mb)$.
						 
						  Then $R(x, y) \in \Ab$ since $(x,y) \in \Eps(R)$.
							Then we have that $D(y) \in \Ab$ because the
							 $\forallPR$-rule is not applicable to $\Ab$.
							Hence, $D \in \lab(y)$.

					\item Suppose $y \not \in \dom(\Ab) \cup \dom(\Mb)$	.
						
						By construction, 
						$\lab(y) = \bigcup_{i=0}^{h} \Xs_{i}$ 	 
						where the labels of the saturation path 
						$w_0$, \ldots, $w_h$ are 	 
						$\derconcepts{\Ts}{\Xs_0}$, ..., 
						$\derconcepts{\Ts}{\Xs_h}$.
	                    Since 
						$\forall R.D(x) \in \Ab$, 
						 it belongs to the label of $v_0$.
						  Since $R(x,y)$ cannot belong to $\Ab$
						  because $y \not \in \dom(\Ab)$, 
						 we have that 
						  $\exists R.C(x) \in \Ab$ for some concept $C$.
						  By 	application of the $\transPR$-rule we
						   know that $D$ belongs to the label of $w_0$ 
						   and hence also to $\lab(y)$.
							\end{itemize}
						 
					\item Suppose $x \not \in \dom(\Ab) \cup \dom(\Mb)$.		
					\\
					 Then, $y \not \in \dom(\Ab) \cup \dom(\Mb)$ 
					 because in $\Eps$ there are no pairs 
					 $(x,y)$ where $x \not \in \dom(\Ab) \cup \dom(\Mb)$ 
					 and $y \in \dom(\Ab) \cup \dom(\Mb)$. 
					 
						So by construction, we know that:
						\begin{itemize}
						 \item $\lab(x) = \bigcup_{i=0}^{h} \Xs_{i}$ 
						 	 where the labels of the saturation path 
						$w_0$, \ldots, $w_h$ are 	
						 $\derconcepts{\Ts}{\Xs_0}$, ..., 
						 $\derconcepts{\Ts}{\Xs_h}$.
						
						 \item $\lab(y) = \bigcup_{i=0}^{p} \Zs_{i}$ 
						 	 where the labels of the saturation path 
						$v_0$, \ldots, $v_p$ are 	 
						$\derconcepts{\Ts}{\Zs_0}$, ..., 
						$\derconcepts{\Ts}{\Zs_p}$.
						\end{itemize}
						
						Since $(x,y) \in \Eps(R)$ 
						 and $R(x,y) \not \in \Ab$, 
						we know that $\exists R.C_1 \in \lab(x)$						
						for some $C_1$.
						Since the rules applied to the nodes 
						$w_0$, \ldots, $w_{h-1}$ 
						do not  eliminate  
						for all and exists concepts, we know that
						$\{ \forall R.D, \exists R.C_1 \} \subseteq \Xs_h$.
						 The node $v_0$ is obtained from
						$w_h$ by application of the $\transR$-rule.
										 
						In $\andorgraphP$ there is an edge $(w_h,v_0)$
						 labelled $\exists R.C_1$.
						
						By the application of the $\transR$-rule 
						at the node $w_h$ we know that 
						$D \in \Zs_0$, so $D \in \lab(y)$.

				\end{enumerate}

	 \item 
	 Assume  $\exists R.C \in \lab(x)$ for $x \in \Delta$.
	
	At the step 2 of the algorithm we create (if it does not exist)
	a new variable $y$ and add $(x,y)$ to $\Eps(R)$.
	For this $y$, $\lab(y) = \bigcup_{i=0}^{h} \Xs_{i}$ 	 where the labels of the saturation path 
						$w_0$, \ldots, $w_h$ are 	 $\derconcepts{\Ts}{\Xs_0}$, ..., $\derconcepts{\Ts}{\Xs_h}$.
						
	Let $u = f(x)$ (the node associated with $x$ in $\andorgraphP$).
	 
	Then, $w_0$ is  obtained 
	from $u$  by applying the $\transR$ or $\transPR$-rule. 
	 So $C \in \Xs_0$ and also $C \in \lab(y)$.
	 
	 \item 
	%
	 We prove that $\Tb \subseteq \lab(x)$ for all $x \in \Delta$.
	We divide in cases:
	\begin{enumerate}
	  \item Suppose $x \in \dom(\Ab) \cup \dom(\Mb)$.
	  By initialization,
	  \[
	  \lab(x) = \{ C \mid C(a) \in \Ab \}
	  \]
	  Then, $\Tb \subseteq  \lab(x)$  by 
	  the third hypothesis on $\Kb$.
	   
	  \item Suppose $x \not \in \dom(\Ab) \cup \dom(\Mb)$.
	  By construction, $\lab(x) = \bigcup_{i=0}^{h} \Xs_{i}$  
	 where 
	  $f(x) = w_h$ 
	 and  the labels of the saturation path 
	 $w_0$, \ldots, $w_h$ are 
	 $\derconcepts{\Ts}{\Xs_0}$, ..., $\derconcepts{\Ts}{\Xs_h}$.
	 The node $w_0$ was obtained by application of 
	 either the $\transPR$-rule or $\transR$-rule.
	 Since both rules add the  Tbox to $\Xs_0$,
	  we have that
	 \[ 
	 \Tb \subseteq \Xs_0 \subseteq \lab(x)\]
	 
	  \end{enumerate}
	
	\end{enumerate}

	\noindent
We now prove 
that $\T$ is a saturated \graph{\setRoles} for $\KbALC{\Tb}{\Ab}$.
For this, it only remains to prove 
the four conditions of Definition \ref{definition:saturatedgraphALC}.

      \begin{enumerate}
      
	\item 
	By step 1 in the algorithm of Table~\ref{table:algorithmbuildRstructured}
	all the individuals of $\Ab$ belongs to $\Delta$.\\
      
	\item 
	 It holds by initialization of $\lab(a)$ (step 1 
	 in the algorithm) for each 
	$a \in \dom(\Ab) \cup \dom(\Mb)$ in the first step of algorithm, so if $C(a) \in \Ab$, then $C \in \lab(a)$. \\
	
	\item 
	It holds by initialization of $\Eps(R)$ for each role name $R$
	 (step 1 in the algorithm), so if
	$R(a, b) \in \Ab$, then  $(a, b) \in \Eps(R)$.\\
	
	\item 
	If $a \not = b  \in \Ab$ then $a$ and $b$ are syntactically different, otherwise we could apply 
	$\botTwoPR$-rule which it is not possible 
	 by the second hypothesis for $\Kb$.
	
 \end{enumerate}

We now prove that $\T$ is a 
\graph{\setRoles} 
for $\ALCM$. For this, we only need to prove 
the four conditions of  
Definition \ref{definition:saturatedgraphALCM}.

\begin{enumerate}
 \item 
 By step 1 of the algorithm all the individuals of $\Mb$ belongs to $\Delta$.\\
  
 \item 
 Suppose towards a contradicion that
  $\T$ has circularities w.r.t. a $\Mb$. 
  Then  there is   
  a sequence of meta-modelling axioms 
  $a_1 \eqm A1, a_2 \eqm A2 \ldots a_n \eqm A_n$ all in $\Mb$
  such that $A_1 \in \lab(a_2), A_2 \in \lab(a_3) \ldots A_n \in \lab(a_1)$.
 That is, $A_1(a_2)$, $A_2(a_3)$, \ldots, $A_n (a_1)$ are in $\Ab$, so from Definition \ref{definition:circular} we have $\circular(\Ab, \Mb)$.
 But then we can apply the $\botThreePR$-rule which  
 contradicts the second hypothesis for $\Kb$.   \\
 
 \item 
 Suppose $a \eqm A\in \Mb$ and $a \eqm B\in \Mb$.
 Then,   $A  = B$. Otherwise we could apply 
 the $\equal$-rule which it is not possible 
 by  the second hypothesis for $\Kb$. \\
 
 \item 
 Suppose $a$ and $b$ are syntactically different, 
 $a \eqm A\in \mathcal{M}$ and $b \eqm B\in \mathcal{M}$.
 Then there is some $t \in \setS$ such that $A \sqcap \neg B \sqcup B \sqcap \neg A \in \lab(t)$.
 Otherwise we could  apply the $\difference$-rule which it is not possible    by  the second hypothesis for $\Kb$. 
 \end{enumerate}

\end{proof}

\begin{theorem}[Completeness of the Tableau Calculus of $\ALCM$]
\label{theorem:completeness}
Let $\Kb = \KB{\Tb}{\Ab}{\Mb}$
be a knowledge base of $\ALCM$ in negation normal form.
If the and-or graph for  $\Kb$
 has a consistent marking
 then
 $\Kb$ is consistent.
\end{theorem}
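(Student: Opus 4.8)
The plan is to trace the consistent marking $\andorgraphP$ downward from its root, staying inside the base nodes, until I reach a base node whose label already meets the hypotheses of Theorem~\ref{theorem:constructionsaturatedstructure}; from that node I obtain a genuine model via Theorem~\ref{theorem:saturatedgraphALCM}, and I then push satisfiability back up to the root with Lemma~\ref{lemma:conversepreservation}. The first fact I would record is that, because $\andorgraphP$ is consistent, no bottom rule ($\botR$, $\botOnePR$, $\botTwoPR$, $\botThreePR$) is ever the rule applied at a node of $\andorgraphP$: such a rule would force the $\derbot$-labelled successor into the marking, contradicting Definition~\ref{definition:marking}. In particular $\circular(\Ab,\Mb)$ never holds along the marking, which is exactly what later secures well-foundedness of the model.

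Next I would locate the target base node. Starting at the root $v_0$ (a base node with label $\derboxes{\Tb_0}{\Ab_0}{\Mb_0}$ as in Definition~\ref{definition:andorgraph}), I follow inside $\andorgraphP$ the chosen successor at each or-node and the unique successor at each unary node, as long as the applied rule is one of $\andPR$, $\orPR$, $\forallPR$, $\closePR$, $\equal$, $\difference$, all of which produce base judgements. By the preference order of Definition~\ref{definition:preferences} the transitional rule $\transPR$ has lowest priority, so this descent continues precisely until we hit a base node $w^{*}$ at which no base rule except $\transPR$ is applicable, giving a finite path $v_0,\dots,v_m=w^{*}$ of base nodes in $\andorgraphP$. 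Writing $\derboxes{\Tb'}{\Ab'}{\Mb'}$ for the label of $w^{*}$, I then verify the three hypotheses of Theorem~\ref{theorem:constructionsaturatedstructure}: $\Ab'$ contains no equalities (they are removed at initialization and by the substituting branch of $\closePR$), no base rule but $\transPR$ applies (by the choice of $w^{*}$), and the Tbox-spreading invariant ``$C(d)\in\Ab'$ for all $C\in\Tb'$ and all $d\in\dom(\Ab')\cup\dom(\Mb')$'' holds. The last point needs a short induction along the path: it holds at $v_0$ by the definition of $\Ab_0$, and each base rule preserves it, since $\andPR,\orPR,\forallPR$ and the second branch of $\closePR$ add no individuals and do not touch the Tbox, the substituting branch of $\closePR$ only deletes an individual, while $\equal$ and $\difference$ are built precisely so as to assert every newly added Tbox concept on every old individual and to assert the whole Tbox on the fresh individual $d_0$.

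With the hypotheses in hand, I would observe that the subgraph of $\andorgraph$ reachable from $w^{*}$ is an and-or graph for $\KB{\Tb'}{\Ab'}{\Mb'}$ (the construction of Definition~\ref{definition:andorgraph} is determined by the label and the fixed preferences) and that the restriction of $\andorgraphP$ to the nodes reachable from $w^{*}$ is a consistent marking of it. Theorem~\ref{theorem:constructionsaturatedstructure} then produces a saturated \graph{\setRoles} for $\KB{\Tb'}{\Ab'}{\Mb'}$, and Theorem~\ref{theorem:saturatedgraphALCM} turns this into an actual model, so the label of $w^{*}$ is satisfiable. Applying Lemma~\ref{lemma:conversepreservation} to each edge $(v_i,v_{i+1})$ in turn, from $i=m-1$ down to $0$, propagates satisfiability up the path, so the root label $\derboxes{\Tb_0}{\Ab_0}{\Mb_0}$ is satisfiable. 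Finally, a model of $\derboxes{\Tb_0}{\Ab_0}{\Mb_0}$ is a model of $\KB{\Tb}{\Ab^{*}}{\Mb^{*}}$ since $\Tb_0=\Tb$, $\Mb_0=\Mb^{*}$ and $\Ab^{*}\subseteq\Ab_0$; and a model of $\KB{\Tb}{\Ab^{*}}{\Mb^{*}}$ yields one of the original $\Kb$ by interpreting each eliminated individual $b$ as the canonical representative it was replaced by (the same bookkeeping as in Lemma~\ref{lemma:satisfiableSust}), so $\Kb$ is consistent.

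I expect the main obstacle to be the bridge between the global marking and the local statement of Theorem~\ref{theorem:constructionsaturatedstructure}: arguing cleanly that the descent to $w^{*}$ terminates and that the restricted marking really is a consistent marking of the and-or graph for the \emph{smaller} knowledge base $\KB{\Tb'}{\Ab'}{\Mb'}$. The termination argument must lean on the side conditions of $\andPR,\orPR,\forallPR,\difference,\closePR$ together with finiteness of $\andorgraph$ (Remark~\ref{remark:graph}), and the preservation of the Tbox-spreading invariant across $\equal$ and $\difference$ is the one calculation I would be most careful to get exactly right.
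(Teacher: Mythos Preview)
Your proposal is correct and follows essentially the same route as the paper: descend along the marking from the root through base nodes (what the paper calls the \emph{saturation path}, Definition~\ref{definition:saturationpath}) to a base node $w^{*}$ where only $\transPR$ applies, invoke Theorem~\ref{theorem:constructionsaturatedstructure} and Theorem~\ref{theorem:saturatedgraphALCM} to get satisfiability of the label of $w^{*}$, then push satisfiability back with Lemma~\ref{lemma:conversepreservation} and Lemma~\ref{lemma:satisfiableSust}. You are in fact more explicit than the paper about the bridge you flag as the main obstacle: the paper simply applies the algorithm of Table~\ref{table:algorithmbuildRstructured} to $\derboxes{\Tb_n}{\Ab_n}{\Mb_n}$ without spelling out that the reachable subgraph and restricted marking serve as the required input.
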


\begin{proof} Suppose $\andorgraphP$ is a consistent marking of the and-or graph $\andorgraph$ of $\Kb$. 
Let $v_0, v_1, \ldots, v_n$ be 
the saturation path from the root of $\andorgraphP$.
  It is clear that 
every $v_i$ is a base node for all all $1 \leq i \leq n-1$.
The last node cannot be an absurdity node because
$\andorgraphP$ a consistent marking. Hence, $v_n$ should
be a base node as well. 
This means that the node $v_i$ has label 
\[
\derboxes{\Tb_i}{\Ab_i}{\Mb_i}
\]
and in particular the label of  $v_0$ is
\[
 \derboxes{\Tb_0}{\Ab_0}{\Mb_0}
\]
where 
\[
\begin{array}{ll}
\Tb_0 & := \Tb \\
\Ab_0 & := \Ab^{*} \cup 
\{C(a) \ \suchthat C \in \Tb, \  a  \in \dom(\Ab^{*}) \cup \dom(\Mb^{*}) \}  \\
\Mb_0 & := \Mb^{*} \\
\end{array}
\]

where $\Ab^{*}$ and $\Mb^{*}$ are obtained from $\Ab$ and $\Mb$
 by choosing a canonical representative $a$ for each assertion $a = b$ and replacing $b$ by $a$. 
\\
 We will 
apply the algorithm of Table \ref{table:algorithmbuildRstructured} 
to $\derboxes{\Tb_n}{\Ab_n}{\Mb_n}$. For this we need to
prove that the hypotheses of  Theorem \ref{theorem:constructionsaturatedstructure} hold:
\begin{enumerate}

\item No base rule is applicable to
$\derboxes{\Tb_n}{\Ab_n}{\Mb_n}$ except for
$\transPR$.  To prove this we divide in cases:

\begin{itemize}

\item Suppose $v_n$ is an end-node. Then, 
it follows trivially that  
no rule is applicable to the node.

\item Suppose that the rule $\transPR$ was applied to
 $v_n$ to obtain its successors in the and-or graph $\andorgraph$.
It follows from 
 Definition \ref{definition:preferences}
 that
the $\transPR$-rule could be applied to $v_n$  
only if the other rules are not applicable.
\end{itemize}
\item 
 It is easy to see that $\Ab_n$ does not contain
equalities  since
$\Ab_0$  does not contain equalities.

\item We also have that 
 $C(d) \in \Ab$ for all $C \in \Tb$ and
  $d \in \dom(\Ab_n) \cup \dom(\Mb_n)$.
  This  property holds in the initialization
  and it is also preserved after applying any of the base rules.
  Note that the only rule we could apply in the saturation path
  $v_0, \ldots, v_n$ that adds new 
  individuals is 
  the $\difference$-rule.   
\end{enumerate}
\noindent
It follows from Theorem \ref{theorem:constructionsaturatedstructure}
that there exists a a saturated \graph{\setRoles}
for $\derboxes{\Tb_n}{\Ab_n}{\Mb_n}$.
By Theorem \ref{theorem:saturatedgraphALCM},
$\derboxes{\Tb_n}{\Ab_n}{\Mb_n}$ is satisfiable.
Applying the Lemma~\ref{lemma:conversepreservation} we know that
$\derboxes{\Tb_0}{\Ab_0}{\Mb_0}$ is satisfiable and
from  Lemma~\ref{lemma:satisfiableSust}  that
$\derboxes{\Tb}{\Ab}{\Mb}$ is satisfiable too.

\end{proof}

\section{An ExpTime Decision Procedure for $\ALCM$}
\label{Section:ExpTimeAlgorithm}

In this section, we prove that the complexity does not increase when we move
from $\ALC$ to $\ALCM$.
  In order to prove this, it is enough to 
   give an algorithm for checking consistency of a knowledge base
 in $\ALCM$ that is ExpTime. 
\\\\
\noindent 
Let $\Kb = \KB{\Tb}{\Ab}{\Mb}$ a knowledge base in $\ALCM$ with $\Tb$ and $\Ab$ in negation normal form (NNF). We claim that the \emph{Algorithm for Checking Consistency in $\ALCM$} given 
in Table \ref{table:algorithmALCM}
 is an ExpTime (complexity-optimal) algorithm for checking consistency of $\Kb$. 
 In the algorithm, a node $u$ is a parent of a node $v$ and $v$ is a child of $u$ iff 
 the edge $(u, v)$ is in  $\andorgraph$. 
 Note also that there is a unique node with label
$\derbot$.

\begin{table}

\setlength{\parindent}{0pt}\line(1,0){345}
\newline
{\bfseries{Algorithm for Checking Consistency in $\ALCM$}} 
\newline\newline
{\bfseries{Input}}:  $\Kb = \KB{\Tb}{\Ab}{\Mb}$  in negation normal form.
\newline
{\bfseries{Output}}: $true$ if $\Kb$ is consistent, and $false$ otherwise.
\newline
\setlength{\parindent}{0pt}\line(1,0){345}

\begin{compactenum}
\item  Construct and ``and-or'' graph  $\andorgraph$
  with root $v_0$ for $\KB{\Tb}{\Ab}{\Mb}$;  
\item UnsatNodes := $\emptyset$, $U := \emptyset$;
\item If   $\andorgraph$ 
 $contains$ $a$ $node$ $v_{\bot}$ $with$ $label$ 
 $\derbot$ 
then
\begin{compactdesc}
\item  $U$ := $\{v_{\bot}\}$,  UnsatNodes := $\{v_{\bot}\}$;

\item  while $U$ $is$ $not$ $empty$ do

\begin{compactdesc}

\item  remove a node $v$ from $U$;

\item for every parent $u$ of $v$ do

\begin{compactdesc}
\item  if $u \notin$ UnsatNodes $and$ ($u$ $is$ $an$ \emph{and-node} $or$
$u$ $is$ $an$ \emph{or-node} $and$ every child of $u$ $is$ $in$  
 $UnsatNodes$) then
add $u$ to both UnsatNodes and $U$
\end{compactdesc}

\end{compactdesc}

\end{compactdesc}

\item  return $false$ if $v_0 \in$ UnsatNodes, and $true$ otherwise

\end{compactenum}

\setlength{\parindent}{0pt}\line(1,0){345}

\caption{Algorithm for checking consistency in $\ALCM$}

\label{table:algorithmALCM}

\end{table}

\noindent
Recall that a $formula$ is 
either a concept, or an Abox-statement or an Mbox-statement. 
We define the \emph{length} of a formula to be the number of its symbols, and the $size$ of a finite set of formulas to be the sum of the lenghts of its elements.

\begin{lemma} \label{lemma:cantNodesGraph}
Let $\Kb = \KB{\Tb}{\Ab}{\Mb}$ be an $\ALCM$-knowledge base in negation normal form, 
$n$ be the size of $\Tb \cup \Ab \cup \Mb$, and $\andorgraph$ be an and-or graph for $\Kb$. 
Then $\andorgraph$ has $O(2^{n^4})$ nodes.
\end{lemma}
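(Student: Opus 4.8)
The plan is to exploit global caching: by Remark~\ref{remark:graph} every node of $\andorgraph$ carries a distinct label, so it suffices to bound the number of \emph{distinct labels} that can occur. A label is either $\derbot$ (one node), a variable judgement $\derconcepts{\Ts}{\Xs}$, or a base judgement $\derboxes{\Tb}{\Ab}{\Mb}$. Each such label is determined by finitely many sets of concepts, individuals, role assertions and meta-modelling axioms, so I would reduce the whole argument to bounding (i) the set of concepts that can ever appear in some $\Ts$, $\Xs$ or $\Ab$, and (ii) the set of individuals that can ever appear. Call these the \emph{concept pool} and the \emph{individual pool}; the goal is to show each has size $O(n^2)$.

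For the concept pool, first observe that the static rules $\andR$, $\orR$, $\andPR$, $\orPR$, $\forallPR$ and the transitional rules $\transR$, $\transPR$ only ever add syntactic subconcepts of formulas already present, so they stay within the closure $\SC$ of the initial knowledge base, a set of size $O(n)$. The only rules that manufacture genuinely new concepts are $\equal$ and $\difference$, and these build $A\sqcup\neg B$, $B\sqcup\neg A$, $(A\sqcup\neg B)\sqcap(B\sqcup\neg A)$ and $A\sqcap\neg B\sqcup\neg A\sqcap B$ only from \emph{pairs} of atomic concepts $A,B\in\range(\Mb)$. Since $|\range(\Mb)|\le n$, there are at most $O(n^2)$ such pairs and hence $O(n^2)$ new concepts together with their subconcepts. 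Thus the concept pool has size $N=O(n^2)$, and every $\Ts$, every $\Xs$, and every concept occurring in an $\Ab$ lies within this pool.

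For the individual pool, the original individuals number $O(n)$, and $\closePR$ never creates individuals (it only merges them by substitution). The sole source of fresh individuals is the $\difference$-rule, which introduces one $d_0$ per application. The key point I must justify is that its side condition ${\sf cond}_{\difference}$ fires the rule at most once per concept $A\sqcap\neg B\sqcup\neg A\sqcap B$, i.e. at most once per pair $A,B\in\range(\Mb)$, and that the fresh $d_0$ receives no meta-modelling axiom, so it never acts as a principal individual of $\difference$ and cannot spawn further individuals. Hence at most $O(n^2)$ individuals are ever created and the individual pool has size $M=O(n^2)$. It then remains to multiply out. The variable labels $\derconcepts{\Ts}{\Xs}$ are determined by two subsets of the concept pool, giving at most $2^{N}\cdot 2^{N}=2^{O(n^2)}$ of them. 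For base labels $\derboxes{\Tb}{\Ab}{\Mb}$, the Tbox is a subset of the concept pool ($2^{O(n^2)}$ choices), the Mbox is a subset of the $O(n^2)$ possible axioms $a\eqm A$ ($2^{O(n^2)}$ choices), and the Abox is a subset of the possible statements: the concept assertions number $N\cdot M=O(n^4)$, the inequalities number $M^2=O(n^4)$, and the role assertions (which only involve original individuals, since no rule adds new ones) number $O(n^3)$. Hence there are $2^{O(n^4)}$ Aboxes, the base labels number $2^{O(n^2)}\cdot 2^{O(n^4)}\cdot 2^{O(n^2)}=2^{O(n^4)}$, and summing all three kinds gives the bound $2^{O(n^4)}=O(2^{n^4})$.

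The main obstacle I expect is the individual-pool bound: one has to argue carefully that the interaction of $\difference$ with $\closePR$, $\equal$ and the Tbox-propagation does not trigger a cascade of new individuals, and in particular that each fresh $d_0$ stays outside $\dom(\Mb)$ and so never re-enters as a $\difference$ premise. Once the two pools are pinned down at size $O(n^2)$, everything else is routine subset counting.
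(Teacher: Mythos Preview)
Your proposal is correct and follows essentially the same approach as the paper: both arguments bound the set of concepts that can ever appear by $O(n^2)$ (the paper does this via the supersets $\Tb^+$ and $\Ab^+$ built from pairs of meta-modelling axioms, you via the ``concept pool''), bound the set of individuals by $O(n^2)$ (one fresh $d_0$ per pair $(A,B)\in\range(\Mb)^2$), and then count subsets to obtain $2^{O(n^2)}$ variable labels and $2^{O(n^4)}$ base labels. Your write-up is in fact slightly more careful than the paper's in that you explicitly account for role assertions and inequalities in the Abox, which the paper's stated upper bound for $\Ab_v$ omits.
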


\begin{proof}
For each pair $a, b$ of individuals with meta-modelling in the Mbox, the algorithm either adds a new TBox axiom, using  the $\equal$-rule, or adds an individual that is denoted by $d_{a, b}$, using the $\difference$-rule.
We define
\[
\begin{array}{ll}
\Tb^{+}   = 
 \Tb \cup 
  \{ (A \sqcup \neg B) \sqcap ( B \sqcup \neg A)
 \mid 
 a \eqm A, b \eqm B \in \Mb \}\\
 \Ab^{+}  =  \Ab \cup 
 \{  (A \sqcap \neg B \sqcup  B \sqcap \neg A)(d_{a,b})  \mid 
 a \eqm A, b \eqm B \in \mathcal{M}
 \} 
 \end{array}
\]
%

The sets  $\Tb^{+}$ and  $ \Ab^{+}$ have  cardinality
 $O(n^2)$ since 
\begin{equation}
\label{eq:cardinality}
\#\{(a,b) \mid a, b \in \dom(\mathcal{M})\} \leq n^2
\end{equation}
The label of each base  node  $v$
of $\andorgraph$
is $\derboxes{\Tb_v}{\Ab_v}{\Mb_v}$.
The sets $\Tb_v$, $\Ab_v$ and $\Mb_v$ 
have the following upper bounds:
\[
\begin{array}{lll}
\Tb_v & \subseteq & \Tb^{+} \\ \\
\Ab_v &  \subseteq &  \{ D(b) \mid C(a) \in \Ab^{+},  
                          D \in \SC(C) \mbox{ and }  b \in \dom(\Ab^{+}) \cup \dom(\Mb) \} \cup \\
         & &   \{ C(a) \mid C \in \SC(\Tb^{+}) \mbox{ and }
                         a \in \dom(\Ab^{+}) \cup \dom(\Mb) \} \cup \\
       & &  
             \{ a \not = b \mid a \not =b \in \Ab 
              \mbox{ or } a, b \in \dom(\Mb)  \}
\\\\
\Mb_v & \subseteq & \{ a \eqm A \mid a \in \dom(\Mb), A \in {\range}(\Mb) \} 
\end{array}
\]

where $\SC(\Tb^{+})$ denotes the image of $\Tb^{+}$ 
under $\SC$, i.e.
$\SC (\Tb^{+}) = \{ \SC(C) \mid C \in \Tb^{+} \}$.
\\
Recall $\SC(C)$ gives the set of subconcepts of $C$. 
\\
The upper bound for $\Ab_v$
is the union of three sets: 
 \[
 \begin{array}{ll}
 \{ D(b) \mid C(a) \in \Ab^{+},  
                          D \in \SC(C) \mbox{ and }  b \in \dom(\Ab^{+}) \cup \dom(\Mb) \} \cup \\
         \{ C(a) \mid C \in \SC(\Tb^{+}) \mbox{ and }
                         a \in \dom(\Ab^{+}) \cup \dom(\Mb) \} \cup \\
       
             \{ a \not = b \mid a \not =b \in \Ab
             \mbox{ or } a, b \in \dom(\Mb) \}
             
             \end{array}
 \]
The first set $\{ D(b) \mid C(a) \in \Ab^{+},  
                          D \in \SC(C) \mbox{ and }  b \in \dom(\Ab^{+}) \cup \dom(\Mb) \}$
                          includes the axioms $C(b)$ that are obtained from replacing $a$ by $b$
                          in $C(a) \in \Ab$.
\\
The cardinality of the first set has
$O(n^4) = O(n^2 ) \times O(n^2)$.
This is because we are combining $D$'s with $b$'s.
The number of $D$'s as well as the number of $b$'s
 have $O(n^2)$
 since in both cases the cardinality of $\Ab^{+}$
is the predominant one.
  \\
Similarly, the cardinality of the second set
 has
$O(n^4) = O(n^2 ) \times O(n^2)$.
This is because we are combining $C$'s with $a$'s.
There are as many $C$'s as elements in
$\SC(\Tb^{+})$ and the latter  has $O(n^2)$.
The number of $a$'s has the same order as the cardinality
of $\Ab^{+}$ which is $O(n^2)$.
\\
It follows from \eqref{eq:cardinality} that
the third set has cardinality  $O(n^2)$. 
\\
Hence, the cardinality of the upper bound for $\Ab_v$
which is the union of these three sets
has $O(n^4)$.
\\
\\
The number of base nodes  has the following order:
\[
O(2^{n^2}) \times  O(2^{n^4}) \times  O(2^{n^2}) = O(2^{n^4})
\]
which is the multiplication of the
number of subsets of the upper bounds for
$\Tb_v$, $\Ab_v$ and $\Mb_v$. 
\\
\\
The label of each variable  node  $w$
of $\andorgraph$
is $\derconcepts{\Tb_w}{\Xs_w}$
where
\[
\begin{array}{lll}
\Tb_w & \subseteq & \Tb^{+} \\ \\
\Xs_w &  \subseteq &  \{ D \mid C(a) \in \Ab^{+} \mbox{ and } 
                          D \in \SC(C) \} \cup \\
         & &   \{ C \mid C \in \SC(\Tb^{+})  \} 
\end{array}
\]

The cardinality of the bound for $\Xs_w$ 
 has $O(n^2)$.
Hence, the number of variables nodes are
\[
O(2^{n^2}) \times  O(2^{n^2}) = O(2^{n^2})
\]
which is the multiplication of the
number of subsets of the upper bounds for
$\Tb_w$ and $\Xs_w$. 
\end{proof}

\begin{table}

\setlength{\parindent}{0pt}\line(1,0){344}
\newline
{\bfseries{Algorithm for Checking Circularities}} 
\newline
\setlength{\parindent}{0pt}\line(1,0){344}
\newline
{\bfseries{Input}}: an Abox $\Ab$ and an Mbox $\Mb$.
\newline
{\bfseries{Output}}: $true$ if there exists a circularity in $\Ab$ w.r.t. $\Mb$ and $false$ otherwise.

\begin{enumerate}
\item  Given the Abox $\Ab$ and Mbox $\Mb$, construct a directed graph as follows.
\begin{enumerate}
\item The nodes are the 
elements in $\dom(\Mb)$.

\item
There is an edge from $a$ to $b$ if $B(a) \in \Ab$ and $b \eqm B$. 
\end{enumerate}

\item 
 Check if there is a cycle in the graph constructed in the previous part using some well-known  algorithm 
for cycle detection, e.g. Chapter 4 of \cite{SedgewickWayne-Algorithms}.

\end{enumerate}

\caption{Algorithm for Checking Circularities}

\label{table:algorithmcircularities}

\end{table}

The construction of the and-or graph needs to 
check if each node has circularities or not.
We show an algorithm for checking circularities 
in Table \ref{table:algorithmcircularities}.
For example, consider the Mbox 
\[
  \begin{array}{llll}    
  a_0  \eqm  A_0 & 
      a_1  \eqm  A_1 &
       a_2  \eqm  A_2 &
     a_3  \eqm  A_3
     \end{array}
\]
and the Abox
\[
\begin{array}{lllll}
       A_1(a_0) &  A_0(a_2) & A_3(a_2) & A_2(a_1)
       \end{array}
\]
We construct the graph illustrated in Figure \ref{fig:cycles}, whose nodes are $a_0, a_1, a_2$ and $a_3$.
In this graph, there is an edge from the node 
$a_i$ to $a_j$
if  $(a_i)^{\interp} \in (a_j)^{\interp}$ for a model $\interp$. 
In other words, the edges represent the membership relation
$\in$.

\begin{figure}
\centering
\includegraphics[width=0.5\linewidth]{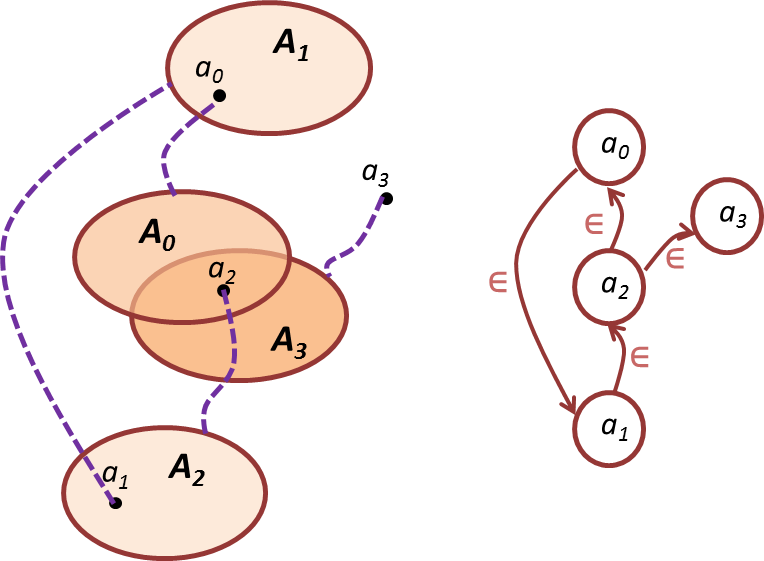}
\caption{Knowledge base with cycles and associated directed graph  }
\label{fig:cycles}
\end{figure}

\begin{lemma} \label{lemma:terminationGraph}
The \emph{Algorithm for Checking Consistency in $\ALCM$} of Table \ref{table:algorithmALCM} terminates and computes the set UnsatNodes in  $O(2^{n^4})$ steps  
 where $n$ is the size of $\Tb \cup \Ab \cup \Mb$. 
\end{lemma}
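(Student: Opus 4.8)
The plan is to combine the node-count bound of Lemma~\ref{lemma:cantNodesGraph} with a simple monotonicity invariant of the marking loop. Write $N$ for the number of nodes of $\andorgraph$; by Lemma~\ref{lemma:cantNodesGraph} we have $N = O(2^{n^4})$, so in particular $\andorgraph$ is finite. First I would observe that the construction of $\andorgraph$ in step~1 terminates: by global caching each label occurs on at most one node and each node is expanded exactly once (Remark~\ref{remark:graph}), and the only nonstandard side condition, the circularity test of the $\botThreePR$-rule, is decided in polynomial time per node by the Algorithm of Table~\ref{table:algorithmcircularities} (cycle detection on a digraph whose vertices are the individuals of $\dom(\Mb)$, of which there are at most $n$). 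Hence step~1 costs at most $N \cdot \mathrm{poly}(n)$.

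The core of the argument is the termination and cost of the \texttt{while} loop of step~3. The key invariant is that $\mathrm{UnsatNodes}$ only ever grows, and that a node is inserted into $U$ exactly when it is inserted into $\mathrm{UnsatNodes}$, guarded by the test $u \notin \mathrm{UnsatNodes}$. Consequently each node of $\andorgraph$ is added to $U$ \emph{at most once}, and since every iteration removes one node from $U$, the loop executes at most $N$ times; this already establishes termination.

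For the step count I would charge the work of the inner \texttt{for} loop to edges. Because each node is removed from $U$ at most once, each directed edge $(u,v)$ of $\andorgraph$ is inspected at most once over the whole run, namely when $v$ is removed. The work per inspected edge is a membership test $u \in \mathrm{UnsatNodes}$, realized in $O(1)$ by a boolean flag, and, when $u$ is an or-node, the test that every child of $u$ lies in $\mathrm{UnsatNodes}$, costing $O(\mathrm{outdeg}(u))$. Since every rule of Figures~\ref{figure:plainrules} and~\ref{figure:primerules} has arity bounded by the number of (outermost) existentials in its premise, the out-degree of any node is $\mathrm{poly}(n)$; hence the total work of the loop is $O(E \cdot \mathrm{poly}(n))$, where $E$ is the number of edges. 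As $E \leq N \cdot \mathrm{poly}(n)$, the whole computation of $\mathrm{UnsatNodes}$, together with the final test $v_0 \in \mathrm{UnsatNodes}$ of step~4, runs within $O(2^{n^4})$ steps, the polynomial factors in $n$ being absorbed into the exponential bound.

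The main obstacle I expect is the honest accounting of the or-node test ``every child of $u$ is in $\mathrm{UnsatNodes}$'', which is not constant time and is re-evaluated each time a child of $u$ is processed. The clean way to handle this is either to bound its cost by the out-degree as above, so that summed over the at-most-one inspection per edge it stays within $E \cdot \mathrm{poly}(n)$, or, for a sharper bound, to maintain for each or-node a counter of the children already in $\mathrm{UnsatNodes}$, making each test $O(1)$; either refinement keeps the total within the claimed exponential bound.
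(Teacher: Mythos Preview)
Your proposal is correct and follows essentially the same approach as the paper: both invoke Lemma~\ref{lemma:cantNodesGraph} for the node bound, both use the monotonicity of $\mathrm{UnsatNodes}$ together with the guard $u\notin\mathrm{UnsatNodes}$ to argue that each node enters $U$ at most once, and both then bound the per-iteration work. Your accounting is actually more careful than the paper's---you bound the out-degree by $\mathrm{poly}(n)$ and charge work to edges, whereas the paper simply asserts each iteration costs $O(2^{n^4})$---but the structure of the argument is the same.
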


\begin{proof}

Each node is processed in $O(n)$ since
checking for clashing and circularities take $O(n)$.
For checking circularities, we need to detect cycles
in a graph (the edges represent the membership relation 
$\in$) which takes $O(n)$ \cite{SedgewickWayne-Algorithms}
(see Table  \ref{table:algorithmcircularities}). 
 Lemma  \ref{lemma:cantNodesGraph} 
  guarantees that the and-or graph $\andorgraph$ can be built in $O(2^{n^4})$.
Every node put into $U$ is also put into UnsatNodes, but once a node is in UnsatNodes, it never leaves UnsatNodes and cannot be put back into $U$. 
Each iteration of the ``while'' removes one member of $U$. Since the number of nodes in $\andorgraph$ is $O(2^{n^4})$, this means that after 
at most $O(2^{n^4})$ 
iterations, $U$ become empty. Each iteration is done in $O(2^{n^4})$ steps.
 Hence the algorithm terminates after $O(2^{n^4})$ steps.
 \end{proof}

\begin{theorem}
\label{theorem:correctnessandcomplexityofalgorithm}
The algorithm of Table \ref{table:algorithmALCM} is an ExpTime
decision procedure for checking consistency of a knowledge base
in $\ALCM$.
\end{theorem}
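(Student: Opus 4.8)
The plan is to split the statement into two independent halves: that the procedure is a \emph{decision procedure} (it terminates and returns the correct answer) and that it runs in \emph{ExpTime}. The complexity half is immediate from Lemma~\ref{lemma:terminationGraph}, which bounds the total number of steps by $O(2^{n^4})$ where $n$ is the size of $\Tb \cup \Ab \cup \Mb$; this is exponential in $n$, so only correctness remains. By Theorem~\ref{theorem:correctnesstableau}, the knowledge base $\Kb = \KB{\Tb}{\Ab}{\Mb}$ is consistent if and only if the and-or graph $\andorgraph$ for $\Kb$ has a consistent marking. Since the algorithm returns $false$ precisely when the root $v_0$ lies in the computed set UnsatNodes, the task reduces to establishing the equivalence: $v_0 \in$ UnsatNodes if and only if $\andorgraph$ has no consistent marking.

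The key step is to characterise UnsatNodes intrinsically. The while-loop computes the least set of nodes containing the unique $\derbot$-node and closed under two propagation rules: every and-node with at least one child already in the set is added, and every or-node all of whose children are in the set is added. I would first prove, by induction on the order in which nodes enter UnsatNodes, that no node belonging to a consistent marking is ever added. The base case holds because a consistent marking contains no $\derbot$-node by Definition~\ref{definition:marking}; for the inductive step, an and-node of the marking has \emph{all} its children in the marking (hence none in UnsatNodes by the induction hypothesis), while an or-node of the marking has \emph{one} child in the marking (hence not in UnsatNodes), so neither triggers its rule. This yields the forward direction, and since the root $v_0$ lies in every marking, a consistent marking forces $v_0 \notin$ UnsatNodes. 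For the converse I would construct a marking $\andorgraphP$ directly from the assumption $v_0 \notin$ UnsatNodes: starting at $v_0$, select for each included and-node all of its children, and for each included or-node one child not in UnsatNodes. Such a child must exist, for otherwise the or-node would itself have been placed in UnsatNodes; and for an and-node no child lies in UnsatNodes, for otherwise the and-node would have been added. The induced subgraph satisfies every clause of Definition~\ref{definition:marking} and, containing no node of UnsatNodes, contains no $\derbot$-node, hence is a consistent marking.

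Chaining this equivalence with Theorem~\ref{theorem:correctnesstableau} gives that the algorithm outputs $true$ iff $\Kb$ is consistent, and Lemma~\ref{lemma:terminationGraph} supplies the exponential bound, completing the proof. The main obstacle is the converse direction of the characterisation of UnsatNodes: because global caching makes $\andorgraph$ a genuine graph rather than a tree, the selection of a marking may traverse cycles, so the construction must be phrased as \emph{choosing edges} (one out of each selected or-node, all out of each selected and-node) and then verifying that the induced subgraph meets every clause of Definition~\ref{definition:marking}, rather than as a terminating top-down recursion. Care is also needed to confirm that the two propagation rules of the while-loop exactly mirror the and/or semantics of markings, so that the least fixpoint UnsatNodes coincides with the set of nodes admitting no consistent submarking.
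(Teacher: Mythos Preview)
Your proposal is correct and follows essentially the same approach as the paper: you split into correctness and complexity, invoke Lemma~\ref{lemma:terminationGraph} for the ExpTime bound, and reduce correctness via Theorem~\ref{theorem:correctnesstableau} to the equivalence ``$v_0 \in$ UnsatNodes iff $\andorgraph$ has no consistent marking.'' The paper's own proof simply cites \cite[Theorem~5.3]{DBLP:journals/jar/GoreN13} for this last equivalence, whereas you spell out the fixpoint argument in detail (induction on entry order for one direction, explicit construction of a marking for the other); your version is thus a faithful unpacking of what the citation contains, including the care you take with cycles arising from global caching.
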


\noindent
\begin{proof}
The proof of correctness is the same as of \cite[Theorem 5.3]{DBLP:journals/jar/GoreN13}. 
We have to use
Theorem \ref{theorem:correctnesstableau}. 
Complexity follows from Lemma \ref{lemma:terminationGraph}.
\end{proof}
\noindent
We can now show the main new result of this paper:

\begin{corollary}[Complexity of $\ALCM$]
Consistency of a (general) knowledge base in  $\ALCM$ is ExpTime-complete.
\end{corollary}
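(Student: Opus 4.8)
The plan is to establish the two matching bounds separately and then combine them. Membership in ExpTime is already in hand: Theorem~\ref{theorem:correctnessandcomplexityofalgorithm} shows that the algorithm of Table~\ref{table:algorithmALCM} is an ExpTime decision procedure for checking consistency of an arbitrary $\ALCM$ knowledge base, so for the upper bound I would simply invoke that theorem. It remains only to argue ExpTime-hardness.

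For hardness I would give a trivial polynomial-time reduction from the consistency problem for $\ALC$, which is ExpTime-hard by Theorem~\ref{theorem:complexityALC}. Given an $\ALC$ knowledge base $\KbALC{\Tb}{\Ab}$, I map it to the $\ALCM$ knowledge base $\KB{\Tb}{\Ab}{\emptyset}$ with empty Mbox; this map is computable in linear time. It then suffices to show that $\KbALC{\Tb}{\Ab}$ is consistent in $\ALC$ if and only if $\KB{\Tb}{\Ab}{\emptyset}$ is consistent in $\ALCM$. The right-to-left direction is immediate, since by clause~2 of Definition~\ref{definition:modelALCM} every $\ALCM$ model of $\KB{\Tb}{\Ab}{\emptyset}$ is in particular an $\ALC$ model of $\KbALC{\Tb}{\Ab}$, clause~3 being vacuous for the empty Mbox.

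The only point deserving care — and the single place where the $\mathcal{M}$-specific definitions intrude — is the first clause of Definition~\ref{definition:modelALCM}, which forces the domain of an $\ALCM$ model to be a subset of some $S_n$. For the left-to-right direction I would take any $\ALC$ model $\interp$ of $\KbALC{\Tb}{\Ab}$ and rename its domain elements by distinct atomic objects drawn from a fresh $S_0$, producing an interpretation $\interp'$ isomorphic to $\interp$ whose domain is a subset of $S_0 \subseteq S_1$. By Lemma~\ref{lemma:isomorphism}, $\interp'$ is still a model of $\KbALC{\Tb}{\Ab}$, and since its domain lies in $S_1$ and the Mbox is empty, $\interp'$ satisfies all three clauses of Definition~\ref{definition:modelALCM} and is therefore an $\ALCM$ model of $\KB{\Tb}{\Ab}{\emptyset}$.

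Combining the ExpTime upper bound with ExpTime-hardness yields that consistency in $\ALCM$ is ExpTime-complete. I do not expect any genuine obstacle: the heavy lifting was already carried out in Theorem~\ref{theorem:correctnessandcomplexityofalgorithm} for the algorithm's running time and in the cited Schild lower bound underlying Theorem~\ref{theorem:complexityALC}. The corollary only has to assemble these two facts and record the observation that the $S_n$ restriction on domains is harmless in the absence of meta-modelling, which is exactly what the isomorphism argument above secures.
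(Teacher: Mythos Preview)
Your proposal is correct and follows essentially the same approach as the paper: the upper bound is Theorem~\ref{theorem:correctnessandcomplexityofalgorithm}, and hardness is inherited from $\ALC$ via Theorem~\ref{theorem:complexityALC}. You are simply more explicit than the paper, which dispatches hardness in one line; in particular, your isomorphism argument addressing the $S_n$ domain restriction makes rigorous a point the paper leaves tacit.
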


Hardness follows from the corresponding result  for $\ALC$ (see Theorem \ref{theorem:complexityALC}).
A matching upper bound for $\ALCM$ is given by 
the algorithm of Table \ref{table:algorithmALCM} which by 
Theorem \ref{theorem:correctnessandcomplexityofalgorithm}
is ExpTime.

\section{Related Work}

We made several changes to the
ExpTime tableau algorithm 
for $\ALC$ by Nguyen and Szalas 
 to accommodate
meta-modelling~\cite{Nguyen2009}.
First of all, we added 
some rules for dealing
  with the equalities and inequalities that need to be transferred
from the Tbox to the Abox and vice versa.
There is also a new rule that returns inconsistency 
in case a  circularity is found. This new rule is key 
for our approach to meta-modelling
and  ensures that the domain of the interpretation is  well-founded.
Since our tableau algorithm  has the peculiarity of changing the TBox, 
the Tbox (and also the Mbox) have to be stored in the labels of the and-or graph,
 a fact that was not necessary in the simpler Tableau Calculus for $\ALC$ of Nguyen and Szalas 
 \cite{Nguyen2009}. 
 \\
In the literature of Description Logic,  
there are other approaches to  meta-modelling  \cite{Motik07,DBLP:conf/owled/PanHS05,DBLP:conf/semweb/GlimmRV10,DBLP:journals/ijsi/JekjantukGP10,DBLP:conf/aaai/GiacomoLR11,DBLP:conf/dlog/HomolaKSV13,DBLP:conf/dlog/HomolaKSV14,Lenzerinietal2014}.
The approaches which define fixed layers or levels of meta-modelling \cite{DBLP:conf/owled/PanHS05,DBLP:journals/ijsi/JekjantukGP10,DBLP:conf/dlog/HomolaKSV13,DBLP:conf/dlog/HomolaKSV14}  
impose a very strong limitation to the ontology engineer.
Our approach allows the user to have any number of levels or layers 
(meta-concepts, meta meta-concepts and so on). 
Besides the benefits of not having to know the layer of each concept and having the flexibility of mixing different layers, there is a more pragmatic advantage which arises from ontology engineering. 
In a real scenario of evolving ontologies, that need to be
integrated,  
 not all individuals of a given concept need to have
 meta-modelling and hence, they do not have to belong to the
 same level in the hierarchy. \\ 
The key feature in our semantics is to interpret $a$ and $A$
as the same object when $a$ and $A$ are connected through meta-modelling, i.e.,
if  $a \eqm A$ then $a^\interp = A^\interp$. 
 This allows us to detect  inconsistencies in  the ontologies 
  which is not possible under the 
  Hilog 
  semantics~\cite{Motik07,DBLP:conf/aaai/GiacomoLR11,DBLP:conf/dlog/HomolaKSV13,DBLP:conf/dlog/HomolaKSV14,Lenzerinietal2014,DBLP:conf/dlog/KubincovaKH15}. 
Our semantics also requires that
the domain of the interpretation be a  well-founded set. 
A domain such as $\Delta^{\interp} = \{X\}$ where $X = \{X\}$ is a set
that belongs to itself,  it  cannot represent any
 real object from  our  usual applications in Semantic Web.
 

\section{Conclusions and Future Work}

The ExpTime algorithm for $\ALCM$ presented in this paper  
can be optimized in several ways.
 Instead of constructing  first the and-or graph and then
checks whether the graph contains a consistent marking, 
we can do these two tasks simultaneously \cite{DBLP:journals/jar/GoreN13}.
Adding $(A \sqcup \neg B) \sqcap (B \sqcup \neg A)$ 
to the Tbox is not efficient since it generates too many expansions with
or-branching. We can instead  add $A \equiv B$ and apply  optimizing 
techniques of  lazy unfolding \cite{Horrocks03}.
\\
We plan to extend this algorithm to include 
other logical constructors such as 
cardinality restrictions, role hierarchies
and nominals  
\cite{DBLP:conf/dlog/DingH07,DBLP:journals/fuin/NguyenG14}.
\\
It is also possible to show Pspace-completeness for $\ALCM$
under certain conditions of unfoldable Tboxes. The details 
of this proof will appear in  a separate report.

\paragraph{Acknowledgements.}
The third author would like to acknowledge a Daphne Jackson fellowship
 sponsored by EPSRC and the University of Leicester.
 We would also like to thank Alfredo Viola for 
 some excellent suggestions.


\end{document}